\newtheorem{hypothesis}{Hypothesis}[section]
\newenvironment{definition}[1][Definition]{\begin{trivlist}
		\item[\hskip \labelsep {\bfseries #1}]}{\end{trivlist}}
\newenvironment{reminder}[1]{\smallskip
	\noindent {\scshape Reminder of #1 }\em}{
}
\newcommand{\ttildeo}{\tilde{o}}
\newcommand{\tO}{\tilde{O}}
\newcommand{\tTheta}{\tilde{\Theta}}
\newcommand{\tOmg}{\tilde{\Omega}}
\newcommand{\poly}{\text{poly}}
\newcommand{\eps}{\epsilon}
\DeclarePairedDelimiter{\ceil}{\lceil}{\rceil}
\begin{document}
\title{Tight Hardness for Shortest Cycles and Paths in Sparse Graphs} 
\author{Andrea Lincoln\thanks{andreali@mit.edu. Supported by the EECS Merrill Lynch Fellowship.} \and Virginia Vassilevska Williams\thanks{virgi@csail.mit.edu. Supported by an NSF CAREER Award, NSF Grants CCF-1417238, CCF-1528078 and CCF-1514339, and BSF Grant BSF:2012338.} \and Ryan Williams\thanks{rrw@mit.edu. Supported by an NSF CAREER Award.}}
\date{}
\maketitle

\fancyfoot[R]{\footnotesize{\textbf{Copyright \textcopyright\ 2018\\
Copyright for this paper is retained by authors}}}

\begin{abstract}
Fine-grained reductions have established equivalences between many core  problems with $\tilde{O}(n^3)$-time algorithms on $n$-node weighted graphs, such as Shortest Cycle, All-Pairs Shortest Paths (APSP), Radius, Replacement Paths, Second Shortest Paths, and so on. These problems also have $\tilde{O}(mn)$-time algorithms on $m$-edge $n$-node weighted graphs, and such algorithms have wider applicability. Are these $mn$ bounds optimal when $m \ll n^2$?

Starting from the hypothesis that the minimum weight $(2\ell+1)$-Clique problem in edge weighted graphs requires $n^{2\ell+1-o(1)}$ time, we prove that for all sparsities of the form $m = \Theta(n^{1+1/\ell})$, there is no $O(n^2 + mn^{1-\eps})$ time algorithm for $\epsilon>0$ for \emph{any} of the below problems 
\begin{compactitem}
\item Minimum Weight $(2\ell+1)$-Cycle in a directed weighted graph,
\item Shortest Cycle in a directed weighted graph,
\item APSP in a directed or undirected weighted graph,
\item Radius (or Eccentricities) in a directed or undirected weighted graph,
\item Wiener index of a directed or undirected weighted graph,
\item Replacement Paths in  a directed weighted graph,
\item Second Shortest Path in a directed weighted graph,
\item Betweenness Centrality of a given node in a directed weighted graph.
\end{compactitem}
That is, we prove hardness for a variety of sparse graph problems from the hardness of a dense graph problem. Our results also lead to new conditional lower bounds from several related hypothesis for unweighted sparse graph problems including $k$-cycle, shortest cycle, Radius, Wiener index and APSP. 

\end{abstract}

\thispagestyle{empty}
\newpage

\setlength{\parskip}{0.5ex plus 0.25ex minus 0.25ex}
\setlength{\parindent}{10pt}

\section{Introduction}

The All-Pairs Shortest Paths (APSP) problem is among the most basic computational problems. A powerful primitive, APSP can be used to solve many other problems on graphs (e.g. graph parameters such as the girth or the radius), but also many non-graph problems such as finding a subarray of maximum sum~\cite{TamakiT98} or parsing stochastic context free grammars (e.g.~\cite{Akutsu1999}). 
 Over the years, many APSP algorithms have been developed. For edge weighted $n$ node, $m$ edge graphs,
the fastest known algorithms run in $n^3/2^{\Theta(\sqrt{\log n})}$ time~\cite{williams2014faster} for dense graphs, and in $O(mn+n^2\log\log n)$ time~\cite{pettie2002faster} for sparse graphs. 

These running times are also essentially the best known for many of the problems that APSP can solve: Shortest Cycle, Radius, Median, Eccentricities, Second Shortest Paths, Replacement Paths, and so on\footnote{For Shortest Cycle, an $O(mn)$ time algorithm was recently developed by Orlin and Sede{\~{n}}o{-}Noda~\cite{OrlinS17}. For a full discussion of the best known running times of these problems see Appendix \ref{sec:relatedwork}.}.
For dense graphs, this was explained by Vassilevska Williams and Williams~\cite{williams2010subcubic} and later Abboud et al.~\cite{radiusgrandoni} who showed that either all of $\{$APSP, Minimum Weight Triangle, Shortest Cycle, Radius, Median, Eccentricities, Second Shortest Paths, Replacement Paths, Betweenness Centrality$\}$ have {\em truly subcubic} algorithms (with runtime $O(n^{3-\eps})$ for constant $\eps>0$), or none of them do. Together with the popular hypothesis that APSP requires $n^{3-o(1)}$ time on a word-RAM (see e.g.~\cite{abboud2014popular,radiusgrandoni,ipecsurvey,treeedit}), these equivalences suggest that all these graph problems require $n^{3-o(1)}$ time to solve.

However, these equivalences no longer seem to hold for sparse graphs. The running times for these problems still match: $\tilde{O}(mn)$ is the best running time known for all of these problems. In recent work, Agarwal and Ramachandran~\cite{agarwal2016fine} show that some reductions from prior work can be modified to preserve sparsity. Their main result is that if Shortest Cycle in directed weighted graphs requires $m^a n^{2-a-o(1)}$ time for some constant $a$, then so do Radius, Eccentricities, Second Shortest Paths, Replacement Paths, Betweenness Centrality and APSP in directed weighted graphs\footnote{This is analogous to the dense graph regime of \cite{williams2010subcubic}, where the main reductions went from Minimum Weight 3-Cycle (i.e. triangle). The key point of \cite{agarwal2016fine} is that one can replace Minimum Weight 3-Cycle by Minimum Weight Cycle, and preserve the sparsity in the reduction.}.

Unfortunately, there is no known reduction that preserves sparsity from APSP (or any of the other problems) back to Shortest Cycle, and there are no known reductions to Shortest Cycle from any other problems used as a basis for hardness within Fine-Grained Complexity, such as the Strong Exponential Time Hypothesis~\cite{ipz1,ipz2}, $3$SUM~\cite{GO95} or Orthogonal Vectors~\cite{Will05,ipecsurvey}. Without a convincing reduction, one might wonder: 
\begin{center}{\em Can Shortest Cycle in weighted directed graphs be solved in, say, $\tO(m^{3/2})$  time?\\ Can APSP be solved in $\tO(m^{3/2}+n^2)$  time?}\end{center}
Such runtimes are consistent with the dense regime of $m=\tilde{\Theta}(n^2)$. Minimum Weight Triangle, which is the basis of many reductions in the dense case, can be solved in $O(m^{3/2})$ time (e.g. \cite{itai1978finding}). What prevents us from having such running times for all the problems that are equivalent in the dense regime to Minimum Weight Triangle?
  Why do our best algorithms for these other problems take $\tO(mn)$ time, and no faster? In fact, we know of no $\eps>0$ for which problems like Shortest Cycle can be solved in $\tO(m^{1+\eps}n^{1-2\eps})$ time. Such a running time is $\Theta(n^3)$ for $m=\Theta(n^2)$ and $o(mn)$ for $m\leq o(n^2)$. Notice that $m^{3/2}$ is the special case for $\eps=1/2$. Is there a good reason why no  $\tO(m^{1+\eps}n^{1-2\eps})$ time algorithms have been found?

\paragraph{Our results.}
We give compelling reasons for the difficulty of improving over $\tilde{O}(mn)$ for Shortest Cycle, APSP and other problems. 
We show for an infinite number of sparsities, any sparsity $m=n^{1+1/\ell}$ where $\ell \in \mathbb{N}$, obtaining an $O(n^2+mn^{1-\eps})$ time algorithm for Shortest Cycle (or any of the other fundamental problems) in weighted graphs for any constant $\eps>0$ would refute a popular hypothesis about the complexity of weighted $k$-Clique. 

\begin{hypothesis}[Min Weight $k$-Clique] 
\label{hyp:minkClique}	There is a constant $c>1$ such that,
on a Word-RAM with $O(\log n)$-bit words, finding a $k$-Clique of minimum total edge weight in an $n$-node graph with nonnegative integer edge weights in $[1,n^{ck}]$ requires $n^{k-o(1)}$ time.
\end{hypothesis}

The Min Weight $k$-Clique Hypothesis has been considered for instance in \cite{BackursT16} and \cite{abboud2014consequences} to show hardness for improving upon the Viterbi algorithm, and for Local Sequence Alignment.
The (unweighted) $k$-Clique problem is NP-Complete, but can be solved in $O(n^{\omega k/3})$ time when $k$ is fixed~\cite{NesetrilPoljak}\footnote{When $k$ is divisible by $3$; slightly slower otherwise.} where $\omega<2.373$~\cite{vstoc12,legall} is the matrix multiplication exponent.
The problem is W[1]-complete and under the Exponential Time Hypothesis~\cite{ipz1} it cannot be solved in $n^{o(k)}$ time. 
Finding a $k$-Clique of minimum total weight (a Min Weight $k$-Clique) in an edge-weighted graph can also be solved in $O(n^{\omega k/3})$ if the edge weights are small enough. However, when the edge weights are integers larger than $n^{ck}$ for large enough constant $c$, the fastest known algorithm for Min Weight $k$-Clique runs in essentially $O(n^k)$ time (ignoring $n^{o(1)}$ improvements). The special case $k=3$, Min Weight $3$-Clique is the aforementioned Minimum Weight Triangle problem which is equivalent to APSP under subcubic reductions and is believed to require $n^{3-o(1)}$ time.

Building on Vassilevska Williams and Williams~\cite{williams2010subcubic}, Agarwal and Ramachandran~\cite{agarwal2016fine} have shown many sparsity-preserving reductions from Shortest Cycle to various fundamental graph problems. They thus identify Shortest Cycle as a fundamental bottleneck to improving upon $mn$ for many problems. However, so far there is no compelling reason why Shortest Cycle itself should need $mn$ time.

\begin{theorem}[\cite{agarwal2016fine,williams2010subcubic}] 
	Suppose that there is a constant $\eps>0$ such that one of the following problems on $n$-node, $m$-edge weighted graphs can be solved in
 $O(mn^{1-\eps}+n^2)$ time: 
\begin{compactitem}
\item APSP in a directed weighted graph,
\item Radius (or Eccentricities) in a directed weighted graph,
\item Replacement Paths in a directed weighted graph,
\item Second Shortest Path in a directed weighted graph,
\item Betweenness Centrality of a given node in a directed weighted graph.
\end{compactitem}
Then, the Min Weight Cycle Problem is solvable in $O(mn^{1-\eps'}+n^2)$ for some $\eps'>0$ time~\cite{agarwal2016fine}.\label{cor:direcHard}
\end{theorem}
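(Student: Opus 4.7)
The plan is to prove this by exhibiting, for each of the five listed problems, a sparsity-preserving fine-grained reduction from Min Weight Cycle: given an $n$-vertex, $m$-edge weighted directed graph $G$, we build an instance of the target problem on a graph $G'$ with $|V(G')| = O(n)$ and $|E(G')| = O(m+n)$ such that solving the target problem on $G'$ lets us recover the minimum cycle weight in $G$. Composing this with a hypothetical $O(mn^{1-\eps}+n^2)$-time algorithm for the target problem then yields an $O(mn^{1-\eps'}+n^2)$-time algorithm for Min Weight Cycle, after accounting for a polylogarithmic loss in $\eps$ if the reduction blows up edges by a $\polylog$ factor.

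The reduction to APSP is the easiest and serves as a warm-up. I would split every vertex $v \in V(G)$ into $v_{\text{in}}, v_{\text{out}}$, add a zero-weight internal edge $v_{\text{in}} \to v_{\text{out}}$, and reroute each original arc $(u,v)$ to $(u_{\text{out}}, v_{\text{in}})$. Then for every $v$, the shortest path from $v_{\text{out}}$ to $v_{\text{in}}$ in $G'$ has weight equal to the minimum weight of a cycle through $v$ in $G$, so one APSP call on a graph with $2n$ vertices and $m+n$ edges finishes the job.

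For the remaining four problems, I would adapt the subcubic reductions of Vassilevska Williams--Williams and Abboud et al.\ (which go from Min Weight Triangle to each of these problems) so that the reduction graph stays sparse. The uniform template is to construct $G'$ by taking $G$, its split-vertex copy, and a small collection of auxiliary vertices (source/sink, weight-shifting hubs), using the split-vertex trick to convert ``shortest path from $v_{\text{out}}$ to $v_{\text{in}}$'' into ``shortest cycle through $v$.'' For Replacement Paths and Second Shortest Path, I would design $G'$ so that there is a designated trivial shortest $s$--$t$ path using a few auxiliary edges, forcing any alternative $s$--$t$ path to traverse a $v_{\text{out}} \to v_{\text{in}}$ detour, i.e.\ a cycle of $G$. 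For Eccentricities/Radius and Betweenness Centrality, I would attach weighted gadgets that make the eccentricity of (or the number of shortest paths through) a controlled vertex determine the minimum cycle weight.

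The main obstacle is \emph{sparsity preservation}. The published dense reductions freely use $\Theta(n^2)$-edge gadgets (complete bipartite connectors, all-pairs weight shifters), which would immediately destroy the $O(mn^{1-\eps}+n^2)$ bound in the sparse regime. The technical crux is to replace each such dense gadget by a sparse alternative, for instance by routing all auxiliary connections through a small number of hub vertices with weight corrections, or by layering so that each original edge of $G$ appears only $O(1)$ times in $G'$. Once every gadget is replaced by an $O(n)$-edge substitute, $|E(G')| = O(m+n)$, and the reduction goes through with $\eps' = \eps - o(1)$.
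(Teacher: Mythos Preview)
The paper does not prove this theorem at all: it is stated with the citation \cite{agarwal2016fine} and attributed entirely to Agarwal and Ramachandran. There is therefore no ``paper's own proof'' to compare your proposal against; the present paper simply imports this result as a black box in order to combine it with its new reduction from Min Weight $k$-Clique to Min Weight $k$-Cycle (Theorem~\ref{thm:girthHard}).

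That said, your sketch is in the right spirit and matches what the cited work actually does: sparsity-preserving reductions from Min Weight Cycle to each of the listed problems, obtained by reworking the dense subcubic reductions of \cite{williams2010subcubic,radiusgrandoni} so that the gadget graph has $O(n)$ vertices and $O(m+n)$ edges. Your identification of the key obstacle (replacing $\Theta(n^2)$-edge bipartite connectors by hub-based $O(n)$-edge gadgets) is exactly the technical point of \cite{agarwal2016fine}. One minor remark: for APSP the vertex-splitting trick you give is fine, but the paper itself (in Section~\ref{sec:newreducs}) recalls the even simpler folklore reduction --- compute APSP in $G$ directly and return $\min_{(v,u)\in E} d(u,v)+w(v,u)$ --- which needs no auxiliary graph at all.
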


Our main technical contribution connects the complexity of small cliques in dense graphs to that of small cycles in sparse graphs:

\begin{theorem} Suppose that there is an integer $\ell\geq 1$ and a constant $\eps>0$ such that one of the following problems on $n$-node, $m=\Theta(n^{1+1/\ell})$-edge weighted graphs can be solved in
	$O(mn^{1-\eps}+n^2)$ time: 
	\begin{compactitem}
		\item Minimum Weight $(2\ell+1)$-Cycle in a directed weighted graph,
		\item Shortest Cycle in a directed weighted graph,
	\end{compactitem}	
	Then, the Min Weight $(2\ell+1)$-Clique Hypothesis is false.\label{thm:girthHard}
\end{theorem}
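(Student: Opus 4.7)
The plan is to reduce Min Weight $(2\ell+1)$-Clique on an $n$-vertex graph $G$ to Min Weight $(2\ell+1)$-Cycle (and then, via a weight-shift, to Shortest Cycle) on an $N$-vertex, $M$-edge graph $H$ with $N=\Theta(n^\ell)$ and $M=\Theta(n^{\ell+1})=\Theta(N^{1+1/\ell})$. The construction I would use is a layered ``shift graph.'' Take $2\ell+1$ layers, each a disjoint copy of all $\ell$-tuples $(u_1,\dots,u_\ell)\in V(G)^\ell$. From tuple $u=(u_1,\dots,u_\ell)$ in layer $i$, put a directed edge to tuple $v=(v_1,\dots,v_\ell)$ in layer $i+1\pmod{2\ell+1}$ exactly when $v_j=u_{j+1}$ for $j=1,\dots,\ell-1$ (so $v$ is $u$ shifted left and extended by a new element $v_\ell$) and all the $G$-edges $(u_j,v_\ell)$ for $j=1,\dots,\ell$ exist. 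Give this edge weight $\sum_{j=1}^{\ell}w_G(u_j,v_\ell)$. The layered structure forces every cycle length to be a multiple of $2\ell+1$, and the out-degree of each tuple is at most $n$, so $N=(2\ell+1)n^\ell$ and $M\le (2\ell+1)n^{\ell+1}$, giving the desired sparsity.

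The correspondence I would verify is: a $(2\ell+1)$-clique $\{a_1,\dots,a_{2\ell+1}\}$ of $G$ (with any chosen cyclic ordering) maps to the $(2\ell+1)$-cycle through the tuples $t_i=(a_i,a_{i+1},\dots,a_{i+\ell-1})$ (indices mod $2\ell+1$); the edge $t_i\to t_{i+1}$ then accounts for the $\ell$ pairs $\{a_j,a_{i+\ell}\}$ with $j\in\{i,\dots,i+\ell-1\}$. The crucial combinatorial check is that, as $i$ ranges over the cycle, each of the $\binom{2\ell+1}{2}=\ell(2\ell+1)$ unordered clique-pairs is counted exactly once: the edge $t_i\to t_{i+1}$ sees precisely the pairs whose forward cyclic gap from the smaller to the larger index is exactly $\ell$ less than the position of the new element, and summing over $i$ covers each cyclic gap in $\{1,\dots,2\ell\}$ exactly once. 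This identity gives cycle-weight $=$ clique-weight, and conversely any $(2\ell+1)$-cycle in $H$ must follow the shift structure and so encodes a sequence $a_1,\dots,a_{2\ell+1}$ for which all required $G$-edges exist, yielding a $(2\ell+1)$-clique of the same weight.

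Given the reduction, an $O(MN^{1-\eps}+N^2)$ algorithm for Min Weight $(2\ell+1)$-Cycle on $H$ runs in time $O(n^{\ell+1}\cdot n^{\ell(1-\eps)}+n^{2\ell})=O(n^{2\ell+1-\ell\eps})$, refuting Hypothesis~\ref{hyp:minkClique} with $\eps':=\ell\eps>0$. To handle Shortest Cycle, I would add a large offset $T$ to every edge weight of $H$, where $T$ exceeds the maximum possible weight of any $(2\ell+1)$-cycle (which is $O(\ell(2\ell+1)\cdot n^{c(2\ell+1)})$ since edge weights of $G$ are in $[1,n^{c(2\ell+1)}]$). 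Then any cycle of length $\ge 2(2\ell+1)$ in the layered graph has total weight strictly greater than any $(2\ell+1)$-cycle, so the shortest cycle must have length exactly $2\ell+1$ and reveals the minimum-weight clique after subtracting $(2\ell+1)T$; the weights remain polynomial in $n$, so the word-RAM accounting is unaffected.

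\textbf{Main obstacle.} The delicate point is the combinatorial accounting that every clique pair is checked exactly once by the cycle edges; the rest is parameter bookkeeping. A secondary subtlety is ensuring that tuples with repeated coordinates, or cycle edges corresponding to non-edges of $G$, do not create spurious short cycles---this is handled cleanly by deleting any cycle edge for which some required $G$-edge is missing (including self-loops), so only true cliques contribute, and by the layering, which pins the shortest cycle length to $2\ell+1$.
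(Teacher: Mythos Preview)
Your proposal is correct and is essentially the paper's reduction, carried out directly rather than factored through the intermediate notion of a tight hypercycle. The paper first turns Min Weight $(2\ell+1)$-Clique into a $(2\ell+1)$-hypercycle in an $(\ell+1)$-uniform hypergraph (Theorem~\ref{thm:hypercycleHyperclique} with $k=2$, $\gamma=\ell+1$) and then turns that into a directed $(2\ell+1)$-cycle on $\Theta(n^{\ell})$ nodes and $\Theta(n^{\ell+1})$ edges (Lemma~\ref{lem:directedCycleHyperCycle}); composing those two steps yields exactly your layered shift graph on $\ell$-tuples. Your Shortest Cycle step is the same weight-offset trick as Theorem~\ref{lem:mincyclekcycle}.

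The only noteworthy differences are cosmetic but worth recording. First, the paper's edge condition requires all $\binom{\ell+1}{2}$ pairs in the sliding window to be $G$-edges, whereas you only require the $\ell$ pairs involving the newly introduced vertex; your weaker test is enough because, over a full $(2\ell+1)$-cycle, every unordered pair is still hit exactly once (this is the parity argument you flag: on a $(2\ell+1)$-cycle each unordered pair has one orientation with forward cyclic gap in $\{1,\dots,\ell\}$, and that orientation is checked exactly when the farther endpoint is the ``new'' element). Second, your weight rule $\sum_{j=1}^\ell w_G(u_j,v_\ell)$ is a clean specialization of the paper's more general assignment scheme, and your distinctness argument (a repeated $a_p=a_q$ would force a self-loop among the checked pairs) is the same mechanism the paper gets for free from the $\ell$-partite color-coding. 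None of this changes the parameters or the conclusion.
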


Combining our main Theorem~\ref{thm:girthHard} with the results from previous work in Theorem~\ref{cor:direcHard} gives us new conditional lower bounds for fundamental graph problems. We also create novel reductions from the $k$-Cycle problem, in Section~\ref{sec:newreducs}, and these give us novel hardness results for many new problems. 
The main new contributions are reductions to Radius in undirected graphs (the result in \cite{agarwal2016fine} is only for directed) and to the Wiener Index problem which asks for the sum of all distances in the graph.
Together all these pieces give us the following theorem. 

\begin{theorem} Suppose that there is an integer $\ell\geq 1$ and a constant $\eps>0$ such that one of the following problems on $n$-node, $m=\Theta(n^{1+1/\ell})$-edge weighted graphs can be solved in
 $O(mn^{1-\eps}+n^2)$ time: 
\begin{compactitem}
\item Minimum Weight $(2\ell+1)$-Cycle in a directed weighted graph,
\item Shortest Cycle in a directed weighted graph,
\item APSP in a directed or undirected weighted graph,
\item Radius (or Eccentricities) in a directed or undirected weighted graph,
\item Wiener index of a directed or undirected weighted graph,
\item Replacement Paths in a directed weighted graph,
\item Second Shortest Path in a directed weighted graph,
\item Betweenness Centrality of a given node in a directed weighted graph.
\end{compactitem}
Then, the Min Weight $(2\ell+1)$-Clique Hypothesis is false.\label{thm:weightedhard}
\end{theorem}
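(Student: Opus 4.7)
}

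The plan is to package the theorem as a three-way combination: (i) the main technical theorem, Theorem~\ref{thm:girthHard}, already handles the two cycle problems directly; (ii) the known sparsity-preserving reductions collected in Theorem~\ref{cor:direcHard} (from Agarwal--Ramachandran) handle the directed APSP, Radius/Eccentricities, Replacement Paths, Second Shortest Path, and Betweenness Centrality cases by chaining them through Shortest Cycle; (iii) the remaining problems not covered by Theorem~\ref{cor:direcHard}, namely APSP and Radius/Eccentricities in \emph{undirected} graphs and the Wiener index in both the directed and undirected settings, require fresh sparsity-preserving reductions, which I will build in the section advertised as Section~\ref{sec:newreducs}.

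For the chained cases, the argument is clean: suppose one of the directed problems in the Agarwal--Ramachandran list admits an $O(mn^{1-\eps}+n^2)$ algorithm on graphs of sparsity $m=\Theta(n^{1+1/\ell})$. Theorem~\ref{cor:direcHard} then yields an $O(mn^{1-\eps'}+n^2)$ algorithm for Min Weight Cycle at the same sparsity, hence in particular an algorithm that is at least as fast as what Theorem~\ref{thm:girthHard} forbids for Min Weight $(2\ell+1)$-Cycle (since Min Weight Cycle is no harder than its length-$(2\ell+1)$ restriction after standard padding with a disjoint slow cycle). Applying Theorem~\ref{thm:girthHard} refutes the Min Weight $(2\ell+1)$-Clique Hypothesis. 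The only care needed here is to verify that the reductions of~\cite{agarwal2016fine} preserve the sparsity regime $m=\Theta(n^{1+1/\ell})$ (not just some $m^an^{2-a}$-style exponent) and add only $O(n^2)$ extra work, so that the $n^2$ additive term in the hypothesized running time propagates through.

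For the genuinely new cases (undirected Radius/Eccentricities, undirected APSP, and Wiener index in both settings), the plan is to reduce directly from the Min Weight $(2\ell+1)$-Cycle problem at sparsity $m=\Theta(n^{1+1/\ell})$. For undirected Radius, I expect to take the directed-to-undirected trick of embedding the instance into a layered/bipartite gadget with heavy back-edges that force any short eccentricity to be realized by a traversal mimicking a short directed cycle, paying only $O(m)$ extra edges and $O(n)$ extra nodes so the sparsity class is preserved. For the Wiener index, the idea is to add a large additive offset so that the sum of all pairwise distances encodes a single target cycle weight modulo a term the reduction can compute in $O(n^2)$ time; this is analogous to the standard dense Wiener-index reduction but must be done with a near-linear number of added edges. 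In each case, after running the hypothesized $O(mn^{1-\eps}+n^2)$ algorithm on the reduced instance, we read off the optimum weight of a $(2\ell+1)$-cycle and then invoke Theorem~\ref{thm:girthHard} to contradict the Min Weight $(2\ell+1)$-Clique Hypothesis.

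The main obstacle I anticipate is the undirected Radius (and Eccentricities) reduction: the standard dense gadget of~\cite{radiusgrandoni} uses $\Theta(n^2)$ added edges to enforce that the eccentricity witness must traverse a short cycle, and naively porting it would destroy the $m=\Theta(n^{1+1/\ell})$ regime. The workaround I plan is to add only a logarithmic or polynomial-in-$1/\eps$ number of auxiliary ``center'' vertices connected to each vertex with weight-$0$ edges, together with carefully weighted copies of the input graph, so that eccentricity computations in the new graph are dominated by shortest-cycle-like quantities but the edge count stays $\Theta(n^{1+1/\ell})$. A similar, but simpler, concern arises for Wiener index, where I must ensure that summing over $\Theta(n^2)$ distance pairs does not drown out the target cycle weight; assigning weights of different magnitudes in disjoint ``scales'' should separate the contributions cleanly.
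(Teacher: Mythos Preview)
Your high-level decomposition matches the paper's exactly: invoke Theorem~\ref{thm:girthHard} for the two cycle items, feed the directed APSP/Radius/Replacement Paths/Second Shortest Path/Betweenness Centrality items through the Agarwal--Ramachandran reductions of Theorem~\ref{cor:direcHard} into Shortest Cycle, and supply new sparsity-preserving reductions (Section~\ref{sec:newreducs}) for undirected Radius, Wiener index, and undirected APSP. One small confusion: your parenthetical ``Min Weight Cycle is no harder than its length-$(2\ell+1)$ restriction after standard padding with a disjoint slow cycle'' is both unnecessary and pointed the wrong way; Theorem~\ref{thm:girthHard} already lists Shortest Cycle, so once Theorem~\ref{cor:direcHard} hands you a fast Shortest Cycle algorithm you are done with no padding needed. (The actual reduction in the paper from Shortest Cycle to Min Weight $k$-Cycle, Theorem~\ref{lem:mincyclekcycle}, is by weight-shifting in a $k$-circle-layered graph, not by adding a slow cycle.)

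Where your sketch diverges from the paper is in the mechanics of the new reductions. For undirected Radius, the paper does not rely on heavy back-edges or copies of the input graph; it unrolls the $k$-circle-layered instance into $V_1,\ldots,V_k,V_1'$, attaches a single hub $u_i$ to each $V_i$, and crucially adds a set $B$ of only $O(\log n)$ ``bit-selector'' vertices so that $v_1^{(i)}$ reaches every ${v'}_1^{(j)}$ with $j\ne i$ cheaply while $d(v_1^{(i)},{v'}_1^{(i)})$ is forced to trace a $k$-cycle. Your instinct that a logarithmic number of auxiliary centers suffices is right, but the specific gadget is this bit-indexing trick, and the reduction goes through a \emph{negative} $k$-cycle decision problem plus binary search (Lemma~\ref{lem:negativeDirCycle}), not a direct optimization. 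For Wiener index, the paper does \emph{not} separate contributions by weight scales; instead it builds essentially the same gadget and then computes the Wiener index of four nested subgraphs $G',H,H',H''$, combining them by inclusion--exclusion to isolate $\sum_{i,j} d(v_1^{(i)},{v'}_1^{(j)})$. Undirected APSP is then obtained for free from undirected Radius (Corollary~\ref{cor:APSPhard}). So your plan is sound at the level of this theorem statement, but if you carry out the reductions you will want to swap in these two devices (bit-selector vertices; multiple Wiener-index calls with inclusion--exclusion) in place of the scale-separation idea.
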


So, either min weighted cliques can be found faster, or $\tilde{O}(mn+n^2)$ is the {\em optimal} running time for these problems, up to $n^{o(1)}$ factors, for an infinite family of edge sparsities $m(n)$. 
See Figure~\ref{fig:lbimg2} in the Appendix for a pictorial representation of our conditional lower bounds.  

Another intriguing consequence of Theorem~\ref{thm:weightedhard} is that, assuming Min Weight Clique is hard, running times of the form $\tO(m^{1+\eps}n^{1-2\eps})$ for $\eps>0$ are impossible! If Shortest Cycle had such an algorithm for $\eps>0$, then for every integer $L>1$ and $\delta=(1-1/L)\eps>0$ we have that for $m=\Theta(n^{1+1/L})$, $m^{1+\eps}n^{1-2\eps}\leq mn^{1-\delta}$ and hence the Min Weight $(2L+1)$-Clique Hypothesis is false.

\begin{table*}[h]
$
\begin{array}{|c|c|c|c|c|c|} 

\hline
\text{Weighted Problem}  &\text{Lower Bnd} &  \text{LB from} & \text{LB source} \\
\hline
\text{Min k-clique}&n^{k-o(1)}&\text{$k$-clique conj.}&\text{By Def}\\
\text{Min k-cycle (k odd)}& mn^{1-o(1)}&\textbf{Min k-clique}&\textbf{Thm \ref{lem:directedCycleHyperCycle}}\\
\text{Shortest cycle}&mn^{1-o(1)}&\textbf{odd min k-cycle}&\textbf{Thm \ref{lem:mincyclekcycle}} \\
\text{Directed Radius}&mn^{1-o(1)}&\text{Shortest Cycle}&\text{\cite{agarwal2016fine}} \\
\text{Undirected Radius}&mn^{1-o(1)}&\textbf{odd min k-cycle}&\textbf{Thm \ref{thm:undRadiusisMN}} \\
\text{Directed APSP}&mn^{1-o(1)}&\text{Shortest Cycle}&\text{\cite{agarwal2016fine}} \\
\text{Undirected APSP}&mn^{1-o(1)}&\textbf{odd min k-cycle}&\textbf{Cor \ref{cor:APSPhard}} \\
\text{Undirected APSP}&mn^{1-o(1)}&\text{Und. Shortest Cycle}& \text{\cite{agarwal2016fine}}\\
\text{Und. Wiener Index}&mn^{1-o(1)}&\textbf{odd min k-cycle}&\textbf{Thm \ref{thm:wienerIndex}} \\
\text{Dir. } 2^{nd} \text{ Shortest Path}&mn^{1-o(1)}&\text{Shortest Cycle}&\text{\cite{agarwal2016fine}} \\
\text{Dir. Repl. Paths}&mn^{1-o(1)}&\text{Shortest Cycle}&\text{\cite{agarwal2016fine}} \\
\hline
\end{array}
$
\centering
\caption{Weighted graph lower bounds. Our results are in bold. Und stands for undirected and Dir stands for directed. Repl stands for replacement.}
\label{table:weighted}
\end{table*}

Our reduction from Minimum Weight $(2L+1)$-Clique to Minimum Weight $(2L+1)$-Cycle produces a directed graph on $n^{L}$ nodes and $m=O(n^{L+1})$ edges, and hence if directed Minimum Weight $(2L+1)$-Cycle can be solved in $O(m^{2-1/(L+1) -\eps})$ time for some $\eps>0$, then the Min Weight $(2L+1)$-Clique Hypothesis is false. We present an extension for weighted cycles of even length as well, obtaining:

\begin{corollary} 
If Minimum Weight $k$-Cycle  in directed $m$-edge graphs is solvable in $O(m^{2-2/(k+1)-\eps})$ time for some $\eps>0$ for $k$ odd, or in $O(m^{2-2/k-\eps})$ time for $k$ even, then the Minimum Weight $\ell$-Clique Hypothesis is false for $\ell = 2\lceil k/2\rceil-1$.
\end{corollary}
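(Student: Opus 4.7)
The plan is to handle the odd and even cases of $k$ separately by invoking (or lightly modifying) the reduction from Min Weight $(2L+1)$-Clique to Min Weight $(2L+1)$-Cycle that underlies Theorem~\ref{thm:girthHard}.

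\textbf{Odd case.} Write $k = 2L+1$, so $\ell = k$. The reduction described immediately before the corollary takes a Min Weight $k$-Clique instance on $n$ vertices and produces a directed weighted graph on $n^L$ vertices and $m = O(n^{L+1})$ edges whose minimum weight $k$-cycles correspond to minimum weight $k$-cliques of the input. A hypothetical $O(m^{2-2/(k+1)-\eps})$-time algorithm for Min Weight $k$-Cycle then solves Min Weight $k$-Clique in time
\[
O\!\left(n^{(L+1)(2-1/(L+1)-\eps)}\right) = O\!\left(n^{2L+1-\eps(L+1)}\right) = O\!\left(n^{k-\eps(L+1)}\right),
\]
contradicting the Min Weight $k$-Clique Hypothesis.

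\textbf{Even case.} Write $k = 2L+2$, so $\ell = k-1 = 2L+1$ is odd. Apply the same reduction to a Min Weight $(k-1)$-Clique instance; this produces a layered directed graph on $n^L$ vertices with $O(n^{L+1})$ edges whose minimum weight $(k-1)$-cycles encode the minimum weight $(k-1)$-cliques. To turn odd cycles into even ones, subdivide every edge across one fixed layer boundary: replace each such edge $(u,v)$ by a fresh intermediate vertex $x_{uv}$ together with edges $(u,x_{uv})$ and $(x_{uv},v)$, assigning the original weight to one of the two new edges and weight $0$ to the other. Every original $(k-1)$-cycle becomes a $k$-cycle of the same total weight, and conversely (since the layered structure forces any cycle to traverse all layers at least once) every $k$-cycle in the new graph uses exactly one subdivision vertex and thus arises from a unique $(k-1)$-cycle in the original. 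The new graph has $O(n^{L+1}) = O(n^{k/2})$ vertices and edges, so with $m = \Theta(n^{k/2})$ a hypothetical $O(m^{2-2/k-\eps})$-time algorithm for Min Weight $k$-Cycle would yield a $(k-1)$-Clique algorithm running in
\[
O\!\left(n^{(k/2)(2-2/k-\eps)}\right) = O\!\left(n^{k-1-\eps k/2}\right),
\]
contradicting the Min Weight $(k-1)$-Clique Hypothesis, where $\ell = k-1 = 2\lceil k/2\rceil - 1$.

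\textbf{Main obstacle.} The one technical step that requires care is verifying, in the even case, that the subdivision introduces no spurious short $k$-cycles: one must confirm that the minimum-length cycles in the modified graph really have length exactly $k$ and are in weight-preserving bijection with the original minimum-weight $(k-1)$-cycles. Given the layered nature of the reduction (each cycle must visit each of the $k-1$ layers at least once, and subdivision vertices have in- and out-degree $1$), this is a routine check; the rest of the corollary is a direct parameter substitution.
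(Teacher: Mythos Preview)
Your proof is correct and follows the same route as the paper: reduce Min Weight $(2L+1)$-Clique to Min Weight $(2L+1)$-Cycle via the main construction (giving $m=\Theta(n^{L+1})$), and for even $k$ insert one extra layer to convert $(k-1)$-cycles to $k$-cycles. The only cosmetic difference is that the paper packages the even-case step as Lemma~\ref{lemma:evenodd}, which splits each \emph{node} of one layer into an in-copy and an out-copy (adding $O(n^{L})$ nodes) rather than subdividing each \emph{edge} across a layer (which adds $O(n^{L+1})$ nodes); since the corollary is phrased in terms of $m$, both variants yield the same bound.
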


Directed $k$-cycles in unweighted graphs were studied by Alon, Yuster and Zwick~\cite{alon1997finding} who gave algorithms with a runtime of $O(m^{2-2/(k+1)})$ for $k$ odd, and $O(m^{2-2/k})$ for $k$ even. We show that their algorithm can be extended to find Minimum Weight $k$-Cycles with only a polylogarithmic overhead, proving that the above conditional lower bound is tight.

\begin{theorem}
The Minimum Weight $k$-Cycle in directed $m$-edge graphs can be solved in $\tilde{O}(m^{2-2/(k+1)})$ time for $k$ odd, and in $\tilde{O}(m^{2-2/k})$ time for $k$ even.
\end{theorem}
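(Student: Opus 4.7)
The plan is to adapt the Alon--Yuster--Zwick (AYZ) algorithm for directed $k$-cycle detection in unweighted sparse graphs to the minimum-weight setting, incurring only a polylogarithmic overhead from $O(\log n)$-bit weight arithmetic in place of single-bit presence/absence. I would preserve AYZ's structural decomposition: fix a degree threshold $\Delta$ and call a vertex \emph{heavy} if its degree exceeds $\Delta$ and \emph{light} otherwise. There are at most $O(m/\Delta)$ heavy vertices and the subgraph induced by light vertices has maximum degree at most $\Delta$. Every $k$-cycle either passes through some heavy vertex or lies entirely inside the light subgraph, and I would handle the two cases separately.

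In the heavy case, I would replace AYZ's BFS of depth $\lceil k/2\rceil$ from each heavy source $s$ by the length-indexed relaxation DP that computes $d_j(u) = $ min weight of a walk from $s$ to $u$ with exactly $j$ edges, for $j\le \lceil k/2\rceil$. Each layer costs $O(m)$, so the cost per heavy source is $\tilde{O}(m)$ and the total over heavy sources is $\tilde{O}(m^2/\Delta)$. For $k = 2\ell+1$ odd, the minimum-weight $k$-cycle through $s$ equals $\min_u (d_\ell(u) + d'_{\ell+1}(u))$, where $d'_j$ is the analogous DP on the reverse graph; the even case $k = 2\ell$ uses $\min_u (d_\ell(u) + d'_\ell(u))$. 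In the light case, I would run a depth-$\lceil k/2\rceil$ DFS from each light vertex inside the light subgraph, enumerating at most $\Delta^{\lceil k/2\rceil}$ walks per source, and hash each walk by its endpoint (and color class, see below), keeping only the minimum-weight walk per key. A symmetric enumeration on reversed edges followed by a merge of the two hash tables by endpoint yields the lightest closing walk in $\tilde{O}(n\Delta^{\lceil k/2\rceil})$ total time. Balancing $m^2/\Delta \approx n\Delta^{\lceil k/2\rceil}$ with $n = O(m)$ gives the target runtime $\tilde{O}(m^{2-2/(k+1)})$ for odd $k$; AYZ's standard asymmetric split of even-length walks into halves of different sizes recovers the sharper $\tilde{O}(m^{2-2/k})$ bound for even $k$.

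The chief technical obstacle is not the weight bookkeeping but ensuring that the returned length-$k$ closed walk is a \emph{simple} cycle: in the unweighted problem a length-$k$ walk with vertex repetitions can be safely ignored because some shorter simple cycle witnesses the structure, but in the minimum-weight setting the lightest closed walk might visit fewer than $k$ distinct vertices and not contain any simple $k$-cycle at all. I would resolve this by color-coding: independently color each vertex uniformly from $[k]$, restrict both the heavy DP and the light DFS to \emph{colorful} walks (each step moves to a previously unused color), and use the fact that any fixed simple $k$-cycle becomes colorful with probability $k!/k^k = \Omega(1)$. Repeating $O(\log n)$ times, or derandomizing via $k$-perfect hash families, amplifies success to high probability or certainty at $\polylog n$ cost, which is absorbed into the $\tilde{O}$. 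Since $k$ is a fixed constant, carrying the $O(2^k) = O(1)$-sized color-class label alongside each DP or DFS state incurs no asymptotic overhead, so the final bounds match those of unweighted AYZ.
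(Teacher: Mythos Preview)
Your approach is essentially the paper's: the Alon--Yuster--Zwick high/low-degree split, with color-coding replacing Monien's algorithm in the heavy case and enforcing simplicity in the light case. The heavy-vertex part is fine (your layered DP with color-set labels is equivalent to the paper's Dijkstra on the color-layered DAG).

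However, your light-case accounting is off by a factor of~$\Delta$, and as written the balance does not yield the claimed exponent. You bound the number of length-$\lceil k/2\rceil$ walks from light sources by $n\Delta^{\lceil k/2\rceil}$ and then substitute $n=O(m)$. For odd $k=2\ell+1$ this gives $\lceil k/2\rceil=\ell+1$, and balancing $m^2/\Delta = m\Delta^{\ell+1}$ yields $\Delta=m^{1/(\ell+2)}$ and runtime $m^{2-2/(k+3)}$, not $m^{2-2/(k+1)}$. The fix is the standard AYZ count: the number of length-$j$ directed paths in a subgraph of maximum out-degree $\le\Delta$ is at most $m\Delta^{j-1}$ (charge each path to its first edge, then at most $\Delta$ choices per subsequent step). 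With the light case costing $m\Delta^{\lceil k/2\rceil-1}$, the balance becomes $\Delta=m^{1/\lceil k/2\rceil}$ and the runtime is $m^{2-1/\lceil k/2\rceil}$, which is $m^{2-2/(k+1)}$ for odd~$k$ and $m^{2-2/k}$ for even~$k$ as desired.

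A smaller slip: your remark that even $k$ uses an ``asymmetric split into halves of different sizes'' is backwards---for even $k$ the split $k/2+k/2$ is symmetric, and it is the odd case that is asymmetric ($\lfloor k/2\rfloor+\lceil k/2\rceil$). With these two corrections the argument goes through and matches the paper's proof.
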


Our lower bound results compared to prior work are presented in
Table \ref{table:weighted}. 
The upper bounds for the considered problems are as follows: Min $k$-Clique is easily solvable in $O(n^k)$ time. The best algorithms for all other problems in the table take $\tO(mn)$ time~\cite{AlonYZ16,OrlinS17,pettie2002faster,dijkstra,gotthilf2009improved}. 


\begin{figure*}[h]
	\centering
	\includegraphics[scale=0.8]{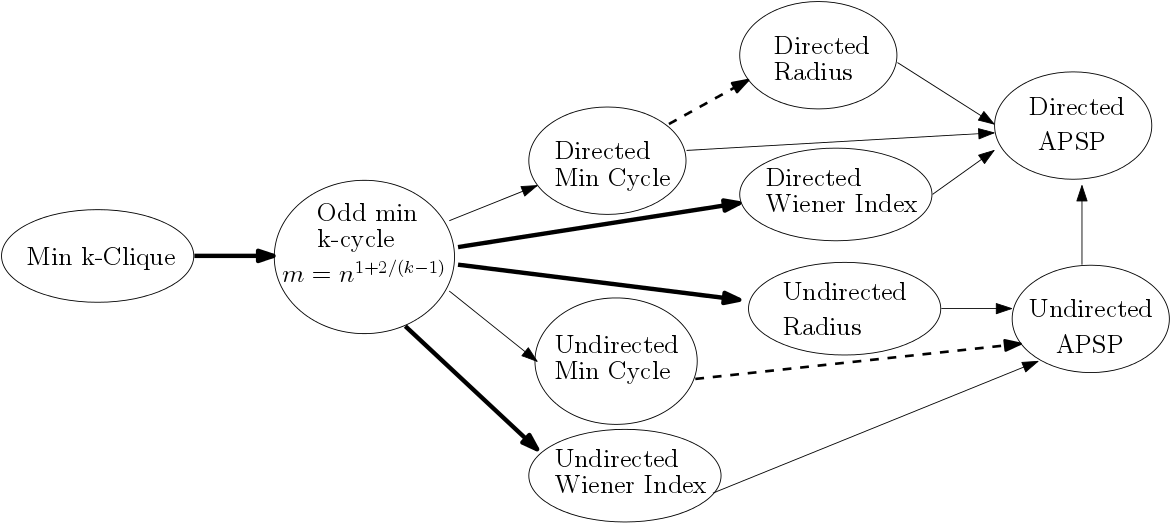}
	\caption{A depiction of a network of reductions related to sparse weighted graph problems and the dense Minimum $k$-clique problem. The bold edges represent a subset of the reductions in this paper. The green edges are reductions from Agarwal and Ramachandran~\cite{agarwal2016fine}.}
	\label{fig:reductionDiagram}
\end{figure*}

\paragraph{Sparse Unweighted Problems.}

We have proven tight conditional lower bounds for weighted graphs.
However, for sparse enough ($m=O(n^{\omega-1})$) {\em unweighted} graphs, the best algorithms for APSP and its relatives also run in $\tilde{O}(mn)$ time (see Section~\ref{sec:relatedwork} for the relevant prior work on APSP).
We hence turn our attention to the unweighted versions of these problems.

Our reduction from Min Weight $k$-Clique to Min Weight $k$-Cycle still works for unweighted graphs just by disregarding the weights. We can get super-linear lower bounds for sparse unweighted problems from three different plausible assumptions. 

As  mentioned before, the fastest algorithm for $k$-Clique (for $k$ divisible by $3$) runs in $O(n^{\omega k/3})$~\cite{NesetrilPoljak,fixedParamClique}. This algorithm has remained unchallenged for many decades and lead to the following hypothesis (see e.g. \cite{AbboudRNA}).

\begin{hypothesis}[The $k$-Clique Hypothesis]
Detecting a $k$-Clique in a graph with $n$ nodes requires $\Omega (n^{\omega k/3-o(1)})$ time on a Word RAM. 
\label{hyp:UnwKCliqe}
\end{hypothesis}

From this we get super-linear lower bound for the shortest cycle problem. We get an analogous result to the one we had before:


\begin{theorem}If the $k$-Clique Hypothesis is  true, Shortest Cycle in undirected or directed graphs requires $m^{2\omega/3-o(1)}$.
\label{thm:sparseShortCycle}
\end{theorem}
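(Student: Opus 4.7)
The plan is to compose the (unweighted specialization of the) $(2\ell+1)$-Clique-to-$(2\ell+1)$-Cycle reduction of Theorem~\ref{lem:directedCycleHyperCycle} with a hypothetical fast Shortest Cycle algorithm, and then let $\ell$ grow.

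Fix $\ell \geq 1$ and set $k=2\ell+1$. Applied to an $n$-vertex unweighted instance of $k$-Clique, Theorem~\ref{lem:directedCycleHyperCycle} produces a graph $H$ with $\Theta(n^\ell)$ vertices and $m=\Theta(n^{\ell+1})$ edges in which a $k$-cycle exists iff a $k$-clique exists in the source graph. Crucially, the construction is a cyclically $k$-layered product in which edges only go between consecutive layers, which forces the girth of $H$ to be at least $k$. Therefore the shortest cycle in $H$ has length exactly $k$ when the source graph contains a $k$-clique, and is strictly greater (or $\infty$) otherwise, so any Shortest Cycle algorithm on $H$ decides $k$-Clique. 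The undirected case follows by dropping orientations, since the cyclic layering still prevents short cycles.

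If Shortest Cycle on $m$-edge graphs runs in time $T(m)$, this composition yields a $k$-Clique algorithm running in time $T(n^{\ell+1})$. Hypothesis~\ref{hyp:UnwKCliqe} then forces
\[
T(n^{\ell+1}) \geq n^{\omega(2\ell+1)/3 - o(1)}
\]
as $n\to\infty$. Rewriting in terms of $m=\Theta(n^{\ell+1})$ gives
\[
T(m) \geq m^{\omega(2\ell+1)/(3(\ell+1)) - o(1)}.
\]
For any $\epsilon>0$, I would pick $\ell$ large enough that $\omega(2\ell+1)/(3(\ell+1)) > 2\omega/3 - \epsilon$; the above bound (for this fixed $\ell$) then gives $T(m) \geq m^{2\omega/3 - \epsilon - o(1)}$ as $m\to\infty$. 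Since $\epsilon$ was arbitrary, this yields the claimed $m^{2\omega/3-o(1)}$ lower bound.

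The main obstacle is confirming that the reduction of Theorem~\ref{lem:directedCycleHyperCycle}, stated for the minimum weight setting, specializes cleanly to the unweighted Shortest Cycle problem — i.e., that no short spurious cycles appear once the weights are removed. Because that reduction is essentially a tensor of the clique graph with a $k$-cycle template and its girth analysis is inherently weight-oblivious, this specialization should be routine rather than the real source of difficulty; the substantive content of the theorem is the $\ell\to\infty$ limit combined with the cubic-in-$k$ clique exponent.
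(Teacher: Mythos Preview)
Your argument for the directed case is correct and essentially matches the paper: the clique-to-cycle reduction produces a $k$-circle-layered directed graph, in which every directed cycle has length a multiple of $k$, so one Shortest Cycle call decides $k$-cycle; sending $\ell\to\infty$ then pushes the exponent to $2\omega/3$. This is exactly Lemma~\ref{lem:minunweightedcyclekcycle} combined with the limiting argument in Section~\ref{sec:shortestCycleHard}.

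The undirected case has a real gap. Your claim that ``the cyclic layering still prevents short cycles'' after dropping orientations is false. In the undirected $k$-circle-layered graph produced by the reduction there are typically many $4$-cycles: two $\ell$-tuple nodes $a,a'\in U_i$ that agree on all but their first coordinate can share two common neighbors $b,b'\in U_{i+1}$ that agree on all but their last coordinate, yielding the $4$-cycle $a$--$b$--$a'$--$b'$--$a$. Thus the undirected girth can be $4$ whether or not a $k$-cycle (equivalently, a $k$-clique) is present, and a single undirected Shortest Cycle call does not distinguish the two cases. What the layering \emph{does} guarantee for odd $k$ is only that every \emph{odd} cycle has length at least $k$ (removing any one layer leaves a bipartite graph); this is precisely what Lemma~\ref{lem:directedCycleHyperCycle} exploits to equate directed and undirected $k$-cycle \emph{detection}, but it says nothing about the undirected girth. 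Your final paragraph locates the obstacle correctly, but the ``routine'' dismissal is unjustified for the undirected problem: an additional idea is needed to go from undirected $k$-cycle detection to undirected Shortest Cycle.
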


We get super-linear lower bounds for various graph problems as a corollary of Theorem~\ref{thm:sparseShortCycle}. 

\begin{corollary}
	If the $k$-Clique Hypothesis is  true, the following problems in unweighted graphs require $m^{2\omega/3-o(1)}$ time:
	\begin{compactitem}
		\item Betweenness Centrality in a directed graph,
		\item APSP in an undirected or directed graph,
		\item Radius in an undirected or directed graphs, n
		\item Wiener Index in an undirected or directed graph.
	\end{compactitem}
\end{corollary}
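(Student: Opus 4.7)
The plan is to derive the corollary by composing Theorem~\ref{thm:sparseShortCycle} with sparsity-preserving reductions from Shortest Cycle to each of the listed target problems. Since Theorem~\ref{thm:sparseShortCycle} already supplies an $m^{2\omega/3-o(1)}$ lower bound for Shortest Cycle in both directed and undirected unweighted graphs under the $k$-Clique Hypothesis, it suffices to exhibit, for each target problem, a reduction that turns an $O(m^{2\omega/3-\eps})$ algorithm for the target into an $O(m^{2\omega/3-\eps'})$ algorithm for Shortest Cycle on a graph of essentially the same $m$ and $n$.

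For the directed variants (directed APSP, directed Radius, Betweenness Centrality), I would invoke the reductions of Agarwal and Ramachandran~\cite{agarwal2016fine} as packaged in Theorem~\ref{cor:direcHard}. These reductions produce an instance of the target problem on $\tO(n)$ vertices and $\tO(m)$ edges and translate a faster algorithm for the target into a faster algorithm for Shortest Cycle, up to polylog overheads and an additive $n^2$ term. For the undirected variants (undirected APSP, undirected Radius, Wiener index), I would instead use the new reductions developed in Section~\ref{sec:newreducs} of this paper, namely Theorem~\ref{thm:undRadiusisMN}, Corollary~\ref{cor:APSPhard}, and Theorem~\ref{thm:wienerIndex}. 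All of these reductions start from a short-cycle-type problem and produce a graph in which both $m$ and $n$ are preserved up to polylog factors, which is exactly what we need.

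One subtlety to handle is that Theorem~\ref{cor:direcHard} and several of the new reductions are naturally stated for \emph{weighted} graphs, whereas the corollary concerns unweighted graphs. I would address this by observing that the reductions only use polynomially bounded integer weights, which can be simulated in unweighted graphs by subdividing edges; a weight of value $w \leq \poly(n)$ becomes a path of length $w$, or can be encoded with $O(\log n)$ layered gadgets, blowing up the graph by at most a polylogarithmic factor. Since the exponent in the lower bound absorbs such factors into the $o(1)$, the resulting lower bound of $m^{2\omega/3-o(1)}$ is unaffected.

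The main obstacle is verifying that these reductions are truly sparsity-preserving in the unweighted regime: a construction that inflates $n$ by $n^\delta$ while keeping $m$ fixed, or vice versa, would produce a strictly weaker bound than claimed. Concretely, I would go through each of the cited reductions and check that the produced graph has $m' = \tO(m)$ edges and $n' = \tO(n)$ vertices, and that the translated runtime $m'^{2\omega/3-\eps}$ on the target instance remains bounded by $m^{2\omega/3-\eps'}$ on the Shortest Cycle instance. Because the gadget constructions in \cite{agarwal2016fine} and in Section~\ref{sec:newreducs} were explicitly designed with sparsity preservation in mind, this is a verification rather than a new construction, and the corollary then follows immediately.
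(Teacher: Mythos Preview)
Your overall plan---compose the $k$-Clique $\to$ $k$-cycle $\to$ Shortest Cycle chain with sparsity-preserving reductions to the target problems---is the right shape, but the patch you propose for the weighted/unweighted mismatch does not work as stated and is where the argument breaks.

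The claim that ``polynomially bounded integer weights \ldots\ can be encoded with $O(\log n)$ layered gadgets, blowing up the graph by at most a polylogarithmic factor'' is not correct for shortest-path-type problems. Replacing an edge of weight $w$ by a path of length $w$ multiplies the edge count by $w$; if $w$ can be $\poly(n)$, this is a polynomial blowup in $m$, not polylogarithmic, and the exponent in $m^{2\omega/3-\eps}$ does \emph{not} absorb it. There is no generic $O(\log n)$-gadget trick that preserves exact distances. (One could argue that, starting from an \emph{unweighted} $k$-cycle instance with $R=1$, the weights introduced by Theorem~\ref{thm:undRadiusisMN} and Theorem~\ref{thm:wienerIndex} are $O(k^2)=O(1)$, so subdivision is only a constant blowup---but that is a different argument from the one you gave, and it needs to be made explicitly.)

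The paper sidesteps this issue entirely: it does not route through the weighted reductions at all. For the undirected unweighted targets it uses the dedicated \emph{unweighted} gadgets, Lemma~\ref{lem:unweightedundRadiusisMN} (Radius) and Lemma~\ref{lem:unweightedundWeinerMN} (Wiener index), which reduce directly from unweighted directed $k$-cycle and produce unweighted graphs with $O(n)$ nodes and $O(m)$ edges; APSP then follows via Corollary~\ref{cor:APSPhard}. So rather than Theorem~\ref{thm:undRadiusisMN} and Theorem~\ref{thm:wienerIndex} plus a subdivision step, you should be citing their unweighted counterparts, which were built precisely so that no weight-removal trick is needed.
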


The reader may notice that the matrix multiplication exponent shows up repeatedly in the unweighted cases of these problems. This is no coincidence.  The best known combinatorial\footnote{Informally, combinatorial algorithms are algorithms that do not use fast matrix multiplication.} algorithms for the unweighted $k$-clique algorithm take $\tTheta(n^k)$ time. This has lead to a new hypothesis.

\begin{hypothesis}[Combinatorial $k$-Clique]
	Any \emph{combinatorial} algorithm to detect a $k$-Clique in a graph with $n$ nodes requires $n^{k-o(1)}$ time on a Word RAM~\cite{AbboudRNA}. 
	\label{hyp:UnwCombKCliqe}
\end{hypothesis}

Our reduction from $k$-clique to $k$-cycle is combinatorial. Thus, an $O(mn^{1-\eps})$ time (for $\eps>0$) combinatorial algorithm for the directed unweighted $k$-cycle problem for odd $k$ and $m= n^{(k+1)/(k-1)}$ would imply a combinatorial algorithm for the $k$-clique problem with running time $O(n^{k-\eps'})$ for $\eps'>0$. 
Any algorithm with a competitive running time must use fast matrix multiplication, or give an exciting new algorithm for $k$-clique.

Currently, the best bound on $\omega$ is $\omega<2.373$ \cite{legall,vstoc12}, and the $m^{2\omega/3-o(1)}$ lower bound for Shortest Cycle and related problems might conceivably be $m^{1.58}$ which is not far from the best known running time $O(m^{1.63})$ for $5$-cycle~\cite{yuster2004detecting}. Yuster and Zwick gave an algorithm based on matrix multiplication for directed $k$-Cycle, however they were unable to analyze its running time for $k>5$. 
They conjecture that if $\omega=2$, their algorithm runs faster than the best combinatorial algorithms for every $k$, however even the conjectured runtime goes to $m^2$, as $k$ grows. In contrast, for $\omega=2$, our lower bound based on the $k$-Clique Hypothesis is only $m^{4/3-o(1)}$\footnote{Of course, the Yuster and Zwick running time for $k$-cycle might not be optimal, and it might be that $O(m^{2-\delta})$ time is possible for $k$-cycle for some $\delta>0$ and all $k$.}. We thus search for higher conditional lower bounds based on different and at least as believable hypotheses.

To this end, we formalize a working hypothesis 
about the complexity of finding a hyperclique in a hypergraph. An $\ell$-hyperclique in a $k$-uniform hypergraph $G$ is composed of a set of $\ell$ nodes of $G$ such that all $k$-tuples of them form a hyperedge in $G$.

\begin{hypothesis}[$(\ell,k)$-Hyperclique Hypothesis]
	\label{hyp:lkhypercliqef} 
	Let $\ell>k>2$ be integers. On a Word-RAM with $O(\log n)$ bit words, finding an $\ell$-hyperclique in a $k$-uniform hypergraph on $n$ nodes requires $n^{\ell-o(1)}$ time.
\end{hypothesis}

Why should one believe the hyperclique hypothesis?
There are many reasons: (1) When $k>2$, there is no  $O(n^{\ell-\eps})$ time algorithm for any $\eps>0$ for $\ell$-hyperclique in $k$ uniform hypergraphs.
(2)
The natural extension of the techniques used to
solve $\ell$-clique in graphs will {\em NOT} solve $\ell$-hyperclique in $k$-uniform hypergraphs in $O(n^{\ell-\eps})$ time for any $\eps>0$ when $k>2$. We prove this in Section~\ref{sec:genMatrixMult}.
(3) There are known reductions from notoriously difficult problems such as Exact Weight $k$-Clique, Max $k$ SAT and even harder Constrained Satisfaction Problems (CSPs) to $(\ell,k)$-Hyperclique so that if the hypothesis is false, then all of these problems have exciting improved algorithms. For these and more, see the Discussion in Section~\ref{sec:discuss}.

Now, let us state our results for unweighted Shortest Cycle based on the $(\ell,k)$-Hypothesis. The same lower bounds apply to the other problems of consideration (APSP, Radius etc.). 

\begin{theorem}
Under the $(\ell,k)$-Hypothesis, the Shortest Cycle problem in directed unweighted graphs requires $m^{k/(k-1) - o(1)}$ time on a Word RAM with $O(\log n)$ bit words.
\end{theorem}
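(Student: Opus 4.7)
The plan is to prove the theorem by constructing a reduction from $(\ell,k)$-Hyperclique on $n$-node $k$-uniform hypergraphs to Shortest Cycle in directed unweighted graphs of size $m = \Theta(n^{\ell(k-1)/k})$. An $O(m^{k/(k-1)-\eps})$-time Shortest Cycle algorithm on the output would then solve the hyperclique instance in time $O(n^{\ell-\eps'})$ for some $\eps'>0$, refuting Hypothesis~\ref{hyp:lkhypercliqef}. Checking the exponent: $(\ell(k-1)/k)\cdot(k/(k-1)) = \ell$, so the sparsity $m = n^{\ell(k-1)/k}$ is exactly the one that makes the two conditional lower bounds line up.

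The construction I would adapt is the layered graph underlying the weighted $(2L+1)$-Clique to $(2L+1)$-Cycle reduction of Theorem~\ref{thm:girthHard}, replacing weighted-edge certificates with membership in hyperedges of $G$. Conceptually, partition the $\ell$ slots of a putative hyperclique into $k$ groups of size $r = \ell/k$ (padding for divisibility). Build a directed graph $H$ whose layers each store $r$-tuples of vertices of $G$; an edge between a tuple $A$ in one layer and a tuple $B$ in the next exists exactly when every $k$-subset of slots jointly readable from $A \cup B$ is filled by a hyperedge of $G$. Once each slot is represented redundantly across multiple adjacent layers, every one of the $\binom{\ell}{k}$ required $k$-subsets is witnessed by some adjacent-layer pair, so a cycle through the layers of $H$ corresponds to a certified $\ell$-hyperclique in $G$.

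The hard part is calibrating tuple sizes, layer overlap, and target cycle length $g$ to satisfy three constraints simultaneously: (i) the total edge count is $\Theta(n^{\ell(k-1)/k})$, not larger; (ii) every one of the $\binom{\ell}{k}$ hyperedge constraints is enforced along any would-be cycle of length $g$, including $k$-subsets whose slots span three or more layers; and (iii) the baseline girth of $H$ (in the absence of a hyperclique) strictly exceeds $g$, so that the shortest cycle genuinely witnesses the hyperclique. A naive $k$-layer design with $r$-tuples only verifies $k$-subsets confined to two consecutive layers and thus misses most cross-layer subsets; overcoming this with overlapping or enlarged tuples while holding edge count at the target is the combinatorial core of the proof, and is the analog for hypercliques of the slot-packing used in the dense weighted reduction of Theorem~\ref{thm:girthHard}.

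To finish, I would compose the construction with the unweighted analog of the $g$-cycle-to-Shortest-Cycle reduction underlying Theorem~\ref{lem:mincyclekcycle}: pad $H$ with acyclic gadgets so that any cycle arising from a source other than a hyperclique certificate has length strictly greater than $g$, and this padding is done without blowing up the edge count by more than a constant factor. The composed reduction then transfers the hypothesis directly: any Shortest Cycle algorithm running in $O(m^{k/(k-1)-\eps})$ refutes the $(\ell,k)$-Hyperclique Hypothesis, yielding the claimed $m^{k/(k-1)-o(1)}$ lower bound on a Word RAM.
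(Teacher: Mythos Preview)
Your overall architecture is right---build a layered directed graph whose short cycles encode $\ell$-hypercliques, then reduce to Shortest Cycle---but the proposal stops exactly at the step that carries all the content. You correctly flag that the naive $k$-layer design with $r=\ell/k$-tuples fails because an edge only sees $2\ell/k$ slots and so misses $k$-subsets spread over three or more blocks, and you then say that ``overcoming this with overlapping or enlarged tuples while holding edge count at the target is the combinatorial core.'' That core is precisely what is missing: you have named the problem, not solved it.

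The paper's resolution is a specific pigeonhole/arc-covering argument that fixes the overlap parameter \emph{a priori}. Place the $\ell$ slots on a circle; any $k$ of them leave some gap of at least $\lceil(\ell-k)/k\rceil$ consecutive unused slots, hence every $k$-subset is contained in some arc of length $\gamma=\ell-\lceil\ell/k\rceil+1$. This dictates the construction: there are $\ell$ layers (not $k$), nodes are $(\gamma-1)$-tuples of \emph{consecutive} slots, and an edge from layer $i$ to layer $i+1$ certifies the single $\gamma$-arc starting at slot $i$. The intermediate object is exactly a tight $\ell$-hypercycle in a $\gamma$-uniform hypergraph (Theorem~\ref{thm:hypercycleHyperclique}), which then becomes an $\ell$-cycle in a graph with $n^{\gamma-1}$ nodes and $n^{\gamma}$ edges (Lemma~\ref{lem:directedCycleHyperCycle}). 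Your target sparsity $m=n^{\ell(k-1)/k}$ is therefore off by one in the exponent---what you actually get is $m=n^{\gamma}\approx n^{\ell(k-1)/k+1}$---and consequently the lower-bound exponent $\ell/\gamma$ is strictly below $k/(k-1)$ for every fixed $\ell$. The stated $m^{k/(k-1)-o(1)}$ bound is obtained only by letting $\ell\to\infty$ for each target $\eps>0$, exactly as in Corollary~\ref{cor:1.5LBforarbLargeCycle}; your write-up treats a single fixed $\ell$ as though it already hits the limit.

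Finally, the pointer to Theorem~\ref{thm:girthHard} as the template for ``slot-packing'' is the $k=2$ special case of the same arc argument (where $\gamma=\lceil\ell/2\rceil+1$), so invoking it does not supply the missing generalization; it is the thing to be generalized.
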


The theorem implies in particular that, Shortest Cycle in unweighted directed graphs requires (a) $m^{3/2-o(1)}$ time, unless Max $3$-SAT (and other CSPs) have faster than $2^n$ algorithms, (b) $m^{4/3-o(1)}$ time, unless Exact Weight $k$-Clique has a significantly faster than $n^k$ algorithm.
The latter is the same lower bound as from $k$-Clique when $\omega=2$ but it is from a different and potentially more believable hypothesis. Finally, Shortest Cycle and its relatives are not in linear time, unless the $(\ell,k)$-Hypothesis is false for every constant $k$.

\begin{table*}[h]
	$
	\begin{array}{|c|c|c|c|c|c|} 
	
	\hline
	\text{Problem}  &\text{Lower Bound} &  \text{LB from} & \text{LB src} \\
	\hline
	\text{k-cycle odd}& m^{(k-\ceil{(k+1)/3}+1)/k -o(1)}&\textbf{Max 3-SAT}&\textbf{Thm \ref{thm:dirCycleMaxSat}}\\
	\text{k-cycle odd} & m^{(2\omega k)/(3(k+1))-o(1)}&\textbf{k-clique}&\textbf{Lem  \ref{lem:directedCycleHyperCycle}}\\
	
	\text{ Shrt. cycle}&  m^{2\omega/3-o(1)}&\textbf{k-clique}&\textbf{Lem \ref{lem:minunweightedcyclekcycle}}\\
	\text{ Shrt. cycle}& m^{3/2 -o(1)}&\textbf{Max 3-SAT}&\textbf{Lem \ref{lem:minunweightedcyclekcycle}}\\
	
	\text{U. Radius} & m^{2\omega/3-o(1)}&\textbf{k-clique}&\textbf{Lem \ref{lem:unweightedundRadiusisMN}}\\
	\text{U. Radius} & m^{3/2-o(1)}&\textbf{Max 3-SAT}&\textbf{Lem \ref{lem:unweightedundRadiusisMN}}\\

	\text{U. Wiener Ind.} &   m^{2\omega/3-o(1)}&\textbf{k-clique}&\textbf{Lem \ref{lem:unweightedundWeinerMN}}\\
	\text{U. Wiener Ind.} & m^{3/2-o(1)}&\textbf{Max 3-SAT}&\textbf{Lem \ref{lem:unweightedundWeinerMN}}\\
	
	\text{U. APSP} & m^{2\omega/3-o(1)} &\textbf{k-clique}&\textbf{Cor \ref{cor:APSPhard}}\\
	\text{U. APSP} & m^{3/2-o(1)} &\textbf{Max 3-SAT}&\textbf{Cor \ref{cor:APSPhard}}\\
	\hline
	\end{array}
	$
	\centering
	\caption{Unweighted graph lower bounds. Our results are in bold. Upper bounds marked with $*$ are conjectured. U stands for undirected. Src stands for source. Shrt stands for shortest. Ind stands for index.}
	\label{table:unweighted}
\end{table*}

Our new lower bounds for sparse unweighted graph problems are summarized in Table \ref{table:unweighted}. Odd k-cycle is conjectured  to run in time $\tO(m^{(k+1)\omega / (2\omega +k +1)})$ by~\cite{yuster2004detecting}. The fastest algorithms for all other problems in the table run in time $O(\min\{mn, n^\omega\})$~\cite{itai1978finding,seidel1995all}.

\paragraph{Overview}
See Figure \ref{fig:reductionDiagram} for a depiction of our core reductions. 

In Sections \ref{sec:redFromHyperCliquetoHypercycle} to \ref{sec:shortestCycleHard} we cover the core reductions and show they are tight to the best known algorithms. 
The reduction from hyperclique to hypercycle is covered in Section \ref{sec:redFromHyperCliquetoHypercycle}.
The reduction from hypercycle to directed cycle in Section \ref{sec:hyperToCycle}. 
The algorithms for weighted minimum $k$-cycle which match the conditional lower bounds are discussed in Section \ref{sec:cycleUpperBounds}.
The reduction from minimum weight clique to shortest cycle is in Section \ref{sec:shortestCycleHard}.

In Sections \ref{sec:hypothesisDiscussion} to \ref{sec:MaxKSatHypercycle} we give justification for the hardness of the unweighted versions of these problems. In Section \ref{sec:hypothesisDiscussion} we discuss the hyper-clique hypothesis and give justification for it. In Section \ref{sec:genMatrixMult} we show that the generalized matrix product related to finding hypercliques in $k$-uniform hypergraphs can not be sped up with a Strassen like technique. In Section \ref{sec:MaxKSatHypercycle} we reduce Max-k-SAT to Tight Hypercycle. 

In Appendix \ref{sec:relatedwork} we discuss the prior work getting fast algorithms for the sparse graph problems we study. 
In Appendix \ref{sec:newreducs} we present the reductions from minimum $k$-cycle and minimum cycle to Radius, Weiner Index and APSP. In Appendix \ref{sec:generalCSPtoHyperclique} we reduce general CSP to the Hyperclique problem. Finally, in Appendix \ref{sec:nonIntgerDensities} we extend our lower bounds to make improved but non-matching lower bounds for graph densities between $n^{1+1/\ell}$ and $n^{1+1/(\ell+1)}$.

\section{Preliminaries}
\label{sec:Prelim}
In this section we define various notions that we will be using and prove some simple lemmas. 

\paragraph{Definitions and notation.}

Throughout this paper will be discussing problems indexed by $k$ and $\ell$. For example, $k$-cycle, $k$-clique, $(\ell, k)$-Hyperclique. We will treat the $k$ and $\ell$ values as being constant in these problems.  
%
	A \emph{hypergraph} $G=(V,E)$ is defined by its vertices $V$ and its hyperedges $E$ where each $e\in E$ is a subset of $V$. $G$ is a $k$-uniform hypergraph if all its hyperedges are of size $k$. 

Graphs are just $2$-uniform hypergraphs. Unless otherwise stated, the variables $m$ and $n$ will refer to the number of hyperedges and vertices of the hypergraph in question.
Unless otherwise stated, the graphs in this paper will be directed. Hypergraphs will not be directed. We will use node and vertex interchangeably.

	An \emph{$\ell$-hypercycle} in a $k$-uniform hypergraph is an ordered $\ell$-tuple of vertices $v_1,\ldots,v_\ell$ such that for every $i\in {1,\ldots,\ell}$, $(v_{i},v_{i+1},\ldots,v_{i+k-1})$ is a hyperedge (where the indices are mod $k$). 

We will be dealing with simple hypercycles, so that all $v_i$ are distinct. 
These types of hypercycles are known as tight hypercycles. We will omit the term tight for conciseness.

	An \emph{$\ell$-hyperclique} in a $k$-uniform hypergraph is a set of $\ell>k$ vertices $v_1,\ldots,v_\ell$ such that all subsets of $k$ of them $v_{i1},\ldots,v_{ik}$ form a hyperedge.

	A \emph{ $k$-circle-layered graph} is a $k$-partite directed graph $G$ where edges only exist between adjacent partitions. 
	More formally the vertices of $G$ can be partitioned into $k$ groups such that $V = V_1 \cup \ldots \cup V_k$ and $V_i \cap V_j = \emptyset$ if $i \ne j$. The only edges from a partition $V_i$  go to the partition $V_{i+1 \mod k}$. 

\paragraph{Hardness Hypotheses.}
We will state several hardness hypotheses that we will be using.

The first concerns the Min Weight $k$-Clique problem.
Min Weight $3$-Clique is known to be equivalent to APSP and other problems~\cite{williams2010subcubic}, and no truly subcubic algorithms are known for the problem. This issue extends to larger cliques: if the edge weights are large enough, no significantly faster algorithms than the brute-force algorithm are known. This motivates the following hypothesis used as the basis of hardness in prior work (see e.g.~\cite{BackursT16,abboud2014consequences}).
 

\begin{reminder}{Hypothesis~\ref{hyp:minkClique}} \emph{(Min Weight $k$-Clique Hypothesis).} There is a constant $c$ such that, on a Word-RAM with $O(\log n)$ bit words, finding a $k$-Clique of minimum total edge weight in an $n$-node graph with nonnegative integer edge weights bounded by $n^{ck}$ requires $n^{k-o(1)}$ time.  \end{reminder}

The exact weight version of the $k$-clique problem is at least as hard as Min Weight $k$-Clique~\cite{williams2013finding}, so that if the previous hypothesis is true, then so is the following one. For $k=3$, the Exact $3$-Clique problem is known to be at least as hard as both APSP and $3$-SUM, making the following hypothesis even more believable.

\begin{hypothesis}[Exact Weight $k$-Clique] There is a constant $c$ such that,
on a Word-RAM with $O(\log n)$ bit words, finding a $k$-Clique of total edge weight exactly $0$, in an $n$-node graph with integer edge weights bounded in $[-n^{ck},n^{ck}]$ requires $n^{k-o(1)}$ time.
\end{hypothesis}

Let $k\geq 3$ be an integer. The following hypothesis concerns the Max-$k$-SAT problem. The brute-force algorithm for Max-$k$-SAT on $n$ variables and $m$ clauses runs in $O(2^n m)$ time. There have been algorithmic improvements for the approximation of Max-$k$-SAT \cite{AsanoW02,AvidorBZ05,FeigeG95} and Max-2-SAT \cite{thesis}.
No $O(2^{(1-\eps)n})$ time algorithms are known for any $\eps>0$ for $k\geq 3$. Williams~\cite{Will05,thesis} showed that Max-$2$-SAT does have a faster algorithm running in $O(2^{\omega n/3}\cdot\poly (mn))$ time, however the algorithm used can not extend to Max $k$ SAT for $k>2$ (see the discussion in Section \ref{sec:genMatrixMult}).

\begin{hypothesis}[Max-$k$-SAT Hypothesis] On a Word-RAM with $O(\log n)$ bit words, given a $k$-CNF formula on $n$ variables, finding a Boolean assignment to the variables that satisfies a maximum number of clauses, requires $2^{n-o(n)}$ time.
\end{hypothesis}

The Max-$k$-SAT hypothesis implies the following hypothesis about hyperclique detection, as shown by Williams~\cite{thesis} for $k=3$ (see Appendix for the generalization for $k>3$). Williams~\cite{thesis} in fact showed that hyperclique detection solves even more difficult problems such as Satisfiability of Constraint Satisfaction Problems, the constraints of which are given by degree $k$ polynomials defining Boolean functions on the $n$ variables. Thus if the following hypothesis is false, then more complex MAX-CSP problems than MAX-$k$-SAT can be solved in $O(2^{(1-\eps)n})$ time for $\eps>0$.


\begin{reminder}{Hypothesis ~\ref{hyp:lkhypercliqef}} \emph{($(\ell,k)$-Hyperclique Hypothesis)}. Let $\ell>k>2$ be integers. On a Word-RAM with $O(\log n)$ bit words, finding an $\ell$-hyperclique in a $k$-uniform hypergraph on $n$ nodes requires $n^{\ell-o(1)}$ time. \end{reminder}

Abboud et al.~\cite{abboudunpub} have shown (using techniques from \cite{AbboudLW14}) that if the $(\ell,4)$-Hyperclique Hypothesis is false for some $\ell$, then the Exact Weight $\ell$-Clique Hypothesis is also false. Thus, the Hyperclique Hypothesis should be very believable even for $k=4$. The hypergraphs we are considering are dense ($m = \Theta(n^k)$). Hyperclique can be solved faster in hypergraphs where $m = o(n^{\ell/(\ell-1)})$ \cite{GaoIKW17}.

\paragraph{Simple $k$-Cycle Reductions.}

Note that throughout this paper we will use the fact that the $k$-cycle and $k$-clique problems we consider are as hard in $k$-partite graphs as they are in general graphs. Furthermore, the $k$-cycle problems we consider are as hard in $k$-circle layered graphs as they are in general graphs. Using the $k$-partite or $k$-circle layered versions often makes reductions more legible. 
%

$k$-cycle has different behavior when $k$ is even and odd. To get some results we will use a simple reduction from $k$ cycle to $k+1$ cycle. 

\begin{lemma} 
	Let $G=(V,E)$ be an $n$ node $m$ edge $k$-circle-layered graph. Suppose further that the edges have integer weights in $\{-M,\ldots,M\}$.
	Then in $O(m+n)$ time one can construct a $k+1$-partite directed graph $G'$ on $\leq 2n$ nodes and $\leq n+m$ edges with weights in $\{-M,\ldots,M\}$, so that $G'$ contains a directed $(k+1)$-cycle of weight $Y$ if and only if $G$ contains a directed $k$-cycle of weight $Y$.
	\label{lemma:evenodd}
\end{lemma}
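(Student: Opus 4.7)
The plan is to insert an extra ``dummy'' vertex into every cycle by splitting one of the layers. Concretely, I would fix the last partition $V_k$ of $G$ and introduce a fresh partition $V_{k+1}=\{v'\mid v\in V_k\}$ in $G'$, consisting of one copy $v'$ of each vertex $v\in V_k$. The remaining partitions $V_1,\dots,V_k$ of $G'$ are the same as in $G$. For edges I would keep every edge of $G$ whose head lies in $V_2,\dots,V_k$ unchanged (with its original weight); for each original edge $(v,u)\in E$ with $v\in V_k$ and $u\in V_1$, I would instead put an edge $(v',u)$ in $G'$ carrying the same weight as $(v,u)$; and finally, for every $v\in V_k$, I would add a new edge $(v,v')$ of weight $0$. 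This makes $G'$ a $(k{+}1)$-partite directed graph with layers $V_1,\dots,V_{k+1}$, all edges going between consecutive layers, with $|V(G')|\leq 2n$, $|E(G')|\leq m+n$, and weights still in $\{-M,\dots,M\}$. The construction is clearly $O(m+n)$ time.

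Next I would establish a weight-preserving bijection between directed $k$-cycles in $G$ and directed $(k{+}1)$-cycles in $G'$. For the forward direction, given a $k$-cycle $v_1\to v_2\to\cdots\to v_k\to v_1$ in $G$ with $v_i\in V_i$, I would produce in $G'$ the cycle $v_1\to v_2\to\cdots\to v_k\to v_k'\to v_1$; the first $k{-}1$ edges have the same weight as in $G$, the new edge $(v_k,v_k')$ has weight $0$, and the edge $(v_k',v_1)$ carries the weight of $(v_k,v_1)$, so the total weight is preserved.

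For the reverse direction, any directed $(k{+}1)$-cycle in $G'$ must, because of the layered structure, use exactly one vertex from each layer; write it as $v_1\to v_2\to\cdots\to v_k\to w\to v_1$ with $w\in V_{k+1}$. The only edges entering $V_{k+1}$ are the zero-weight edges $(v,v')$ for $v\in V_k$, so $w=v_k'$. By construction the edge $(v_k',v_1)$ exists in $G'$ iff $(v_k,v_1)$ exists in $G$, and with the same weight. Therefore $v_1\to v_2\to\cdots\to v_k\to v_1$ is a $k$-cycle in $G$ of the same weight, and distinctness of the $v_i$ follows from the layering. This bijection yields the ``cycle of weight $Y$'' equivalence in the statement.

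The only minor subtlety is a notational one: when writing the proof I should be explicit that every $(k{+}1)$-cycle in $G'$ is automatically simple because the layers are disjoint, and that the single zero-weight ``copy'' edge $(v_k,v_k')$ forces $w=v_k'$, which is what makes the reduction clean. There is no real obstacle; the argument is essentially a subdivision trick adapted to the weighted, layered setting, so the write-up is short and mostly verification of the counts and the bijection.
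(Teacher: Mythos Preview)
Your proposal is correct and is essentially the same construction as the paper's: the paper also splits one layer (they pick $V_2$ rather than $V_k$) into two copies joined by a zero-weight edge, redirecting incoming edges to the first copy and outgoing edges from the second. Your write-up is more detailed in verifying the bijection, but the underlying idea is identical.
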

\begin{proof}
	Take $V_2$, say, and split every node $v\in V_2$ into $v_0$ and $v_1$, placing a directed edge $(v_0, v_1)$ of weight $0$ and splitting the edges incident to $v$ among $v_0$ and $v_1$, so that $v_0$ gets all edges incoming from $V_1$ and $v_1$ gets all edges outgoing to $V_2$.
\end{proof}

An immediate corollary is:

\begin{corollary} 
	Suppose that there is a $T(n,m)$-time algorithm that can detect a (min-weight/ $0$-weight/ unweighted) $k+1$-cycle in a $k+1$-circle-layered directed $n$-node, $m$-edge graph, then there is a $O(m+n)+T(2n,m+n)$ time algorithm that can detect a (min-weight/ $0$-weight/ unweighted) $k$-cycle in a $k$-circle-layered $n$-node, $m$-edge directed graph.
\end{corollary}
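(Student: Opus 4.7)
The proof is essentially a direct invocation of Lemma~\ref{lemma:evenodd}, so the plan is simply to feed its output into the hypothesized algorithm. Given a $k$-circle-layered directed graph $G$ on $n$ nodes and $m$ edges (with weights in $\{-M,\ldots,M\}$ for the min-weight or $0$-weight versions, or unweighted otherwise), I would first invoke Lemma~\ref{lemma:evenodd} to build in $O(m+n)$ time a graph $G'$ on at most $2n$ nodes and $m+n$ edges such that $G'$ has a $(k+1)$-cycle of weight $Y$ iff $G$ has a $k$-cycle of weight $Y$. Then I would run the assumed $T(\cdot,\cdot)$-time algorithm for $(k+1)$-cycle detection on $G'$, for a total running time of $O(m+n)+T(2n,m+n)$.

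The only thing that requires a small observation beyond the literal statement of Lemma~\ref{lemma:evenodd} is that $G'$ is not merely $(k+1)$-partite but actually $(k+1)$-circle-layered, since the corollary's hypothesis applies only to circle-layered inputs. This is, however, transparent from the construction: the only modification is to split $V_2$ into $V_2^{(0)}$ and $V_2^{(1)}$, with all incoming edges from $V_1$ redirected into $V_2^{(0)}$, a zero-weight bridging edge from $V_2^{(0)}$ to $V_2^{(1)}$, and all outgoing edges to $V_3$ originating from $V_2^{(1)}$. The resulting partition sequence $V_1,\,V_2^{(0)},\,V_2^{(1)},\,V_3,\,\ldots,\,V_k$ has edges only between consecutive classes (and from $V_k$ back to $V_1$), so the graph is $(k+1)$-circle-layered as required.

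Correctness across the three cases (min-weight, exact $0$-weight, and unweighted detection) follows uniformly from the weight-preserving bijection between $k$-cycles of $G$ and $(k+1)$-cycles of $G'$ guaranteed by Lemma~\ref{lemma:evenodd}: in the min-weight case, the minimum weights agree; in the exact-weight case, a $0$-weight cycle exists in one graph iff it exists in the other; and in the unweighted case, simply apply the construction with all weights set to $0$ and interpret existence directly. There is no real obstacle here; the statement is just the contrapositive reading of Lemma~\ref{lemma:evenodd} as an algorithmic reduction, modulo the brief check that the reduction preserves the circle-layered structure.
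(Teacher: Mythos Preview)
Your proposal is correct and matches the paper's approach; the paper simply states this as an immediate corollary of Lemma~\ref{lemma:evenodd} without further argument. Your additional observation that the construction yields a $(k+1)$-circle-layered (not merely $(k+1)$-partite) graph is a worthwhile clarification that the paper leaves implicit.
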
 

The following Lemma allows us to assume that all graphs that we are dealing with are circle-layered. 

\begin{lemma} 
	Suppose that a (min-weight/ $0$-weight/ unweighted) $k$-cycle can be detected in $T(n,m)$ time in a $k$-circle-layered directed graph where the edges have integer weights in $\{-W,\ldots,W\}$.
	Then in $\tilde{O}(k^k (m+n+T(m,n)))$ time 
	one can detect a(min-weight/ $0$-weight/ unweighted) $k$-cycle in a directed graph $G$ (not necessarily $k$-circle-layered) on $n$ nodes and $ m$ edges with weights in $\{-W,\ldots,W\}$.
	\label{lem:colorcode}
\end{lemma}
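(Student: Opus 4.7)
The plan is to apply the color-coding technique of Alon, Yuster, and Zwick. I would begin by assigning each vertex $v \in V$ a uniformly random color $c(v) \in \{1, \ldots, k\}$, independently. Let $G_c$ be the spanning subgraph of $G$ retaining exactly those edges $(u,v)$ satisfying $c(v) \equiv c(u) + 1 \pmod{k}$. By construction, $G_c$ is $k$-circle-layered with parts $V_i := c^{-1}(i)$, and every edge of $G_c$ is an edge of $G$ with the same weight, so running the assumed $T(n,m)$-time algorithm on $G_c$ returns only genuine (min-weight / $0$-weight / unweighted) $k$-cycles of $G$. Building $G_c$ takes $O(m+n)$ time per trial.

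The next step is the probabilistic analysis. A fixed directed $k$-cycle $v_1 \to v_2 \to \cdots \to v_k \to v_1$ of $G$ survives in $G_c$ precisely when $c(v_{i+1}) \equiv c(v_i)+1 \pmod{k}$ for every $i$. Once $c(v_1)$ is chosen, the other colors are forced, so exactly $k$ out of $k^k$ colorings of this vertex set are "good," giving survival probability $k^{1-k}$ per trial. Repeating with $\Theta(k^{k-1} \log n)$ independent colorings drives the failure probability below any fixed polynomial in $1/n$; the target cycle then appears in $G_c$ for at least one trial and is found by the assumed algorithm. This already yields a randomized reduction with overhead $\tilde{O}(k^{k-1})$, well within the claimed $\tilde{O}(k^k)$.

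For a deterministic version, I would use a standard $(n,k)$-perfect hash family $\mathcal{F} \subseteq [k]^V$ of size $2^{O(k)} \log n$ (as in Naor--Schulman--Srinivasan, or the construction of Alon--Yuster--Zwick). For each $f \in \mathcal{F}$ and each permutation $\pi$ of $[k]$, I would set $c(v) := \pi(f(v))$ and run the circle-layered algorithm on the resulting $G_c$. For any fixed $k$-cycle, some $f \in \mathcal{F}$ is injective on its vertex set, and among the $k!$ permutations at least one (in fact exactly $k$) remaps the induced colors to respect the cyclic order required by the layering. The total number of invocations is $|\mathcal{F}| \cdot k! = \tilde{O}(k!) = \tilde{O}(k^k)$ by Stirling, giving a deterministic running time of $\tilde{O}(k^k(m+n+T(n,m)))$ as required.

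The only subtlety is that a bare perfect hash family only guarantees distinct colors on a given $k$-set; it does not ensure those colors land in the cyclic pattern $(c(v_i))_{i} \equiv (j, j+1, j+2, \ldots) \pmod{k}$ demanded by the circle-layering. This is precisely the reason to iterate over all $k!$ permutations of $[k]$ in addition to $\mathcal{F}$, and it is where the $k^k$ factor (rather than $2^{O(k)}$) enters. Aside from this bookkeeping, the reduction is transparent: weights, directions, and cycle lengths are preserved edge-by-edge into $G_c$, so the same argument handles the min-weight, zero-weight, and unweighted variants of $k$-cycle detection uniformly.
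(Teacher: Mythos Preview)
Your proposal is correct and takes essentially the same approach as the paper: random color-coding into $k$ colors, keeping only edges that advance the color by one modulo $k$, and repeating $\tilde{O}(k^k)$ times (with derandomization via perfect hash families). Your write-up is in fact more careful than the paper's---you observe the sharper survival probability $k^{1-k}$ and spell out explicitly why the derandomization needs the extra $k!$ permutation factor on top of the perfect hash family---but the underlying argument is identical.
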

\begin{proof}
We use the method of color-coding~\cite{AlonYZ16}. We present the randomized version, but this can all be derandomized using $k$-perfect families of hash functions, resulting in roughly the same runtime.
Every node in the graph selects a color from $\{1,\ldots,k\}$ independently uniformly at random. 
We take the original graph and we only keep an edge $(u,v)$ if $c(v)=c(u)+1 \mod k$ and we remove edges that do not satisfy this condition. The created subgraph $G'$ is $k$-partite - there is a partition for each color, and by construction, the edges only go between adjacent colors, so that the graph is $k$-circle-layered.

Since $G'$ is a subgraph of $G$, if $G'$ has a $k$-cycle $C$, then $C$ is also a $k$-cycle in $G$. 
Suppose now that $G$ has a $k$-cycle $C=\{u_1,\ldots,u_k\}$. If for each $i$, $c(u_i)=i$, then $C$ is preserved in $G'$. Thus, $C$ is preserved with probability at least $1/k^k$, and repeating $O(k^k \log n)$ times, we will find $C$ whp.
  %
   %
\end{proof}


\section{Reduction from Hyperclique to Hypercycle}
\label{sec:redFromHyperCliquetoHypercycle}

In this section we will reduce the problem of finding an $\ell$-hyperclique in a $k$-uniform hypergraph to finding an $\ell$-hypercycle in a $\gamma(\ell,k)$-uniform hypergraph for some function $\gamma$ which is roughly $(k-1)\ell/k$.

By a color-coding argument we can assume that the hypergraph is $k$-partite- the vertex set is partitioned into $k$ parts $\{V_i\}$ so that no hyperedge contains two nodes in the same $V_i$. The color-coding approach reduces the hyperclique problem to $2^{O(k)}\log n$ instances of the $k$-partite hyperclique problem.  A simple randomized approach assigns each vertex a random color from $\{1,\ldots,k\}$, and then part $V_i$ includes the vertices colored $i$. One removes all hyperedges containing two vertices colored the same and argues that any particular $k$-hyperclique has all its vertices colored differently with probability $1/(2e)^k$. Thus $2^{O(k)}\log n$ instances of the $k$-partite hyperclique problem suffice with high probability. The approach can be derandomized with standard techniques.

In the following theorem an arc will refer to a valid partial list of nodes from a hyperclique or hypercycle. This usage is attempting to get across the intuition that a set of nodes in a hyperclique can be covered by a small number of overlapping sets if those sets are large enough. See Figure \ref{fig:coverhyperportion} for an image depiction. 

We will hence prove the following theorem:

\begin{theorem}
Let $G$ be a $k$-uniform hypergraph on $n$ vertices $V$, partitioned into $\ell$ parts $V_1,\ldots,V_\ell$. 
Let $\gamma=\ell-\lceil \ell/k\rceil +1$.
In $O(n^\gamma)$ time we can create a $\gamma$-uniform hypergraph $G'$ on the same node set $V$ as $G$, so that $G'$ contains an $\ell$-hypercycle if and only if $G$ contains an $\ell$-hyperclique with one node from each $V_i$.

If $G$ has weights on its hyperedges in the range $[-W,W]$, then one can also assign weights to the hyperedges of $G'$ so that a minimum weight $\ell$-hypercycle in $G'$ corresponds to a minimum weight $\ell$-hyperclique in $G$ and every edge in the hyperclique has weight between $[-\binom{\gamma}{k}W,\binom{\gamma}{k}W]$. Notably, $\binom{\gamma}{k}\leq O(\ell^k)$.

\label{thm:hypercycleHyperclique}
\end{theorem}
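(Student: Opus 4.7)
I would put $G'$ on the same vertex set $V$, arrange the parts $V_1,\ldots,V_\ell$ cyclically, and declare a $\gamma$-set $S$ to be a hyperedge of $G'$ exactly when $S$ uses one vertex from each of $\gamma$ cyclically consecutive parts $V_i, V_{i+1},\ldots, V_{i+\gamma-1}$ (indices mod $\ell$) \emph{and} every one of the $\binom{\gamma}{k}$ $k$-subsets of $S$ is already a hyperedge of $G$. In the weighted case the weight of $S$ will be a carefully chosen partial sum of $G$-weights so that totaling along a hypercycle reproduces the hyperclique weight.

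\textbf{Combinatorial lemma.} Any $k$ points on a cycle of length $\ell$ split the cycle into $k$ gaps summing to $\ell$, so the largest gap is at least $\lceil\ell/k\rceil$ and its complementary arc covers all $k$ points using at most $\ell-\lceil\ell/k\rceil+1=\gamma$ consecutive slots. Thus every $k$-subset of positions on an $\ell$-hypercycle lies inside at least one $\gamma$-window of the hypercycle.

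\textbf{Correctness.} For the forward direction, given an $\ell$-hyperclique $v_1,\ldots,v_\ell$ of $G$ with $v_i\in V_i$, each cyclic window $\{v_i,v_{i+1},\ldots,v_{i+\gamma-1}\}$ consists entirely of $k$-subsets of the hyperclique and so is a hyperedge of $G'$, making $(v_1,\ldots,v_\ell)$ a hypercycle of $G'$. For the reverse direction, in any hypercycle $(u_1,\ldots,u_\ell)$ of $G'$ consecutive windows share $\gamma-1$ vertices, so their part-label sets are two cyclically-consecutive $\gamma$-intervals differing in a single element; this forces the labels to form a monotone cyclic progression (a direction reversal would produce a non-consecutive label set in the next window, violating our construction), so the labels are a full cyclic permutation of $\{1,\ldots,\ell\}$ and the hypercycle hits each $V_i$ exactly once. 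For any $k$-subset $T$ of these $\ell$ vertices, the combinatorial lemma places $T$ inside some $\gamma$-window of the hypercycle, and that window's defining condition forces $T$ to be a hyperedge of $G$, so $\{u_1,\ldots,u_\ell\}$ is an $\ell$-hyperclique of $G$ with one node per part.

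\textbf{Time and weights.} Construction takes $O(\ell\cdot n^\gamma\cdot\binom{\gamma}{k})=O(n^\gamma)$ by enumerating the $\ell$ starting labels, the $\leq n^\gamma$ vertex choices with one vertex per consecutive part, and the $O(1)$ $k$-subsets to verify. The main obstacle is the weighted case: a given $k$-subset of cyclic positions typically lies inside several $\gamma$-windows, so the naive summation of $\binom{\gamma}{k}$ $G$-weights per hyperedge would multiply-count when summed along a hypercycle. I would resolve this by canonically assigning each $k$-subset of positions $\{1,\ldots,\ell\}$ to a single $\gamma$-window containing it (for instance the one starting immediately after its largest gap, ties broken lexicographically) and defining the weight of each $G'$-hyperedge to be the sum of $G$-weights of only those $k$-subsets whose canonical window matches this hyperedge's window (readable from its vertex labels). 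Each $k$-subset of the hyperclique then contributes exactly once to the hypercycle's total weight, and each hyperedge of $G'$ carries weight in $[-\binom{\gamma}{k}W,\binom{\gamma}{k}W]\subseteq[-O(\ell^k)W,O(\ell^k)W]$, yielding the claimed bounds.
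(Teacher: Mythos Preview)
Your proof is correct and follows essentially the same route as the paper: the same covering lemma (the largest gap among $k$ marked points on an $\ell$-cycle has size at least $\lceil\ell/k\rceil$, so the complementary arc of length $\gamma$ contains all of them), the same construction of $G'$ (a $\gamma$-hyperedge on consecutive parts iff all its $k$-subsets are $G$-edges), and the same weight-assignment trick (each $k$-edge of $G$ is charged to a single canonical $\gamma$-window so hypercycle weight equals hyperclique weight).

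One remark: you go beyond the paper in arguing why an arbitrary $\ell$-hypercycle of $G'$ must hit every part exactly once (the paper simply writes ``suppose $a_1\in V_1,\ldots,a_\ell\in V_\ell$'' without justification). Your sketch is the right idea but slightly incomplete: from the fact that consecutive windows share $\gamma-1$ vertices you conclude the label-intervals \emph{differ} by one shift, but a priori they could coincide (shift zero), which your ``direction reversal'' clause does not obviously exclude. The clean fix uses that $\gamma>\ell/2$ here: then $u_j$ and $u_{j+\gamma}$ lie together in some common window (their $\gamma$-long window-ranges wrap around and overlap), so they cannot share a part, ruling out shift zero; and a $+1$ step followed by a $-1$ step would force $p(u_{j+1})$ to lie outside $I_j$, ruling out reversals. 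Hence all shifts are $+1$ (or all $-1$), and the hypercycle visits each part once. With that addition your argument is complete.
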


\begin{proof}
Consider the numbers $1,\ldots,\ell$ written in order around a circle and let $i_1<i_2<\ldots<i_k$ be any $k$ of them. 
We are interested in covering all these $k$ numbers by an arc of the circle. 
What is the least number of numbers from $1$ to $\ell$ an arc covers if it covers all the $i_j$?

It's not hard to see that the arc starts at one of the $i_j$, goes clockwise and ends at $i_{j-1}$ (indices mod $k$).
Let $s(j)$ be the number of numbers strictly between $i_{j-1}$ an $i_{j}$.
The number of numbers that the arc contains is thus $\ell-s(j)$, and that the best arc 
picks the $j$ that maximizes $s(j)$.

\begin{figure}[h]
	\centering
	\includegraphics[scale=0.5]{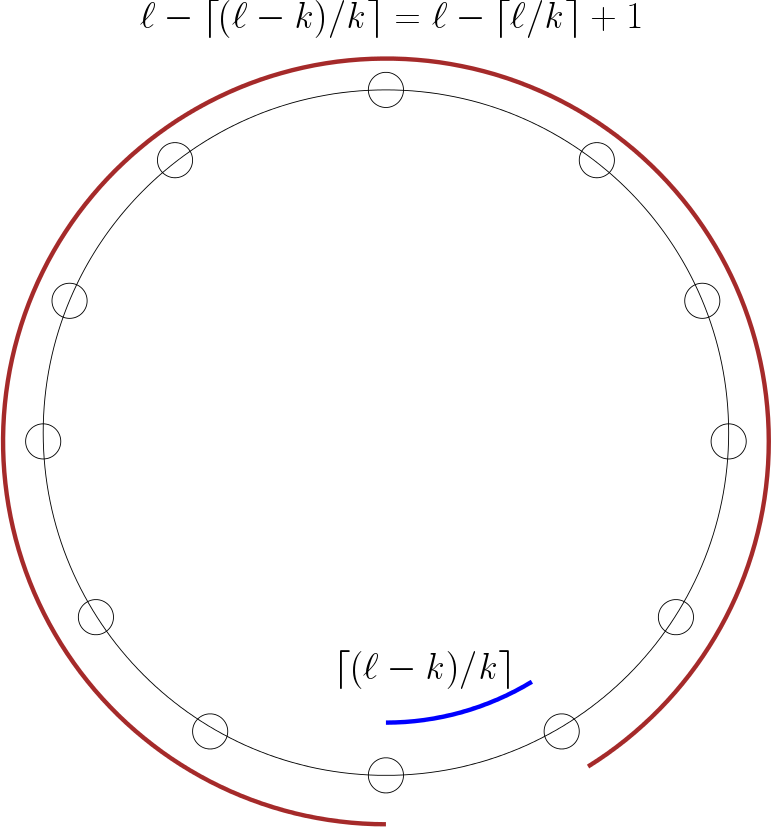}
	\caption{A depiction of why hypercycle needs sets of size $\ell-\lceil \ell/k\rceil +1$ to cover every choice of $k$ elements.}
	\label{fig:coverhyperportion}
\end{figure}

The sum $\sum_{j=1}^k s(j)$ equals $\ell-k$, and hence the maximum $s(j)$ is at least the average and is thus $\geq \lceil (\ell-k)/k\rceil$. Hence the best arc has at most $\ell-\lceil \ell/k\rceil+1$ numbers. See Figure \ref{fig:coverhyperportion}.

Now, let $G$ be the given $\ell$-partite $k$-uniform hypergraph in which we want to find an $\ell$-hyperclique. Let $V_1,\ldots,V_\ell$ be the vertex parts and let $E$ be the set of $k$-hyperedges. We will build a new hypergraph on the same set of nodes but with hyperedges of size $\gamma=\ell-\lceil \ell/k\rceil+1$ as follows.

Consider each $i\in [\ell]$ and every choice of nodes $u_i\in V_i, u_{i+1}\in V_{i+1},\ldots, u_{i+\gamma-1}\in V_{i+\gamma-1}$ call the set of chosen nodes $U$, i.e. nodes in $\gamma$ consecutive parts (mod $\ell$). We need only consider the sets of $\gamma$ consecutive parts because every subset of size $k$ will be contained in one of these sets, by our choice of $\gamma$. We add a hyperedge between the nodes in $U$ if every size $k$ subset of $U$ forms a hyperedge in $G$. That is, we create a big hyperedge in $G'$ if all the $k$-tuples contained in it form a hyperedge in $G$. The runtime to create $G'$ is $O(n^\gamma)$ as is the number of hyperedges created.  Clearly $G'$ is $\gamma$-uniform. 

Now suppose that $a_1\in V_1,\ldots,a_\ell\in V_\ell$ is an $\ell$-hyperclique in $G$. All the hyperedges $(a_i,\ldots,a_{i+\gamma-1})$ are present in $G'$ so $a_1,\ldots,a_\ell$ forms an $\ell$-hypercycle in $G'$.

Now suppose that $a_1\in V_1,\ldots,a_\ell$ is an $\ell$-hypercycle in $G'$. Consider $A=\{a_1,\ldots,a_\ell\}$ in $G$. We will show that it is an $\ell$-hyperclique. Let $a_{i_1},\ldots,a_{i_k}$ for $i_1<i_2<\ldots<i_k$ be any $k$ nodes of $A$. 

Let $t$ be the index that maximizes $s(i_t)$ as in the beginning of the proof. Then, $\{i_t,i_t+1,\ldots,i_{t-1}\}$ (which contains all $i_1,\ldots,i_k$) contains at most $\gamma$ nodes and is thus contained in $\{i_t,i_t+1,\ldots,i_{t+\gamma-1}\}$ which is a hyperedge in $G'$ since $a_1\in V_1,\ldots,a_\ell$ is an $\ell$-hypercycle in $G'$. However by the way we constructed the hyperedges, it must be that $(a_{i_1},\ldots,a_{i_k})$ is a hyperedge of $G$. Thus all $k$-tuples are hyperedges in $G$ and $\{a_1,\ldots,a_\ell\}$ is an $\ell$-hyperclique in $G$.

So far we have shown that we can construct a hypergraph so that the $\ell$-hypercliques in $G$ correspond to the $\ell$-hypercycles in $G'$. 
Suppose now that $G$ is a hypergraph with weights on its hyperedges. We will define weights for the hyperedges of $G'$ so that the weight of any $\ell$-hypercycle in $G'$ equals the weight of the $\ell$-hyperclique in $G$ that it corresponds to.
To achieve this, we will assign each hyperedge $y$ of $G$ to some hyperedges $E(y)$ of $G'$ and we will say that these hyperedges are responsible for $y$. Then we will set the weight of a hyperedge $e$ of $G'$ to be the sum of the weights of the hyperedges of $G'$ that it is responsible for. We will guarantee that for any hypercycle of $G'$, no two hyperedges in it are responsible for the same hyperedge of $G$, and that every hyperedge of the hyperclique that the hypercycle is representing is assigned to some of the hypercycle hyperedges.

Consider any hyperedge of $G$, $A=(a_{i_1},\ldots,a_{i_k})$ with $a_{i_j}\in V_{i_j}$. Let $i_t$ be the smallest index that maximizes $s(i_t)$. We assign $A$ to every hyperedge of $G'$ contained in $V_{i_t}\times V_{i_t+1}\times \ldots \times V_{i_{t+\gamma-1}}$ that intersects $V_{i_j}$ exactly at $a_{i_j}$. Then notice that any $\ell$-hypercycle that contains $a_{i_1},\ldots,a_{i_k}$ contains exactly one of these hyperedges, so that the weight of the hypercycle is exactly the weight of the hyperclique that it corresponds to. Since every hyperedge of $G'$ contains $\binom{\gamma}{k}$ hyperedges of $G$, the weights of the hyperedges lie in $[-W \binom{\gamma}{k},W \binom{\gamma}{k}]$.

%
%
%
\end{proof}


\section{Reduction from Hypercycle to Cycle in Directed Graphs}
\label{sec:hyperToCycle}

We have shown hardness for hypercycle from hyperclique. However, in order to get results on cycles in normal graphs we have to show that hypercycle can be solved efficiently with cycles in graphs. We do so below. 

\begin{lemma}
	Given an $n$-node $\lambda$ uniform hypergraph $H$ with nodes partitioned into $V_1,\ldots, V_k$ in which one wants to find a $k$-hypercycle $(v_1,v_2,\ldots,v_k)$ with $v_j\in V_j$ for each $j$, one can in $\tilde{O}(n^{\lambda})$ time create a $k$-circle-layered directed graph $G$ on $O(n^{\lambda-1})$ nodes and $O(n^{\lambda})$ edges, so that $H$ contains a $k$-hypercycle with one node in each partition $V_i$ if and only if $G$ contains a directed $k$-cycle.
	Moreover, if $H$ has integer weights on its edges bounded by $M$, then one can add integer edge weights to the edges of the graph $G$, bounded by $M$, so that the minimum weight $k$-cycle in $G$ has the same weight as the minimum weight $k$-hypercycle in $H$.
	
If $k$ is odd, the $G$ can be made undirected.	
	\label{lem:directedCycleHyperCycle}
\end{lemma}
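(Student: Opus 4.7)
The plan is to simulate hyperedges by ordinary directed edges between ``super-nodes'' that remember a sliding window of $\lambda-1$ consecutive hypercycle vertices. Concretely, I will build $G$ as a $k$-circle-layered digraph where the $i$-th part $W_i$ consists of all $(\lambda-1)$-tuples $(u_1,\ldots,u_{\lambda-1})$ with $u_j\in V_{i+j-1 \bmod k}$. This gives $|W_i|\le n^{\lambda-1}$, hence $|V(G)|=O(n^{\lambda-1})$. For each partition $i$, I add a directed edge from $(u_1,\ldots,u_{\lambda-1})\in W_i$ to $(w_1,\ldots,w_{\lambda-1})\in W_{i+1}$ iff the last $\lambda-2$ coordinates of the first tuple equal the first $\lambda-2$ coordinates of the second (i.e.\ $u_{j+1}=w_j$ for $1\le j\le \lambda-2$) and $e=(u_1,u_2,\ldots,u_{\lambda-1},w_{\lambda-1})$ is a hyperedge of $H$. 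The edge receives weight equal to the weight of $e$. Each such edge is in one-to-one correspondence with a hyperedge of $H$ whose vertices lie in $\lambda$ consecutive parts $V_i,\ldots,V_{i+\lambda-1}$, so $|E(G)|=O(n^{\lambda})$ and the construction runs in $\tilde O(n^{\lambda})$ time (with hashing to look up hyperedges).

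For correctness, a $k$-hypercycle $(v_1,\ldots,v_k)$ in $H$ gives $k$ super-nodes $T_i=(v_i,v_{i+1},\ldots,v_{i+\lambda-2})\in W_i$ whose consecutive pairs $T_i\to T_{i+1}$ use precisely the hyperedge $(v_i,\ldots,v_{i+\lambda-1})$, producing a directed $k$-cycle in $G$ whose total weight equals the weight of the hypercycle. Conversely, a directed $k$-cycle in $G$ must cross each partition exactly once (by the circle-layered structure and simplicity), and the overlap consistency forced by our edges means the first coordinates of the successive super-nodes trace out a sequence $(v_1,\ldots,v_k)$ in which every $\lambda$ consecutive entries form a hyperedge of $H$; since each partition $W_i$ enforces $v_i\in V_i$, this is exactly a $k$-hypercycle in $H$, and the weight is preserved edge-by-edge. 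All weights stay bounded by $M$ because they are copied unchanged from hyperedges.

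For the final (undirected) claim when $k$ is odd, I will undirect $G$ and argue any simple $k$-cycle in the undirected graph corresponds to a directed $k$-cycle in the original oriented graph. Since $G$ is $k$-circle-layered, consecutive vertices of any cycle lie in adjacent parts (say $\pm 1 \bmod k$ along the cycle $C_k$). A simple $k$-cycle visits $k$ distinct super-nodes, so its sequence of ``part-shifts'' is a walk of length $k$ on $C_k$ with steps in $\{+1,-1\}$ that returns to its starting partition. The sum of these $k$ signs must be $0 \bmod k$ and lies in $[-k,k]$, so it equals $\pm k$; with $k$ odd, the sum cannot vanish, forcing all steps to be $+1$ or all $-1$. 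Either orientation reproduces a directed $k$-cycle (the reverse orientation of $G$ also has the desired $k$-cycles, since reversing all edges preserves the same set of hypercycles).

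The only delicate step is the undirected case: one must check that an undirected simple $k$-cycle cannot ``backtrack'' within two parts; the parity argument on $C_k$ above rules this out precisely when $k$ is odd, which is exactly the hypothesis. The bookkeeping on sizes and weights is straightforward, so I expect the whole argument to be short once the super-node construction is written down.
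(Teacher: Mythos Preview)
Your proposal is correct and follows essentially the same approach as the paper: the super-node construction (tuples of $\lambda-1$ consecutive hypercycle vertices) and the overlap-plus-hyperedge edge rule are exactly what the paper uses, with weights copied from hyperedges. The only cosmetic difference is in the odd-$k$ undirected claim: the paper argues that deleting any part $W_i$ leaves a bipartite graph (hence any odd cycle must hit every part), whereas you use a parity/winding argument on the $\pm 1$ shifts; both are short and standard.
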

\begin{proof}
Recall that a $k$-hypercycle in a $\lambda$-uniform hypergraph is formed by having a list of $k$ nodes $v_1,v_2,\ldots,v_k$ and having a hyperedge for all choices of $i\in[1,k]$ formed by the set $v_i,v_{i+1},\ldots,$ $ v_{i+\lambda-1}$ where we consider indices mod $k$.

We describe the construction of the directed graph $G$. It will be $k$-circle-layered with node parts $U_1,\ldots,U_k$.
For each $i\in \{1,\ldots, k\}$, we will add a node in part $U_i$ of $G$ for every choice of $\lambda-1$ nodes $v_i,\ldots,v_{i+\lambda-2}$ such that $v_i \in V_i$, $v_{i+1} \in V_{i+1}$ , $\ldots$, $v_{i+\lambda-2} \in V_{i+\lambda-2}$. This totals $n^{\lambda-1}$ nodes. Call this node $((v_i,\ldots,v_{i+\lambda-2}))$.

We will add a directed edge in $G$ between nodes $((v_i,\ldots,v_{i+\lambda-2}))$ and $((v'_{i+1},\ldots,v'_{i+\lambda-1}))$ if $v'_j = v_j$ for $j\in[i+1,i+\lambda-2]$ and $\{v_i,\ldots,v_{i+\lambda-2},v'_{i+\lambda-1}\}$ is a hyperedge in $H$. Assign this edge the weight of the hyperedge $\{v_i,\ldots,v_{i+\lambda-2},v'_{i+\lambda-1}\}$ in $H$. Every node in $G$ can connect to a maximum of $n$ other nodes giving us $|E|= O(n^{\lambda+1})$.

Now note that $G$ is a $k$-circle-layered graph. Further note that if a $k$-cycle exists in $G$ then each of its edges corresponds to a hyper-edge edge in $H$ and the set of vertices represented in the $k$-cycle in $G$ corresponds to a choice of $k$ nodes $v_1,v_2,\ldots,v_k$. Further, every edge of $G$ covers $\lambda$ adjacent vertices from $v_1,v_2,\ldots,v_k$.
	
We also note that if $k$ is odd, then the edges of $G$ can be made undirected: any $k$ cycle in $G$ must have a node from each $U_i$, as removing any $U_i$ from $G$ makes it bipartite, and no odd cycles can exist in a bipartite graph.
\end{proof}

We immediately obtain the following corollaries:

\begin{corollary}
	Let $\lambda = \ell - \ceil{\ell/k}+1$.
	Under the $(\ell,k)$-Hyperclique Hypothesis, min weight $\ell$-cycle in directed graphs (or in undirected graphs for $\ell$ odd) cannot be solved in $O(m^{\ell/\lambda -\eps})$ time for any $\eps>0$ for $m=\Theta(n^{1 + 1/(\lambda-1)})$ edge, $n$ node graphs.
\end{corollary}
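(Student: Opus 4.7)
The plan is to chain the two reductions that were just established in Theorem~\ref{thm:hypercycleHyperclique} and Lemma~\ref{lem:directedCycleHyperCycle}, and then match up the parameters to see that a too-fast cycle algorithm would refute the $(\ell,k)$-Hyperclique Hypothesis. First I would start with an arbitrary $n$-node instance of $\ell$-hyperclique in a $k$-uniform hypergraph. Using the color-coding argument mentioned before Theorem~\ref{thm:hypercycleHyperclique}, with only a $2^{O(k)}\log n$ overhead we may assume the input hypergraph is $\ell$-partite with parts $V_1,\dots,V_\ell$, so that we are hunting for an $\ell$-hyperclique with one vertex from each part.

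Next I would apply Theorem~\ref{thm:hypercycleHyperclique} to obtain, in $O(n^\lambda)$ time, a $\lambda$-uniform hypergraph $G'$ on the same vertex set whose $\ell$-hypercycles are exactly the desired $\ell$-hypercliques, where $\lambda=\ell-\lceil\ell/k\rceil+1$. Then I would feed $G'$ into Lemma~\ref{lem:directedCycleHyperCycle} (with its ``$k$'' taken to be $\ell$ and its ``$\lambda$'' taken to be our $\lambda$). That produces an $\ell$-circle-layered directed graph $G$ on $N=O(n^{\lambda-1})$ nodes and $M=O(n^\lambda)$ edges whose (minimum-weight) $\ell$-cycles are in one-to-one correspondence with the (minimum-weight) $\ell$-hypercycles of $G'$, and for odd $\ell$ the graph can be made undirected.

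The key parameter check is that $M=\Theta(n^{\lambda})=\Theta\bigl(N^{\lambda/(\lambda-1)}\bigr)=\Theta\bigl(N^{1+1/(\lambda-1)}\bigr)$, which is exactly the sparsity regime assumed in the statement of the corollary. Hence any algorithm solving min-weight $\ell$-cycle in time $O(M^{\ell/\lambda-\eps})$ on graphs in this sparsity range would, when run on $G$, detect an $\ell$-hyperclique in time
\[
O\bigl(M^{\ell/\lambda-\eps}\bigr)
=O\bigl(n^{\lambda(\ell/\lambda-\eps)}\bigr)
=O\bigl(n^{\ell-\lambda\eps}\bigr).
\]
Since $\lambda\geq 2$ and $\eps>0$ is a constant, $\lambda\eps>0$ is a constant as well, and after absorbing the $2^{O(k)}\log n$ color-coding overhead and the $O(n^\lambda)$ preprocessing, this still runs in $O(n^{\ell-\eps'})$ time for some $\eps'>0$, contradicting the $(\ell,k)$-Hyperclique Hypothesis.

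There is no real obstacle here beyond bookkeeping: the two heavy lifts are already proved. The only thing I would be careful about is the correct substitution of variables between the two lemmas (the ``$k$'' of the hypercycle length versus the ``$\lambda$'' of the hypergraph uniformity), and verifying that the sparsity produced by the second reduction matches $m=\Theta(n^{1+1/(\lambda-1)})$ exactly, so that the assumed algorithm's runtime guarantee applies. For the undirected case when $\ell$ is odd, we simply invoke the last sentence of Lemma~\ref{lem:directedCycleHyperCycle} in place of the directed construction; everything else goes through unchanged.
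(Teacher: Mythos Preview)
Your proposal is correct and follows essentially the same approach as the paper: chain Theorem~\ref{thm:hypercycleHyperclique} and Lemma~\ref{lem:directedCycleHyperCycle}, read off $N=O(n^{\lambda-1})$ and $M=O(n^{\lambda})$, and observe that an $O(M^{\ell/\lambda-\eps})$ algorithm would solve hyperclique in $O(n^{\ell-\lambda\eps})$ time. Your write-up is in fact more careful than the paper's own proof, which compresses the two reductions into a single invocation of Lemma~\ref{lem:directedCycleHyperCycle} and omits the explicit color-coding and parameter-matching you spell out.
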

\begin{proof}
	We start with a $k$-uniform hypergraph with $n_{old}$ nodes. 	
	The number of edges in the graph produced by Lemma 
	\ref{lem:directedCycleHyperCycle} when applied to this hypergraph is $(n_{old})^\lambda$. By the $(\ell,k)$-Hyperclique Hypothesis any algorithm to find a $(\ell,k)$-Hyperclique should take $(n_{old})^\ell$ time. Combining these facts we get a bound of $\Omega(m^{\ell/\lambda -o(1)})$.
	
	The number of nodes produced by Lemma 
	\ref{lem:directedCycleHyperCycle} is $n = (n_{old})^{\lambda-1}$, the number of edges is $m = (n_{old})^{\lambda}$. Thus, $m = n^{1+1/(\lambda-1)}$
\end{proof}

\begin{corollary}
	Let $\lambda = k - \ceil{k/2}+1$.
Under the Min Weight $k$-Clique Hypothesis, min weight $k$-cycle in directed graphs (or in undirected graphs for $k$ odd) cannot be solved in $O(nm^{\ceil{k/2}/\lambda - \eps})$ time for any $\eps>0$ for $m=\Theta(n^{1 + 1/(\lambda-1)})$ edge, $n$ node graphs.
\label{cor:cliqueToCycle}
\end{corollary}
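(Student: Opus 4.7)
The plan is to obtain the corollary by chaining together the two reductions already proved in the paper and then doing a short arithmetic calculation. Given a Min Weight $k$-Clique instance on $n_{old}$ vertices with edge weights in $[1,n_{old}^{ck}]$, I would first assume (via the standard color-coding argument described in the Preliminaries) that the input graph is $k$-partite with parts $V_1,\ldots,V_k$. Then I would view this graph as a $2$-uniform hypergraph and feed it into Theorem~\ref{thm:hypercycleHyperclique} with the parameter $\ell$ of that theorem set to $k$ and the underlying uniformity equal to $2$. The theorem's $\gamma$ evaluates to $k - \lceil k/2 \rceil + 1$, which is exactly $\lambda$. The output is a $\lambda$-uniform hypergraph on the same $n_{old}$ vertices, with the same $k$-partition, in which the minimum weight $k$-hypercycle has the same weight as the minimum weight $k$-hyperclique of the input; the weights are scaled by at most $\binom{\lambda}{2} = O(\lambda^2)$ and so remain polynomially bounded in $n_{old}$.

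Next I would apply Lemma~\ref{lem:directedCycleHyperCycle} to this $\lambda$-uniform hypergraph, producing a $k$-circle-layered graph $G'$ on $n := O(n_{old}^{\lambda-1})$ vertices and $m := O(n_{old}^{\lambda})$ edges whose minimum weight $k$-cycles correspond in weight to the minimum weight $k$-hypercycles, and hence to the minimum weight $k$-cliques of the original graph. By the same lemma, $G'$ can be taken undirected whenever $k$ is odd. Observe that the sparsity matches the claim:
\[ m = n_{old}^{\lambda} = \bigl(n_{old}^{\lambda-1}\bigr)^{\lambda/(\lambda-1)} = \Theta\bigl(n^{1 + 1/(\lambda-1)}\bigr). \]

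Finally I would run the running-time arithmetic to derive the contradiction. Suppose there were an algorithm for min weight $k$-cycle running in $O(nm^{\lceil k/2\rceil/\lambda - \eps})$ time on graphs of this sparsity for some $\eps>0$. Plugging in $n = O(n_{old}^{\lambda-1})$ and $m = O(n_{old}^{\lambda})$, the composite running time for solving Min Weight $k$-Clique would be
\[ \tilde{O}\bigl(n_{old}^{\lambda-1} \cdot n_{old}^{\lambda(\lceil k/2\rceil/\lambda - \eps)}\bigr) = \tilde{O}\bigl(n_{old}^{\lambda - 1 + \lceil k/2 \rceil - \lambda\eps}\bigr). \]
Using the identity $\lambda - 1 + \lceil k/2\rceil = (k - \lceil k/2\rceil + 1) - 1 + \lceil k/2\rceil = k$, this collapses to $\tilde{O}(n_{old}^{k - \lambda\eps})$, violating Hypothesis~\ref{hyp:minkClique}.

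The corollary is essentially a direct composition of the two previous reductions, so I do not expect any real obstacle. The only items worth double-checking are the arithmetic identity $\lambda - 1 + \lceil k/2\rceil = k$ (which is immediate from the definition of $\lambda$), that the $k$-partite structure produced by Theorem~\ref{thm:hypercycleHyperclique} is exactly what Lemma~\ref{lem:directedCycleHyperCycle} needs as input, and that the polynomial weight bound of Hypothesis~\ref{hyp:minkClique} is preserved through both reductions (which it is, since each step inflates weights by only a $\poly(k)$ factor).
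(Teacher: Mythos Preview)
Your proof is correct and follows exactly the paper's approach: the paper's own argument is the one-line observation that Min Weight $k$-Clique is the $(k,2)$-Hyperclique problem, so the result falls out by plugging $k\mapsto 2$ and $\ell\mapsto k$ into the preceding corollary (itself the composition of Theorem~\ref{thm:hypercycleHyperclique} and Lemma~\ref{lem:directedCycleHyperCycle}). You have simply unpacked that composition explicitly and carried out the arithmetic, including the identity $\lambda-1+\lceil k/2\rceil=k$, which indeed shows $nm^{\lceil k/2\rceil/\lambda}=m^{k/\lambda}$ at the given sparsity.
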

\begin{proof}
The Min Weight $k$-Clique Hypothesis is equivalent to the Min Weight $(k,2)$-Hyperclique Hypothesis. We can plug in these numbers to get the result above. 
\end{proof}

When considering odd sizes of cliques and cycles, these results become show $O(mn)$ hardness for the cycle problems at certain densities. 

\begin{corollary}
	Under the Min Weight $(2k+1)$-Clique Hypothesis, min weight $(2k+1)$-cycle in directed or undirected graphs cannot be solved in $O(mn^{1-\eps})$ time for any $\eps>0$ for $m=\Theta(n^{1 + 1/k})$ edge, $n$ node graphs.
	\label{cor:cliqueToCycleODD}
\end{corollary}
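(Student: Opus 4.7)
The plan is to derive this corollary as a direct specialization of the already-established Corollary~\ref{cor:cliqueToCycle}, after chasing the arithmetic of the exponents. I would substitute the clique size $K := 2k+1$ into the statement of Corollary~\ref{cor:cliqueToCycle}. Then $\lceil K/2\rceil = k+1$ and $\lambda = K - \lceil K/2\rceil + 1 = (2k+1) - (k+1) + 1 = k+1$, so $\lambda - 1 = k$, which matches the advertised sparsity $m = \Theta(n^{1 + 1/(\lambda-1)}) = \Theta(n^{1+1/k})$. The ``exponent of $m$'' in the prior corollary becomes $\lceil K/2\rceil / \lambda = (k+1)/(k+1) = 1$, so Corollary~\ref{cor:cliqueToCycle} immediately gives the hardness bound $\Omega(nm^{1-\eps})$ at this sparsity. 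Moreover $K = 2k+1$ is odd, so the ``or in undirected graphs for $k$ odd'' clause of Corollary~\ref{cor:cliqueToCycle} (which itself traces back to the odd-cycle parity argument in Lemma~\ref{lem:directedCycleHyperCycle}) gives the undirected version for free.

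The one thing to be careful about is that the desired conclusion is phrased as ``no $O(mn^{1-\eps})$ algorithm'', whereas the inherited bound is ``no $O(nm^{1-\eps})$ algorithm''. Since $m=\Theta(n^{1+1/k}) \geq n$, we actually have $nm^{1-\eps} \leq mn^{1-\eps}$ (both are polynomial in $n$, and the inequality reduces to $(m/n)^{2-\eps}\geq 1$), so ``no $O(mn^{1-\eps})$'' is the formally stronger statement. To bridge this, I would use the universal quantifier over $\eps$ as follows: suppose for contradiction that some algorithm runs in $O(mn^{1-\eps})$ for a fixed $\eps > 0$. At $m = \Theta(n^{1+1/k})$ this is $O(n^{2+1/k - \eps})$. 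Set $\eps' = \eps k/(k+1) > 0$. A short calculation shows $nm^{1-\eps'} = \Theta(n^{2+1/k - \eps})$ at this sparsity, so the hypothetical algorithm also runs in $O(nm^{1-\eps'})$, contradicting Corollary~\ref{cor:cliqueToCycle} (applied to $K = 2k+1$) with parameter $\eps'$.

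Putting these two pieces together yields the corollary. There is no real obstacle: everything is an exponent bookkeeping exercise on top of Corollary~\ref{cor:cliqueToCycle}. The only mildly non-obvious step is the $\eps \mapsto \eps'$ rescaling, which is needed purely to reconcile the two conventions ($nm^{1-\eps}$ versus $mn^{1-\eps}$) for expressing ``roughly $mn$'' hardness at a polynomially sparse density, and this rescaling works precisely because the obstruction is quantified over \emph{all} $\eps > 0$.
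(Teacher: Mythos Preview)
Your proposal is correct and follows essentially the same route as the paper: specialize the preceding corollary to clique size $K=2k+1$, check the arithmetic ($\lambda=k+1$, sparsity $m=\Theta(n^{1+1/k})$), and invoke the odd-cycle clause from Lemma~\ref{lem:directedCycleHyperCycle} for the undirected case. The paper's own proof is the one-line ``plug in these numbers'', and your extra $\eps\mapsto\eps'=\eps k/(k+1)$ rescaling is a legitimate (and more careful) way to pass from the inherited $O(nm^{1-\eps})$ bound of Corollary~\ref{cor:cliqueToCycle} to the stated $O(mn^{1-\eps})$ bound; at the fixed sparsity $m=\Theta(n^{1+1/k})$ both expressions are powers of $n$, so any polynomial saving in one form yields a polynomial saving in the other, which is exactly what your rescaling verifies.
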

\begin{proof}
	The Min Weight $(2k+1)$-Clique Hypothesis is equivalent to the Min Weight $((2k+1),2)$-Hyperclique Hypothesis. We can plug in these numbers to get the result above. 
	We then note that directed $(2k+1)$-cycle is solved by undirected $(2k+1)$-cycle because $(2k+1)$ is odd. 
\end{proof}

\begin{corollary}
	Under the Exact Weight $(2k+1)$-Clique Hypothesis, exact weight $(2k+1)$-cycle in directed and undirected graphs cannot be solved in $O(mn^{1-\eps})$ time for any $\eps>0$ for $m=\Theta(n^{1 + 1/k})$ edge, $n$ node graphs.
\end{corollary}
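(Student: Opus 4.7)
The plan is to mirror the proof of Corollary~\ref{cor:cliqueToCycleODD}, observing that both core reductions (Theorem~\ref{thm:hypercycleHyperclique} and Lemma~\ref{lem:directedCycleHyperCycle}) preserve weights \emph{exactly}, not merely the minimum. Consequently the exact-weight cycle problem encodes the exact-weight clique problem via precisely the same chain of reductions, and we only need to verify parameters.

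First I would reduce Exact Weight $(2k+1)$-Clique to the $(2k+1)$-partite version using color coding, losing only a constant factor (the number of colors is fixed). Treating a graph as a $2$-uniform hypergraph, I would apply Theorem~\ref{thm:hypercycleHyperclique} with $\ell = 2k+1$ and uniformity $2$, yielding $\gamma = (2k+1) - \lceil (2k+1)/2\rceil + 1 = k+1$. This produces a $(k+1)$-uniform hypergraph $G'$ on the same $n$ nodes in which every $(2k+1)$-hyperclique of the original graph corresponds to a $(2k+1)$-hypercycle of $G'$ of the \emph{same} weight (the weight-preservation in the theorem statement is a pointwise equality between cycle weight and clique weight, not an extremum, so it transports exact-weight queries directly). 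The weights inflate by at most $\binom{k+1}{2} = O(k^2)$, which keeps them polynomially bounded, consistent with the Exact Weight $(2k+1)$-Clique Hypothesis.

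Next I would apply Lemma~\ref{lem:directedCycleHyperCycle} with $\lambda = k+1$ and cycle length $2k+1$. This yields a $(2k+1)$-circle-layered graph $G''$ on $N := n^{k}$ nodes and $M := O(n^{k+1})$ edges, so that $M = \Theta(N^{1+1/k})$ as required. Because $2k+1$ is odd, Lemma~\ref{lem:directedCycleHyperCycle} also lets me take $G''$ undirected. Again, the edge-weight-to-hyperedge-weight correspondence is an equality, so an exact-weight $(2k+1)$-cycle in $G''$ of weight $W$ exists iff an exact-weight $(2k+1)$-hypercycle of weight $W$ exists in $G'$, iff an exact-weight $(2k+1)$-clique of weight $W$ exists in the original graph.

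Finally, suppose for contradiction that Exact Weight $(2k+1)$-Cycle can be solved in $O(MN^{1-\eps})$ time on $m = \Theta(N^{1+1/k})$-edge, $N$-node graphs. Then solving the original $(2k+1)$-Clique instance takes time
\[
O(M \cdot N^{1-\eps}) \;=\; O\!\left(n^{k+1}\cdot n^{k(1-\eps)}\right) \;=\; O\!\left(n^{2k+1-k\eps}\right),
\]
which refutes the Exact Weight $(2k+1)$-Clique Hypothesis. The only conceptual point requiring care is verifying that the weight-preservation arguments in Theorem~\ref{thm:hypercycleHyperclique} and Lemma~\ref{lem:directedCycleHyperCycle} are genuinely bijective on weights (they are, by construction of the weight assignments), so no modification of those proofs is needed; the rest is routine bookkeeping of node/edge counts and the resulting density exponent.
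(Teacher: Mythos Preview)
Your proposal is correct and follows essentially the same route as the paper: the paper's proof simply notes that the Exact Weight $(2k+1)$-Clique Hypothesis is the Exact Weight $((2k+1),2)$-Hyperclique Hypothesis, plugs the numbers into the earlier corollaries, and observes that odd directed cycles reduce to undirected. You have unpacked this in full detail, and in particular you correctly single out the key point the paper glosses over, namely that the weight correspondences in Theorem~\ref{thm:hypercycleHyperclique} and Lemma~\ref{lem:directedCycleHyperCycle} are pointwise bijections (not merely minimum-preserving), which is exactly what is needed for the exact-weight variant.
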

\begin{proof}
The Exact Weight $(2k+1)$-Clique Hypothesis is equivalent to the Exact Weight $((2k+1),2)$-Hyperclique Hypothesis. We can plug in these numbers to get the directed version of the above corollary. We then note that directed $(2k+1)$-cycle is solved by undirected $(2k+1)$-cycle because $(2k+1)$ is odd. 
\end{proof}

\section{Probably Optimal Weighted k-Cycle Algorithms}
\label{sec:cycleUpperBounds}
The reductions from $\ell$-hyperclique in $k$-uniform hypergraphs (through hypercycle) to directed $\ell$-cycle produces graphs on $O(n^{\gamma-1})$ nodes and $O(n^\gamma)$ edges where $\gamma=\ell-\lceil \ell/k\rceil +1$. 

For the special case of the reduction from Min Weight $\ell$-Clique ($k=2$), one obtains a graph on $O(n^{\lfloor\ell/2\rfloor +1})$ edges.
Suppose that $\ell$ is odd. The number of edges in the graph is $O(n^{(\ell+1)/2})$, and solving the Shortest $\ell$-Cycle problem in this graph in $O(n^{\ell-\eps})$ time for any $\eps>0$ would refute the Min Weight $\ell$-Clique Hypothesis.
We immediately obtain that Min Weight $\ell$-Cycle on $m$ edge graphs requires $m^{2-2/(\ell+1) - o(1)}$ time.

Using Lemma~\ref{lemma:evenodd}, we can also conclude that if $\ell$ is even, then solving Min Weight $\ell$-Cycle on $m$ edge graphs requires $m^{2-2/\ell - o(1)}$ time.

\begin{theorem}
Assuming the Min Weight $\ell$-Clique Hypothesis, on a Word RAM on $O(\log n)$ bit words, Min Weight $\ell$-Cycle on $m$ edge graphs requires $m^{2-2/\ell - o(1)}$ time if $\ell$ is even and $m^{2-2/(\ell+1) - o(1)}$ time if $\ell$ is odd.
\end{theorem}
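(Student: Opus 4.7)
The plan is to combine the reduction chain of Sections~\ref{sec:redFromHyperCliquetoHypercycle}--\ref{sec:hyperToCycle}, specialized to hypercliques in ordinary graphs (the $k=2$ case of Theorem~\ref{thm:hypercycleHyperclique}), with the parity-changing Lemma~\ref{lemma:evenodd}. For odd $\ell$, the argument is essentially the one already in Corollary~\ref{cor:cliqueToCycleODD}: tracing Theorem~\ref{thm:hypercycleHyperclique} with $k=2$, one has $\gamma = \ell - \lceil \ell/2\rceil + 1 = (\ell+1)/2$, and Lemma~\ref{lem:directedCycleHyperCycle} then turns any Min Weight $\ell$-Clique instance on $n$ nodes into a Min Weight $\ell$-Cycle instance on $m = \Theta(n^{(\ell+1)/2})$ edges with polynomially bounded weights. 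Under the Min Weight $\ell$-Clique Hypothesis the new instance requires $n^{\ell - o(1)}$ time, which rewritten in terms of $m$ is $m^{2\ell/(\ell+1) - o(1)} = m^{2-2/(\ell+1) - o(1)}$. Lemma~\ref{lem:colorcode} removes the $\ell$-circle-layered restriction with only a multiplicative $k^k \polylog(n)$ overhead, so the lower bound applies to arbitrary directed Min Weight $\ell$-Cycle.

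For even $\ell$, the same direct reduction yields only $m = \Theta(n^{\ell/2+1})$ edges, hence the weaker exponent $2 - 4/(\ell+2)$. To close this gap I would invoke the odd case at $\ell - 1$: Hypothesis~\ref{hyp:minkClique} is stated for all clique sizes, so in particular it provides the Min Weight $(\ell-1)$-Clique Hypothesis. Plugging $\ell-1$ (which is odd) into the odd case just established gives a lower bound of $m^{2 - 2/\ell - o(1)}$ for Min Weight $(\ell-1)$-Cycle. I would then transfer this bound to $\ell$-cycle via the corollary of Lemma~\ref{lemma:evenodd}: any $T(n,m)$-time algorithm for Min Weight $\ell$-Cycle on $\ell$-circle-layered graphs induces, with only $O(n+m)$ additive overhead, an algorithm for Min Weight $(\ell-1)$-Cycle on $(\ell-1)$-circle-layered graphs of essentially the same running time. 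Hence an $O(m^{2 - 2/\ell - \eps})$ algorithm for $\ell$-cycle would contradict the just-established $(\ell-1)$-cycle lower bound, and Lemma~\ref{lem:colorcode} again removes the circle-layered assumption.

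The calculations are mechanical. The small points to verify carefully are: that Lemma~\ref{lemma:evenodd}'s overhead is linear in $n+m$ so the exponent in $m$ is preserved; that the target graph sizes from Theorem~\ref{thm:hypercycleHyperclique} composed with Lemma~\ref{lem:directedCycleHyperCycle} really match $\gamma = (\ell+1)/2$ (odd) and $\gamma = \ell/2 + 1$ (even); and that the weights produced by Theorem~\ref{thm:hypercycleHyperclique} remain in $O(\log n)$-word range since the blow-up factor $\binom{\gamma}{k}$ is $O(\ell^k)=O(1)$. I do not expect any conceptual obstacle; the proof is a short accounting exercise chaining together previously established reductions, and the main ``trick'' is just applying the Min Weight $k$-Clique Hypothesis at $k=\ell-1$ in order to exploit the cleaner odd-length bound.
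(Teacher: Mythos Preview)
Your proposal is correct and follows essentially the same approach as the paper: for odd $\ell$ the chain Theorem~\ref{thm:hypercycleHyperclique} ($k=2$, $\gamma=(\ell+1)/2$) $\to$ Lemma~\ref{lem:directedCycleHyperCycle} gives the $m^{2-2/(\ell+1)-o(1)}$ bound directly, and for even $\ell$ the paper likewise invokes Lemma~\ref{lemma:evenodd} to transfer the odd $(\ell-1)$ bound (indeed Corollary~1.1 makes explicit that the relevant hypothesis is for cliques of size $2\lceil \ell/2\rceil -1=\ell-1$, exactly as you observed). One small remark: the appeal to Lemma~\ref{lem:colorcode} is unnecessary here, since a fast algorithm on \emph{general} graphs in particular applies to the circle-layered instances produced by the reductions.
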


The rest of this section will show that the above runtime can be achieved:

\begin{theorem}
Min Weight $\ell$-Cycle on $m$ edge graphs can be solved in $\tilde{O}(m^{2-2/\ell})$ time if $\ell$ is even and $\tilde{O}(m^{2-2/(\ell+1)})$ time if $\ell$ is odd.
\end{theorem}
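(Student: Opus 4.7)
The plan is to adapt the Alon--Yuster--Zwick (AYZ) algorithm for unweighted $k$-cycle detection to the min-weight setting. The target runtimes $\tO(m^{2-2/\ell})$ for even $\ell$ and $\tO(m^{2-2/(\ell+1)})$ for odd $\ell$ match the AYZ exponents exactly, and crucially neither relies on fast matrix multiplication in an essential way, so no subcubic min-plus product is needed---each combining step can be carried out while tracking minimum weights, with only polylogarithmic overhead.

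First I would apply Lemma~\ref{lem:colorcode} to reduce to finding a min-weight colorful $\ell$-cycle in an $\ell$-circle-layered graph at $\tO(1)$ overhead; any $\ell$-cycle in such a layered graph is automatically simple, and the vertex-disjointness of two sub-paths forming a cycle is then enforced by using disjoint color sets. For even $\ell = 2s$, I would compute, for every pair $(u,v)$ of vertices, the min-weight $s$-path $P_1(u,v)$ traversing layers $1,\ldots,s{+}1$ and the min-weight $s$-path $P_2(u,v)$ traversing layers $s{+}1,\ldots,\ell,1$, and output $\min_{u,v}(P_1(u,v)+P_2(u,v))$. To compute these tables in $\tO(m^{2-1/s})=\tO(m^{2-2/\ell})$ time I set the degree threshold $\Delta=m^{1/s}$ and split into cases. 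In the low-degree case (all interior vertices of the sub-path have degree $\le \Delta$) I enumerate all $s$-walks in the low-degree subgraph by starting at each edge and extending $s{-}1$ steps on each side, which is bounded by $O(m\Delta^{s-1})=O(m^{2-1/s})$; for each endpoint pair I keep the minimum weight encountered. In the high-degree case there are only $O(m/\Delta)=O(m^{1-1/s})$ high-degree vertices, and for each I run an $\tO(m)$-time layered DP that computes min-weight $s$-paths passing through it, for a total of $\tO(m^{2-1/s})$.

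For odd $\ell = 2s+1$ I would invoke Lemma~\ref{lemma:evenodd} to reduce to min-weight $(\ell+1)$-cycle on a graph with $O(n)$ nodes and $O(m+n)$ edges, then apply the even-case algorithm to obtain $\tO(m^{2-2/(\ell+1)})$ total time. The main technical obstacle is the combined bookkeeping: the low-degree enumeration and the high-degree DP must jointly populate the same $P_1,P_2$ tables within the $\tO(m^{2-2/\ell})$ budget, which requires that the per-vertex layered DP visits each high-degree vertex only once and that the enumeration of low-degree $s$-walks is organized so that each walk is charged to a unique starting edge. A subtler point is that the bound $O(m\Delta^{s-1})$ naively counts walks rather than simple paths, but in the colored layered graph any $s$-walk visits at most one vertex per layer, so walks and simple paths coincide and the bound is tight for our purposes.
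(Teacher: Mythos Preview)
Your overall strategy matches the paper's: AYZ-style degree threshold $\Delta=m^{1/s}$, low-degree half-path enumeration, and a per-vertex shortest-path routine for the $O(m/\Delta)$ high-degree vertices; your reduction of odd $\ell$ to even $\ell{+}1$ via Lemma~\ref{lemma:evenodd} is a valid alternative to the paper's direct use of $\lceil\ell/2\rceil$- and $\lfloor\ell/2\rfloor$-paths. But there is a real gap in how you handle the high-degree case. You want the $\tilde O(m)$-time DP at each high-degree vertex $h$ to compute ``min-weight $s$-paths passing through it'' and to ``jointly populate the same $P_1,P_2$ tables'' as the low-degree enumeration. A single-source computation from an interior $h$ yields only the vectors $a(u)=d(u\to h)$ and $b(v)=d(h\to v)$; turning these into table updates $P_1(u,v)\gets\min\bigl(P_1(u,v),\,a(u)+b(v)\bigr)$ already costs $\Theta(|V_1|\cdot|V_{s+1}|)$ per $h$, and skipping the table to combine $a,b$ directly with the other half $P_2$ hits the same barrier. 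Over $O(m/\Delta)$ high-degree vertices this exceeds the $\tilde O(m^{2-1/s})$ budget whenever $n^2>m$.

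The fix---and what the paper actually does---is not to route the high-degree case through the half-path tables at all. For each high-degree vertex $h$, compute directly the minimum-weight \emph{full $\ell$-cycle through $h$} in $\tilde O(m)$ time, via one color-coded Dijkstra from $h$ around all $\ell$ layers back to $h$ (this is the weighted replacement for Monien's subroutine). The output is a single number per $h$, so there is no $n^2$ blowup. After recording the best such cycle, remove all high-degree vertices entirely; the half-path lists $X,Y$ are then enumerated and matched only in the residual low-degree subgraph, exactly as your low-degree case describes.
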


The proof proceeds analogously to Alon, Yuster and Zwick's algorithm~\cite{alon1997finding} for $\ell$-Cycle in unweighted directed graphs. 
Let us review how their algorithm works and see how to modify it to handle weighted graphs.
First, pick a parameter $\Delta$ and take all $O(m/\Delta)$ nodes of degree $\geq \Delta$. Call the set of these nodes $H$. For every $v\in H$, Alon, Yuster and Zwick use an $\tilde{O}(m)$ time algorithm by Monien~\cite{moniencycle} to check whether there is an $\ell$-cycle going through $v$. If no $\ell$-cycle is found, they consider the subgraph with all nodes of $H$ removed and enumerate all $\lceil \ell/2 \rceil$-paths $X$ and all $\lfloor \ell/2\rfloor$-paths $Y$ in it. The number of paths in $X\cup Y$ is $\leq m\Delta^{\ceil{\ell/2}-1}$. 
Then one sorts $X$ and $Y$ in lexicographic order of the path endpoints and searches in linear time in $|X|+|Y|$ for a path in $X$ from $a$ to $b$ and a path in $Y$ from $b$ to $a$. To make sure that the cycle closed by these paths is simple, one can first start by color coding in two colors red and blue and let $X$ contain only paths with red internal nodes and $Y$ only paths with blue internal nodes, or one can just go through all paths that share the same end points. Either way, the total runtime is asymptotically $m^2/\Delta + m\Delta^{\ceil{\ell/2}-1}$, and setting $\Delta=m^{1/\ceil{\ell/2}}$ gives a runtime of $\tilde{O}(m^{2-1/(\lceil \ell/2\rceil )})$.

One can modify the algorithm to give a Shortest $\ell$-cycle in an edge-weighted graph, as follows. First, we replace Monien's algorithm with an algorithm that given a weighted graph and a source $s$ can in $\tilde{O}(m)$ time determine a shortest $\ell$-cycle $C$ containing $s$. 
To this end, we use color-coding: we give every node of $G$ a random color from $1$ to $\ell$ and note that with probability at least $1/\ell^\ell$, the $i$th node of $C$ is colored $i$, for all $i$, whp; as $s$ is the first node of $C$, we can assume that $s$ is colored $1$. As usual, this can be derandomized using $\ell$-perfect hash families. Now, in $G$, only keep the edges $(u,v)$ such that $c(v)=c(u)+1$ (not mod $\ell$, so there are no edges between nodes colored $\ell$ and nodes colored $1$). This makes the obtained subgraph $G'$ $\ell$-partite and acyclic. Now, run Dijkstra's algorithm from $s$, computing the distances $d(s,v)$ for each $v\in V$. Then for every in-neighbor $u$ of $s$ in $G$ colored $\ell$, compute $d(s,u)+w(u,s)$ and take the minimum of these, $W=\min_{u} d(s,u)+w(u,s)$. If the nodes of $C$ are colored properly (the $i$th node is colored $i$), then $W$ is the weight of the shortest $\ell$-cycle through $s$ since the shortest path from $s$ to any $u$ colored $k$, if the distance is finite, must have $\ell$ nodes colored from $1$ to $\ell$.
Dijkstra's algorithm runs in $\tilde{O}(m)$ time, and one would want to repeat $O(\ell^\ell \log n)$ times to get the correct answer with high probability (the same cost is obtained in the derandomization).

Now that we have a counterpart of Monien's algorithm, let's see how to handle the case when the shortest $k$-cycle in the graph only contains nodes of low degree.
Similar to the original algorithm, we again compute the set of paths $X$ and $Y$, but we only consider shortest paths together with their weights. Then one is looking for two paths (one between $a$ and $b$ and the other between $b$ and $a$) so that their sum of weights is minimized. This can also be found in linear time in $|X|$ and $|Y|$ when they are sorted by end points $(a,b)$ and by weight. 
The total runtime is again $\tilde{O}(m^{2-1/(\lceil \ell/2\rceil )})$.

%
%
%
%

\section{Hardness Results for Shortest Cycle}
\label{sec:shortestCycleHard}

\begin{theorem}
	If Shortest cycle in an $N$ node, $M$ edge directed graph can be solved in $T(N,M)$ time, then the Minimum Weight $k$-cycle in an $n$ node, $m$ edge directed graph is solvable in $\tO(T(n,m))$ time.
	\label{lem:mincyclekcycle}
\end{theorem}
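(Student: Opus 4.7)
The plan is to reduce Minimum Weight $k$-Cycle on a directed weighted graph $G=(V,E,w)$ to $O(\log n)$ calls of Shortest Cycle on graphs of essentially the same size, via color-coding together with a large additive shift of the edge weights. Throughout, we treat $k$ as a constant, and we may assume integer weights in $[-W,W]$ with $W$ polynomially bounded in $n$.

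The first step is to invoke Lemma~\ref{lem:colorcode}: color each vertex independently and uniformly at random with a color in $\{1,\dots,k\}$, and keep only the edges $(u,v)$ with $c(v)\equiv c(u)+1\pmod{k}$. The resulting subgraph $G'$ is $k$-circle-layered, so every directed cycle of $G'$ has edge-length a positive multiple of $k$. Any fixed $k$-cycle of $G$ is preserved in a single trial with probability at least $1/k^k$, so $O(k^k\log n)=O(\log n)$ independent trials preserve the true minimum weight $k$-cycle of $G$ with high probability (and this can be derandomized using $k$-perfect hash families at the same asymptotic cost).

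The second step is to rescale the weights so that the overall shortest cycle of $G'$ is forced to be a $k$-cycle. Replace each weight $w(e)$ by $w'(e):=w(e)+M$ with $M:=3W+1$; then $w'(e)\in[2W+1,4W+1]$, still polynomially bounded. A $k$-cycle in $G'$ has $w'$-weight at most $k(4W+1)$, whereas any cycle of length at least $2k$ has $w'$-weight at least $2k(2W+1)=k(4W+2)>k(4W+1)$. Hence the minimum weight cycle of $(G',w')$ must be a $k$-cycle whenever one exists, and subtracting $kM$ from its $w'$-weight recovers its original $w$-weight.

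Finally, in each trial run the assumed Shortest Cycle algorithm on $(G',w')$ in time $T(n,m)$. If the returned cycle weight is at most $k(4W+1)$, subtracting $kM$ yields a candidate Minimum Weight $k$-Cycle value; otherwise this trial contributes no $k$-cycle. The answer for $G$ is the minimum over all trials, correct with high probability. The per-trial overhead for color-coding and re-weighting is $O(m+n)$, giving a total runtime of $O(\log n)\cdot(T(n,m)+O(m+n))=\tilde{O}(T(n,m))$, as required. The only delicate point is calibrating the additive shift $M$ against the weight range $W$ so that $k$-cycles strictly beat all longer cycles while the weights remain polynomial; the choice $M=3W+1$ accomplishes exactly this, so there is no serious obstacle in the argument.
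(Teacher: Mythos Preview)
Your proof is correct and follows essentially the same approach as the paper: color-code to obtain a $k$-circle-layered subgraph (so every cycle has length a multiple of $k$), then add a large constant to all edge weights so that the shortest cycle is forced to have exactly $k$ edges, and call the Shortest Cycle oracle. The only cosmetic difference is the choice of additive shift (you use $3W+1$, the paper uses $4W$); both make the arithmetic work, and your verification that $2k(2W+1)>k(4W+1)$ is clean.
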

\begin{proof}
	Let the weights of the $k$-cycle instance range between $-W$ and $W$.
	%
	Use Lemma~\ref{lem:colorcode} to reduce the Min Weight $k$-cycle problem to one in a $k$-circle-layered graph $G$ with partitions $A_1,A_2,\ldots,A_k$.
	Add the value $4W$ to each edge, which adds $4kW$ to the value of every $k$-cycle.  Every cycle in a directed $k$-circle-layered graph is a $ck$-cycle when $c$ is a positive integer since every cycle must go around the graph circle some number of times. Due to the added weight $4W$, the Shortest cycle in the new graph will minimize the number of edges:
	Any $ck$-cycle $C$ for $c\geq 2$ will have weight $\geq 4ckW + w(C)\geq 3ckW\geq 6kW$, where $w(C)\geq -Wck$ is the weight of $C$ in $G$.
	The weight of a $k$ cycle, however is at most $4kW+kW=5kW<6kW$. Thus, the weight of the Shortest Cycle in the new graph is exactly the weight of the Min Weight $k$-Cycle in $G$, plus $4kW$, and the Shortest Cycle will exactly correspond to the Min Weight $k$-Cycle in $G$.	
\end{proof}

\begin{lemma}
	If Shortest Cycle can be solved in $T(n,m)$ time in an $n$-node, $m$-edge directed unweighted graph, then $k$-cycle in a directed unweighted $n$-node, $m$-edge graph is solvable in $\tO(T(n,m))$ time.
	\label{lem:minunweightedcyclekcycle}
\end{lemma}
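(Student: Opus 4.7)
The plan is to mirror the weighted reduction in Theorem~\ref{lem:mincyclekcycle}, but exploiting the fact that we don't need to discourage longer cycles via large weights; the circle-layered structure already does this for us in the unweighted setting.

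First, I would apply Lemma~\ref{lem:colorcode} to reduce the unweighted $k$-cycle problem on the input directed graph to the unweighted $k$-cycle problem on a $k$-circle-layered directed graph $G'$ on at most $n$ nodes and at most $m$ edges, at the cost of an $\tilde{O}(k^k)$ multiplicative overhead. Note that $G'$ is a subgraph of the original graph (edges are only removed, not added), so the edge and vertex counts only decrease.

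The key structural observation is that any directed cycle in a $k$-circle-layered graph must have length divisible by $k$: every directed edge goes from partition $V_i$ to $V_{i+1 \bmod k}$, so following a closed walk advances the partition index by exactly its length modulo $k$, which must be $0$ to return to the starting partition. In particular, the shortest directed cycle in $G'$ has length that is either exactly $k$ or at least $2k$, and it has length exactly $k$ if and only if $G'$ contains a directed $k$-cycle.

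Therefore I would run the assumed Shortest Cycle algorithm on $G'$ in $T(n,m)$ time and simply compare the returned length to $k$, reporting a $k$-cycle exactly when equality holds. Summing over the $O(k^k \log n)$ color-coding trials gives total time $\tilde{O}(T(n,m))$, treating $k$ as a constant. There is no real obstacle here: unlike the weighted analogue, we need no weight perturbation argument, because the circle-layering already forces every non-$k$-cycle to be at least twice as long as the target, so the shortest cycle length serves as a direct certificate for $k$-cycle existence.
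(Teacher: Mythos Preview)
Your proposal is correct and follows exactly the paper's approach: reduce via color-coding to a $k$-circle-layered graph, then observe that any directed cycle there has length a multiple of $k$, so the shortest cycle has length $k$ iff a $k$-cycle exists. The paper's proof is terser but identical in substance.
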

\begin{proof}
The proof is similar but simpler than that of Theorem \ref{lem:mincyclekcycle}. We first reduce to $k$-cycle in a $k$-circle-layered graph, and then just find the Shortest Cycle in it. Since the graph obtained is directed and $k$-circle-layered, if it contains a $k$-cycle, then that cycle is its shortest cycle. 
\end{proof}

\begin{corollary}
	If Min Weight $(2L+1)$-clique requires $n^{2L+1-o(1)}$ time, then Shortest Cycle in directed weighted graphs requires $mn^{1-o(1)}$ time whenever $m = \Theta(n^{1+1/L})$.
	
	Directed Shortest Cycle in unweighted graphs requires $m^{3/2-o(1)}$ time under the Max $3$ SAT Hypothesis, $m^{4/3-o(1)}$ time under the Exact Weight $K$ Clique Hypothesis, and $m^{2\omega/3-o(1)}$ time under the $K$-Clique Hypothesis.
\end{corollary}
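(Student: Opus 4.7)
The plan is to chain together the reductions already established in the paper. For the weighted statement, the route is Min Weight $(2L+1)$-Clique $\to$ Min Weight $(2L+1)$-Cycle $\to$ Shortest Cycle. First I would invoke Corollary \ref{cor:cliqueToCycleODD}, which says that under the Min Weight $(2L+1)$-Clique Hypothesis, Min Weight $(2L+1)$-Cycle at density $m = \Theta(n^{1+1/L})$ cannot be solved in $O(mn^{1-\eps})$ time for any $\eps>0$. Then I would apply Theorem \ref{lem:mincyclekcycle}, whose contrapositive says that a sufficiently fast Shortest Cycle algorithm would give a Min Weight $k$-Cycle algorithm of essentially the same runtime. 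Composing these two yields the claimed $mn^{1-o(1)}$ lower bound for Shortest Cycle at $m = \Theta(n^{1+1/L})$.

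For the unweighted statement, I would use the same backbone but replace Theorem \ref{lem:mincyclekcycle} by Lemma \ref{lem:minunweightedcyclekcycle} (its unweighted counterpart), and drive the hardness from the three different hypotheses via the Hyperclique $\to$ Hypercycle $\to$ Cycle chain built in Theorem \ref{thm:hypercycleHyperclique} and Lemma \ref{lem:directedCycleHyperCycle}. Under the $k$-Clique Hypothesis, $(2L+1)$-Clique in $n$-node graphs needs $n^{(2L+1)\omega/3 - o(1)}$ time; specializing Theorem \ref{thm:hypercycleHyperclique} to $k=2$ and plugging into Lemma \ref{lem:directedCycleHyperCycle} produces an unweighted directed graph with $\tilde{n} = O(n^L)$ nodes and $\tilde{m} = O(n^{L+1})$ edges, so the lower bound translates to $\tilde{m}^{(2L+1)\omega/(3(L+1)) - o(1)}$. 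Sending $L \to \infty$ pushes the exponent to $2\omega/3$; since any fixed violation $m^{2\omega/3 - \eps}$ of the claim would be beaten by choosing $L$ large enough relative to $\eps$, I recover the $m^{2\omega/3 - o(1)}$ bound, which is then transferred to Shortest Cycle by Lemma \ref{lem:minunweightedcyclekcycle}.

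The Max $3$-SAT and Exact Weight $K$-Clique cases run along the same template. The preliminaries section notes that Max-$3$-SAT reduces to $(\ell,3)$-Hyperclique for every $\ell > 3$, and that Abboud et al.\ reduce Exact Weight $\ell$-Clique to $(\ell,4)$-Hyperclique. In both cases I would run Theorem \ref{thm:hypercycleHyperclique} to convert the $(\ell,k)$-Hyperclique instance into an $\ell$-Hypercycle instance in a $\gamma$-uniform hypergraph with $\gamma = \ell - \lceil \ell/k \rceil + 1$, then use Lemma \ref{lem:directedCycleHyperCycle} to get a directed graph with $\tilde{m} = O(n^\gamma)$ edges; the $n^{\ell - o(1)}$ lower bound becomes $\tilde{m}^{\ell/\gamma - o(1)}$, and as $\ell \to \infty$ with $k$ fixed one has $\ell/\gamma \to k/(k-1)$. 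Instantiating $k=3$ and $k=4$ and pushing the result through Lemma \ref{lem:minunweightedcyclekcycle} gives the $m^{3/2-o(1)}$ and $m^{4/3-o(1)}$ bounds respectively.

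The main obstacle is not any single reduction, each of which is already proved in the paper, but the careful bookkeeping of the density $\tilde{m} = n^\gamma$ and the exponent $\ell/\gamma$ through the hyperclique\,$\to$\,hypercycle\,$\to$\,cycle chain, together with the limiting argument that converts a sequence of bounds $m^{\alpha_L - o(1)}$ with $\alpha_L \to \alpha$ into a single clean $m^{\alpha - o(1)}$ conclusion. This requires observing that any hypothetical algorithm refuting $m^{\alpha - o(1)}$ would achieve runtime $m^{\alpha - \eps}$ for some fixed $\eps > 0$, then selecting $L$ (or $\ell$) large enough that $\alpha_L > \alpha - \eps$, contradicting the $L$-th instantiation of the reduction chain.
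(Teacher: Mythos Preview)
Your proposal is correct and follows essentially the same route as the paper: the weighted part is exactly Corollary~\ref{cor:cliqueToCycle}/\ref{cor:cliqueToCycleODD} composed with Theorem~\ref{lem:mincyclekcycle}, and the unweighted part is the same hyperclique\,$\to$\,hypercycle\,$\to$\,$\ell$-cycle\,$\to$\,Shortest Cycle chain followed by the ``pick $\ell$ large relative to $\eps$'' limiting argument. The paper merely makes the Max~$3$-SAT computation concrete (choosing $\ell>3/\eps$ divisible by $3$ and tracking the exponent $(2\ell/3+1)(3/2-\eps)$), and then notes the Exact Weight $K$-Clique and $K$-Clique cases are analogous, which matches your sketch.
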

\begin{proof}
The first statement follows immediately from Lemma~\ref{lem:mincyclekcycle} and Corollary~\ref{cor:cliqueToCycle}.
We will focus on the second part of the corollary.

The reduction in Corollary~\ref{lem:hyperSATcycle} from Max $3$ SAT on $N$ variables to $\ell$-cycle (for any $\ell>3$) produces a $\left( n^{\ell-\lceil \ell/3\rceil} \right)$-node, $\left(n^{\ell-\lceil \ell/3\rceil+1}\right)$-edge graph (for $n=2^{N/\ell}$, so that solving $\ell$-cycle in it in $O(n^{\ell-\eps})$ time for any $\eps>0$, then the Max $3$ SAT Hypothesis is false.
Now suppose that Shortest cycle in a directed graph can be solved in $O(m^{3/2-\eps})$ time for some $\eps>0$.
Set $\ell$ to be any integer greater than $3/\eps$ and divisible by $3$.
Consider the $\ell$-cycle problem in $n^{\ell-\lceil \ell/3\rceil+1}$-edge graphs obtained via the reduction from Max $3$ SAT.
Reduce it to Shortest Cycle as in Lemma~\ref{lem:mincyclekcycle}. As $\ell$ is divisible by $3$, the number of edges in consideration is $O(n^{2\ell/3+1})$.
Then, applying the $O(m^{3/2-\eps})$ time algorithm, we can solve the $\ell$-cycle instance in $O(n^{(2\ell/3 + 1) (3/2-\eps)})$ time. As we set $\ell\geq 3/\eps$, the exponent in the running time is $\ell +3/2-\eps(2\ell/3+1)\leq \ell  +3/2-\eps(2/\eps+1)\leq \ell - 1/2 -\eps$, and hence we obtain a faster algorithm for $\ell$-cycle and contradict the Max $3$-SAT hypothesis.

A similar argument applies to show that $m^{4/3-o(1)}$ time is needed under the Exact Weight $K$ Clique Hypothesis, and $m^{2\omega/3-o(1)}$ time is needed under the $K$-Clique Hypothesis.
\end{proof}

\section{Discussion of the Hyperclique Hypothesis}
\label{sec:hypothesisDiscussion}
\label{sec:discuss}
In this section we discuss why the $(\ell,k)$-Hyperclique hypothesis is believable.

First, when $k>2$, the fastest algorithms for the $\ell$-hyperclique problem run in $n^{\ell-o(1)}$ time, and this is not for lack of trying. Many researchers~\cite{researchers} have attempted to design a faster algorithm, for instance by mimicking the matrix multiplication approach for $k$-Clique. However, in doing this, one needs to design a nontrivial algorithm for a generalized version of matrix multiplication. Unfortunately, in Section \ref{sec:genMatrixMult}, we show that the rank and even the border rank of the tensor associated with this generalized product is as large as possible, thus ruling out the arithmetic circuit approach for the problem. Thus, if a faster than $n^\ell$ algorithm exists for $k$-uniform hypergraphs with $k>2$, then it must use radically different techniques than the Strassen-like approach to regular matrix multiplication.

Another reason to believe the Hyperclique hypothesis is due to its relationship to Maximum Constraint Satisfaction Problems (CSPs).
R. Williams~\cite{thesis} showed that Max-$3$-SAT can be reduced to finding a $4$-Hyperclique in a $3$-uniform hypergraph, so that if the latter can be solved in $O(n^{4-\eps})$ time for $n$ node graphs and $\eps>0$, then Max-$3$-SAT can be solved in $O(2^{(1-\delta)n})$ time for formulas on $n$ variables.

Max-$3$-SAT has long resisted attempts to improve upon the brute-force $2^{n}$ runtime. Recent results (e.g. \cite{AlmanCW16}) obtained $2^{n-o(n)}$ time improvements, but there is still no $O((2-\eps)^n)$ time algorithm.
 Generalizing the reduction from ~\cite{thesis} (see Section~\ref{sec:MaxKSatHypercycle}), one can reduce Max-$k$-SAT to $\ell$-hyperclique in a $k$-uniform hypergraph for any $\ell>k$, so that if the latter problem can be solved in $O(n^{\ell-\eps})$ time for $n$ node graphs and $\eps>0$, then Max-$k$-SAT can be solved in $O(2^{(1-\delta)n})$ time for formulas on $n$ variables.
In fact, R. Williams~\cite{thesis} showed that even harder Constraint Satisfaction Problems (CSPs) can be reduced to hyperclique. CSPs where the constraints are degree $3$ polynomials representing Boolean functions over the $n$ variables. In Section~\ref{sec:GeneralCSP} we generalize this to CSPs where the constraints are degree $k$ polynomials. Such CSPs include Max-$k$-SAT and also include some CSPs with constraints involving more than $k$ variables. In any case, the $(\ell,k)$-Hypothesis captures the difficulty of this very general class of CSPs.

Another reason to believe the Hypothesis is due to its relationship to the Exact Weight $k$-Clique Conjecture~\cite{williams2013finding} which states that finding a $k$-Clique of total edge weight exactly $0$ in an $n$ node graph with large integer weights  requires $n^{k-o(1)}$ time. The Exact Weight $k$-Clique conjecture is implied by the Min Weight $k$-Clique conjecture, so it is at least as believable. Furthermore, for the special case $k=3$, both $3$SUM and APSP can be reduced to Exact Weight $3$-Clique, so that a truly subcubic algorithm for the latter problem would refute both the APSP and the $3$SUM conjectures~\cite{patrascu2010towards,williams2013finding,williams2010subcubic}. Exact Weight $k$ Clique is thus a very difficult problem. Recent work by Abboud et al.~\cite{abboudunpub} shows how to use the techniques in \cite{AbboudLW14} to reduce the Exact Weight $k$-Clique problem to (unweighted) $k$-Clique in a $4$-uniform hypergraph. Thus, if one believes the Exact Weight $k$-Clique conjecture, then one should definitely believe the $(\ell,4)$-Hyperclique Hypothesis. (A generalization of this approach also shows that Exact Weight $\ell$-Hyperclique in a $k$-uniform hypergraph can be tightly reduced to (unweighted) $\ell$-hyperclique in a $2k$-uniform hypergraph.)

We note that the hypothesis concerns dense hypergraphs. For hyperclique in sparse hypergraphs, faster algorithms are known: the results of Gao et al.~\cite{GaoIKW17} imply that an $\ell$-hyperclique in an $m$-hyperedge, $n$-node $k$-uniform hypergraph (for $\ell>k$) can be solved in $m^{\ell-1}/2^{\Theta(\sqrt{\log m})}$.

\section{No Generalized Matrix Multiplication for k$>$2}
\label{sec:genMatrixMult}
The fastest known algorithm for $\ell$-clique reduces $\ell$-clique to triangle detection in a graph and then uses matrix multiplication to find a triangle \cite{NesetrilPoljak}. One might ask, is there a similar approach to finding an $\ell$-hyperclique in a $k$-uniform hypergraph faster than $O(n^\ell)$ time?

The first step would be to reduce $\ell$-hyperclique problem in a $k$-uniform hypergraph to $k+1$-hyperclique in a $k$-uniform hypergraph. This step works fine: Assume for simplicity that $\ell$ is divisible by $(k+1)$ so that $\ell=c(k+1)$. We will build a new graph $G'$. Take all $c$-tuples of vertices of $G$ and create a vertex in $G'$ corresponding to the tuple if it forms an $c$-hyperclique in $G$ (if $c<k$, any $c$-tuple is a hyperclique, and if $c\geq k$, it is a hyperclique if all of its $k$-subsets are hyperedges).
For every choice of $k$ distinct $c$-tuples, create
 a hyperedge in $G'$ on them if every choice of $k$ nodes from their union forms a hyperedge in $G$. Now, $k+1$-hypercliques of $G'$ correspond to $\ell$-hypercliques of $G$. $G'$ is formed in $O(n^{ck})$ time and has $O(n^c)$ nodes. Hence if a $(k+1)$-hyperclique in a $k$-uniform hypergraph on $N$ nodes can be found in $O(N^{(k+1)-\eps})$ time for some $\eps>0$, then an $\ell$-hyperclique in a $k$-uniform hypergraph on $n$ nodes can be found in $O(n^{ck}+n^{c(k+1)-\eps c})=O(n^{\ell-\delta})$ time for $\delta=\min\{\ell/(k+1),c\eps\} >0$.

Thus it suffices to just find $k+1$-hypercliques in $k$-uniform hypergraphs. Following the approach for finding triangles (the case $k=2$), we want to define a suitable matrix product.

In the matrix multiplication problem we are given two matrices and we are asked to compute a third.
Matrices are just tensors of order $2$. The new product we will define is for tensors of order $k$. We will call these $k$-tensors for brevity.
The natural generalization of matrix multiplication for $k$-tensors of dimensions $n\times \ldots \times n$ ($k$ times) is as follows. 

\begin{definition}[$k$-wise matrix product]
Given  $k$ $k$-tensors of dimensions $n\times \ldots \times n$, $A^1,\ldots, A^k$, compute 
the $k$-tensor $C$ given by
\begin{align*}
C[i_1,\ldots,i_k] = &\\
&\sum_{\ell\in [n]} A^1[i_1,\cdots,i_{k-1},\ell]\cdot A^2[i_2,\cdots,i_{k-1},\ell,i_k]\cdots \\
&~~~~~~~~~~\cdots A^k[\ell,i_k, i_1,\cdots, i_{k-2}].
\end{align*}
\end{definition}

The special case of $k=3$ was defined in 1980 by Mesner et al.~\cite{MesnerTriproduct}:
Given three $3$-tensors $A^1,A^2,A^3$ with indices in $[n]\times [n]\times [n]$
compute the product $C$ defined as 
$C[i,j,k]=\sum_{\ell\in [n]} A^1[i,j,\ell]\cdot A^2[j,\ell,k]\cdot A^3[\ell,k,i]$. 
The more general definition as above was defined later by~\cite{Gnang2011} and its properties have been studied within algebra and combinatorics, e.g.~\cite{gnangarxiv}.

Now, if one can compute the $k$-wise matrix product in $T(n)$ time, then one can also find a $k+1$-hyperclique in a $k$-uniform hypergraph in the same time: define $A$ to be the adjacency tensor of the hypergraph -- it is of order $k$ and has a $1$ for every $k$-tuple that forms a hyperedge; if the $k$-wise product of $k$ copies of $A$ has a nonzero for some $k$-tuple that is also a hyperedge, then the hypergraph contains a $k+1$-hyperclique.

Now the question is:
{\em ``Is there an $O(n^{k-\eps})$ time algorithm for $(k-1)$-wise matrix product for $k\geq 4$ and $\eps>0$?''}

A priori, it seems quite plausible that a faster than $n^k$-time algorithm exists for generalized matrix product, as all of the techniques developed for matrix multiplication carry over to the general case: tensor products, rank, border rank etc.
The only thing missing is a suitable base case algorithm. Many researchers have searched for such an algorithm (e.g. 
\cite{researchers}). 
Williams asked in \cite{rmathoverflow} whether an $O(n^{4-\eps})$ time algorithm exists for $k=3$ and $\eps>0$, and this question has remained unanswered.

We will show that there is {\em no} smaller arithmetic circuit than the trivial one for generalized matrix product. In other words, there can be no analogue of the fast algorithms for clique that carry over to hyperclique.

\begin{theorem}
For every $k\geq 3$, the border rank of the tensor of $k$-wise matrix multiplication is exactly $n^{k+1}$.\label{thm:tenrank}
\end{theorem}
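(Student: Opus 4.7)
The plan is to prove matching upper and lower bounds on the border rank. The upper bound is immediate: the defining formula expresses the $k$-wise product as a sum of exactly $n^{k+1}$ rank-one tensors, one for each $(i_1,\ldots,i_k,\ell)\in[n]^{k+1}$, so the rank and hence the border rank are at most $n^{k+1}$. The content of the theorem is the matching lower bound, for which I would use the standard flattening technique: the rank of any matrix unfolding of a tensor is a lower bound on its border rank, since flattening is a linear map and matrix rank is lower-semicontinuous.

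View the $k$-wise product as a tensor $T\in V_1\otimes V_2\otimes\cdots\otimes V_{k+1}$ with each $V_j\cong\mathbb{F}^{n^k}$, where $V_1,\ldots,V_k$ index the entries of $A^1,\ldots,A^k$ and $V_{k+1}$ indexes the entries of $C$. Flatten $T$ into a matrix $M$ with rows indexed by $V_1\otimes V_{k+1}$ and columns indexed by $V_2\otimes\cdots\otimes V_k$. The rank-one decomposition above shows that $M$ has exactly one $1$-entry per tuple $(i_1,\ldots,i_k,\ell)$, located in row $(\alpha_1,\beta)=((i_1,\ldots,i_{k-1},\ell),(i_1,\ldots,i_k))$ and column $(\alpha_2,\ldots,\alpha_k)$, where each $\alpha_j$ is the $j$-th index pattern dictated by the definition of the $k$-wise product.

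The heart of the argument is showing that these $n^{k+1}$ nonzero positions occupy pairwise distinct rows and pairwise distinct columns, so they form a permutation-submatrix pattern of size $n^{k+1}$, forcing $\mathrm{rank}(M)\geq n^{k+1}$. Row distinctness is trivial since $(\alpha_1,\beta)$ already records $(i_1,\ldots,i_k,\ell)$. Column distinctness is the step that genuinely uses $k\geq 3$: the tuple $\alpha_2=(i_2,\ldots,i_{k-1},\ell,i_k)$ recovers every index except $i_1$, and because $k\geq 3$ the tuple $\alpha_3=(i_3,\ldots,i_{k-1},\ell,i_k,i_1)$ is also on the column side and its last coordinate is $i_1$. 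Hence $(\alpha_2,\ldots,\alpha_k)$ determines $(i_1,\ldots,i_k,\ell)$ uniquely. At $k=2$ the analogous flattening has rank only $n^2$, since $\alpha_2=(\ell,i_2)$ alone cannot recover $i_1$; this is reassuring, as Strassen-type algorithms push the border rank of ordinary matrix multiplication strictly below $n^3$.

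The main obstacle I anticipate is purely bookkeeping: carefully tracking which single index $i_{j-1}$ each $\alpha_j$ omits in the cyclic definition. Since $\alpha_j$ omits exactly $i_{j-1}$ (indices cyclic mod $k$), every element of $\{i_1,\ldots,i_k,\ell\}$ appears in at least one of $\alpha_2,\ldots,\alpha_k$ as soon as there are at least two such $\alpha_j$'s on the column side—i.e., as soon as $k\geq 3$—which is precisely what makes the permutation-submatrix argument go through and yields $\mathrm{rank}(M)=n^{k+1}$, completing the lower bound.
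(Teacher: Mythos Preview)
Your proposal is correct and follows essentially the same approach as the paper: bound the border rank below by the rank of a flattening of the $(k{+}1)$-factor tensor, and observe that the $n^{k+1}$ structural nonzeros land in distinct rows and distinct columns, forcing rank $n^{k+1}$. The only cosmetic difference is the choice of flattening---the paper's sketch (for $k=3$) groups two input factors against the remaining input and the output, whereas you group $V_1\otimes V_{k+1}$ against $V_2\otimes\cdots\otimes V_k$; both partitions work for the same reason, and your explicit check that $k\geq 3$ is needed (via $\alpha_3$ recovering $i_1$) is a nice addition the paper's sketch omits.
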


\begin{proof}[Sketch.]
We give a sketch of the proof for $k=3$. The general case is similar but requires more indices.
Consider the tensor $t$ which is such that $t_{i,j,k,j',k',l',k'',l'',i'', l''', i''', j'''}$ is $1$ whenever $i=i''=i''', j=j'=j''', k=k'=k''$ and $0$ otherwise. Let $t$ be in $A^*\otimes B^*\otimes C^*\otimes D^*$. Define the flattening of $t$ which is a linear transformation $T:~A^*\otimes B^*\rightarrow C\otimes D$ that takes $a(i,j,k)^*\otimes b(j',k',l)^*$ to $c(k,l,i)\otimes d(l,i,j)$ if $j=j',k=k'$ and to $0$ otherwise. The vectors $c(k,l,i)\otimes d(l,i,j)$ are $n^4$ independent vectors that span the image of $T$, and hence the map has rank $n^4$. Since $T$ is a linear transformation it must also have border rank $n^4$. Because $T$ is a flattening of $t$, $t$ must have border rank at least $n^4$.
\end{proof}

Theorem~\ref{thm:tenrank} completely rules out the arithmetic circuit approach for $k>2$. Nevertheless, there might still be an $O(n^{k+1-\eps})$ time algorithm (at least for the Boolean version of the product) that uses operations beyond $+,\cdot, /$, for instance by working at the bit level. 
Obtaining such an algorithm is quite a challenging but very interesting endeavor.

\section{Max-k-SAT to Tight Hypercycle}
\label{sec:MaxKSatHypercycle}

We are dealing with graphs and SAT instances. For ease of reading we will capitalize the variables associated with graphs (i.e. $M= |E|$  and $N= |V|$) and leave the variables associated with SAT instances in lower case (i.e. $m$ for number of clauses and $n$ for number of variables). The current fastest known algorithms for Max-$k$-SAT come from Alman, Chan and Williams~\cite{AlmanCW16}.


\begin{lemma}
	For all integers $L \geq 3$ if we can detect L-hyper cliques in a $ k$ uniform hypergraph, $G$, with $N$ nodes in time $T(N)$ then we can solve max-k-SAT in  $\tO( (mn^k)^{\binom{L}{k}}T(L2^{n/L})+  mn^k2^{nk/L})$ time.
	\label{lem:hyperSATclique}
\end{lemma}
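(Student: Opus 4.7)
The plan is to follow Williams' split-and-list paradigm, generalizing the Max-$3$-SAT to $4$-hyperclique reduction to arbitrary $L$ and $k$.

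First, I would partition the $n$ Boolean variables of the input $k$-CNF formula into $L$ disjoint blocks $V_1,\ldots,V_L$ of $n/L$ variables each, and build an $L$-partite $k$-uniform hypergraph $H$ whose $i$th part $U_i$ has one node per partial assignment to $V_i$, so $|V(H)| = L \cdot 2^{n/L}$, which matches the node count fed into $T(\cdot)$. Full Boolean assignments then biject with $L$-tuples of nodes, one per part. Next I would encode clause satisfaction as weights on $k$-tuples: since each clause $C$ involves at most $k$ literals, its variables span at most $k$ of the blocks, and I would fix a canonical $k$-block superset $S_C \in \binom{[L]}{k}$ containing them. For each $S \in \binom{[L]}{k}$ and each $k$-tuple $\alpha_S$ of partial assignments to the blocks in $S$, define $w(S,\alpha_S)$ to be the number of clauses $C$ with $S_C = S$ that are satisfied by $\alpha_S$. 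Building this table takes $\tilde{O}(m n^k \cdot 2^{nk/L})$ time by iterating over clauses and $k$-block assignments, matching the additive $m n^k 2^{nk/L}$ term. With these weights, any full assignment satisfies exactly $\sum_S w(S,\alpha_S)$ clauses, so Max-$k$-SAT reduces to finding a maximum-weight $L$-clique in the complete weighted $L$-partite $k$-uniform hypergraph on $U_1 \cup \cdots \cup U_L$.

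The last step is to strip weights by exhaustive pattern enumeration, which is needed because the oracle only detects (unweighted) hypercliques. For each weight vector $(v_S)_{S \in \binom{[L]}{k}}$ in the product of the individual weight ranges, I would build an unweighted hypergraph retaining only those $k$-tuples with $w(S,\alpha_S) = v_S$ and invoke the $L$-hyperclique detection oracle on it in time $T(L \cdot 2^{n/L})$. Any detected clique certifies that some full assignment realizes exactly $\sum_S v_S$ satisfied clauses, so the maximum over successful patterns equals the Max-$k$-SAT optimum. Since each $w(S,\alpha_S)$ is a non-negative integer bounded by $m n^k$, there are at most $(m n^k + 1)^{\binom{L}{k}}$ patterns, yielding the first term of the stated runtime.

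The main obstacle is the bookkeeping of the weight function: it must count each satisfied clause exactly once. This forces $S_C$ to be chosen deterministically (for instance, the lexicographically minimal $k$-superset of the blocks containing $C$'s variables) before the reduction runs, so that clauses whose variables happen to lie in fewer than $k$ blocks are still attributed to precisely one $k$-set. The only other correctness check is that every full assignment realizes some weight pattern $(v_S)$ and that every realizable pattern yields at least one $L$-hyperclique; both follow immediately from the bijection between full Boolean assignments and $L$-tuples of nodes from the parts.
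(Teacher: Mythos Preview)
Your proposal is correct and follows the same split-and-list plus weight-enumeration strategy as the paper, whose own proof simply observes that $k$-SAT clauses are degree-$k$ polynomials and invokes its general degree-$k$-CSP-to-hyperclique reduction. The only notable difference is the edge-weight encoding: the paper assigns to each hyperedge a sum of polynomial monomials (possibly negative, with magnitude up to roughly $mn^k 2^k$), whereas your direct clause-satisfaction count lies in $[0,m]$---both work, and your weights are in fact tighter, though you correctly quote the looser $mn^k$ bound to match the stated runtime.
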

\begin{proof}
Let $n$ be the number of variables and $m$ the number of clauses in our max-3-SAT instance.

Notice that the clauses of $k$-SAT can be represented by a $k$ degree polynomial. Each clause only depends on $k$ variables and $x_i^2 = x_i$. 

So we can apply Theorem \ref{thm:maxHypercliqueMaxCSP}. 

If the unweighted $L$-hyperclique problem on a $k$-uniform hypergraph 
can be solved in time $T(n)$ then the maximal degree-$k$-CSP problem can be solved in $O((mn^{k})^{\binom{L}{k}}T(2^{n/L})+mn^k2^{kn/L})$ time. When $L$ and $k$ are constants

\end{proof}

\begin{corollary}
For all integers $L\geq 3$ if we can detect a tight L-hyper cycle in a $d = L -\ceil{L/3} +1  $ uniform  hypergraph, $G$, with  $N = L 2^{n/L}$ nodes in time $T(N)$ then we can solve max-3-SAT in  $\tO( m^{\binom{L}{3}}T(2^{n/L})+2^{n(d/L)}+  m^{\binom{L}{3}}  2^{3n/L})$ time.
		\label{lem:hyperSATcycle}
\end{corollary}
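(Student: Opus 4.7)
The plan is to prove this as a straightforward composition of two earlier results: Lemma~\ref{lem:hyperSATclique} (which reduces Max-$k$-SAT to $L$-hyperclique in a $k$-uniform hypergraph) and Theorem~\ref{thm:hypercycleHyperclique} (which reduces $\ell$-hyperclique in a $k$-uniform hypergraph to $\ell$-hypercycle in a $\gamma$-uniform hypergraph, where $\gamma=\ell-\lceil \ell/k\rceil+1$). Specializing the latter to $\ell=L$, $k=3$ gives $\gamma=L-\lceil L/3\rceil+1=d$, exactly the uniformity appearing in the corollary.

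First I would apply Lemma~\ref{lem:hyperSATclique} with $k=3$. This turns an instance of Max-$3$-SAT on $n$ variables and $m$ clauses into (essentially) $\tO((mn^3)^{\binom{L}{3}})$ instances of $L$-hyperclique detection in a $3$-uniform hypergraph on $N=L\cdot 2^{n/L}$ nodes, plus a one-shot construction cost of $\tO(mn^3\cdot 2^{3n/L})$. In the corollary's bookkeeping the $n^3$ polynomial factor is absorbed into $m$ (the clauses only depend on $k=3$ variables, so there are at most $O(n^3)$ distinct clauses).

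Second I would feed each $3$-uniform $L$-hyperclique instance through Theorem~\ref{thm:hypercycleHyperclique}. That theorem builds, in time $O(N^{d})$, a $d$-uniform hypergraph on the same vertex set such that $L$-hypercliques in the original correspond bijectively to $L$-hypercycles in the new hypergraph. Since $N=L\cdot 2^{n/L}$ and $L$ is a constant, the construction costs $O(N^{d})=O(2^{nd/L})$ per instance, after which the assumed $T(N)$-time hypercycle detector is invoked.

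Chaining the two reductions gives the final runtime
\[
\tO\!\Bigl((mn^3)^{\binom{L}{3}}\bigl(T(2^{n/L})+2^{nd/L}\bigr)+mn^3\,2^{3n/L}\Bigr),
\]
which, after absorbing the $n^3$ factor into $m$ as in the corollary statement, is exactly what is claimed. The only point needing care is arithmetic verification that the uniformity $\gamma$ produced by Theorem~\ref{thm:hypercycleHyperclique} for $k=3$ matches the $d$ in the statement, and that the $O(N^{d})$ overhead from that theorem is precisely the isolated $2^{nd/L}$ summand in the final bound; both are immediate. There is no genuine obstacle here—the content of the corollary is the composition itself, and once the parameters are aligned the time-complexity bound falls out directly.
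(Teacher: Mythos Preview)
Your proposal is correct and follows essentially the same route as the paper: apply Lemma~\ref{lem:hyperSATclique} with $k=3$, then feed the resulting $3$-uniform $L$-hyperclique instances through Theorem~\ref{thm:hypercycleHyperclique} to obtain $d$-uniform $L$-hypercycle instances, and add up the costs. One bookkeeping remark: your displayed bound carries the factor $(mn^3)^{\binom{L}{3}}$ in front of the $2^{nd/L}$ construction term, whereas the corollary (and the paper's own proof) records that term as a standalone $2^{nd/L}$; the paper is treating the hypercycle construction as a one-shot cost rather than one per guessed-weight instance, so your final sentence that the two expressions are ``exactly'' equal is slightly off, but the approach and all substantive steps coincide.
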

\begin{proof}
 Take Lemma \ref{lem:hyperSATclique} in the case of $k=3$ to get a reduction from max-3-SAT to $ m^{\binom{L}{k}}T(N)$ instances of L-hyperclique in a $k$ regular graph. let $d = L -\ceil{L/3} +1  $. We then use Theorem \ref{thm:hypercycleHyperclique} to generate a d-uniform hypergraph $G'$ on $N = L 2^{n/L}$ nodes such that $G$ has a L-hypercycle only if the associated L-hyperclique instance has a clique. The overhead of this reduction is $O(2^{n(d/L)} + m^{\binom{L}{3}}  2^{3n/L})$ time to create these graphs. It takes $m^{\binom{L}{3}}T(N)$ time to run L-hypercycle on the $m^{\binom{L}{3}}$ instances. Giving us a total time of $\tO( m^{\binom{L}{3}}T(2^{n/L})+2^{n(d/L)}+  m^{\binom{L}{3}}  2^{3n/L})$.
\end{proof}

\begin{corollary}
	If directed L-cycle can be solved in a graph $G$ with $N=2^{n(d-1)/L}$ nodes and $M=2^{nd/L}$, where $d = L -\ceil{L/3} +1 $, in time $T(N,M)$ then max-3-SAT can be solved in time $\tO(m^{\binom{L}{3}} T(N,M)+ M + mn^32^{3n/L})$.
	\label{cor:dircycSAT}
\end{corollary}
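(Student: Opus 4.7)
The plan is simply to compose the two previous reductions in the chain: Corollary~\ref{lem:hyperSATcycle}, which reduces Max-3-SAT to $L$-hypercycle in a $d$-uniform hypergraph, and Lemma~\ref{lem:directedCycleHyperCycle}, which converts a hypercycle instance into a directed cycle instance in an ordinary graph. The parameter $d=L-\lceil L/3\rceil+1$ is exactly what drops out of Corollary~\ref{lem:hyperSATcycle}, so the uniformity of the intermediate hypergraph is already set up to plug directly into Lemma~\ref{lem:directedCycleHyperCycle}.

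First, I would apply Corollary~\ref{lem:hyperSATcycle} to the given Max-3-SAT instance on $n$ variables and $m$ clauses. This yields $m^{\binom{L}{3}}$ instances of tight $L$-hypercycle detection in a $d$-uniform hypergraph on $N' = L\,2^{n/L}$ nodes, together with a preprocessing cost of $\tO(2^{nd/L} + m^{\binom{L}{3}} 2^{3n/L})$ for building these hypergraphs (including the cost $\tO(mn^3 2^{3n/L})$ of encoding the 3-SAT clauses as a degree-3 polynomial in the proof of Lemma~\ref{lem:hyperSATclique}).

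Next, I would feed each of these hypergraph instances into Lemma~\ref{lem:directedCycleHyperCycle} with uniformity parameter $\lambda=d$. That lemma produces an $L$-circle-layered directed graph on $O((N')^{d-1})$ nodes and $O((N')^{d})$ edges whose $L$-cycles are in one-to-one correspondence with the tight $L$-hypercycles of the input hypergraph, and this construction runs in time $\tO((N')^d)$. Since $L$ is a constant, $(L\,2^{n/L})^{d-1} = \Theta(2^{n(d-1)/L}) = N$ and $(L\,2^{n/L})^{d} = \Theta(2^{nd/L}) = M$, so the resulting graph matches the size bounds in the statement. Each invocation of the hypothesized $T(N,M)$ algorithm for directed $L$-cycle detection then decides one hypercycle instance.

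Summing over all $m^{\binom{L}{3}}$ reductions: the preprocessing contributes $\tO(mn^3 2^{3n/L})$ plus $\tO(M)$ per instance, and the $L$-cycle calls contribute $m^{\binom{L}{3}} T(N,M)$. Collecting the dominant terms gives the stated bound $\tO(m^{\binom{L}{3}} T(N,M) + M + mn^3 2^{3n/L})$. There is no real obstacle here; the only thing to keep an eye on is the bookkeeping that the factor $L$ appearing inside $N' = L\,2^{n/L}$ only contributes a $\poly(L)$ overhead to the node and edge counts, which is absorbed by $\tO(\cdot)$ under our convention that $L$ is constant.
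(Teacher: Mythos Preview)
Your proposal is correct and follows essentially the same route as the paper: compose Corollary~\ref{lem:hyperSATcycle} (Max-3-SAT $\to$ tight $L$-hypercycle in a $d$-uniform hypergraph) with Lemma~\ref{lem:directedCycleHyperCycle} (hypercycle $\to$ directed $L$-cycle), and then tally the costs. The paper's proof is simply a terser statement of the same composition, phrased in the reverse direction.
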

\begin{proof}
		If directed L-cycle can be solved in a graph $G$ with $N=2^{n(d-1)/L}$ nodes and $M=2^{nd/L}$  in time $T(N,M)$ then by Lemma \ref{lem:directedCycleHyperCycle} we can solve tight L-hyper cycle in a $d$ regular graph in time $O(T(N,M)+M)$.
		
		If we could detect a tight L-cycle  in a $d $ regular  hypergraph, $G$, with  $N = L 2^{n(d-1)/L}$ nodes in time $O(T(N,M)+M)$ then we can solve max-3-SAT in  $\tO(m^{\binom{L}{3}}T(N,M)+m^{\binom{L}{3}} M+  2^{3n/L})$ time.
		
\end{proof}

\begin{theorem}
		If directed L-cycle, for $L>3$, can be solved in a graph $G$ in time $O(M^{c_L-\epsilon})$, where $c_L = L/d =L/( L -\ceil{L/3} +1 ) $ then max-3-SAT can be solved in time $\tO( 2^{(1-\epsilon')n})$.
		\label{thm:dirCycleMaxSat}
\end{theorem}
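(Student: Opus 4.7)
The plan is to chain together Corollary~\ref{cor:dircycSAT} with the hypothesized runtime $T(N,M) = O(M^{c_L - \epsilon})$, and then verify that the exponent in $n$ in the resulting Max-3-SAT algorithm is strictly less than $1$. The key observation driving the calibration is that the parameter $c_L = L/d$ is chosen precisely so that $M^{c_L} = 2^n$ under the reduction's edge-count $M = 2^{nd/L}$; hence $M^{c_L - \epsilon} = 2^{n(1 - \epsilon d/L)} = 2^{n(1 - \epsilon/c_L)}$, and the $\epsilon$-saving on the cycle side converts directly into an $\epsilon/c_L$-saving on the SAT side.

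Concretely, I would first invoke Corollary~\ref{cor:dircycSAT} with $d = L - \lceil L/3 \rceil + 1$, obtaining a Max-3-SAT algorithm with running time
\begin{align*}
\tilde{O}\bigl(m^{\binom{L}{3}}\, T(N,M)\;+\;m^{\binom{L}{3}}\,M\;+\;mn^3\,2^{3n/L}\bigr).
\end{align*}
Substituting the assumed bound $T(N,M) = O(M^{c_L - \epsilon})$ and using $M^{c_L} = 2^n$, the leading term becomes $m^{\binom{L}{3}} \cdot 2^{n(1 - \epsilon/c_L)}$. Since $L$ is a fixed constant and for a Max-3-SAT instance on $n$ variables we can assume $m = \poly(n)$ after deduplicating clauses, the factor $m^{\binom{L}{3}}$ is merely $2^{o(n)}$ and gets absorbed into the $\tilde{O}$.

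Next I would check that the remaining two summands are negligible. Because $L > 3$, we have $d/L = 1 - \lceil L/3 \rceil/L + 1/L < 1$ and $3/L < 1$; so both $M = 2^{nd/L}$ and $2^{3n/L}$ are of the form $2^{\alpha n}$ for some fixed $\alpha < 1$ that depends only on $L$. Therefore they are dominated by $2^{n(1 - \epsilon')}$ whenever $\epsilon'$ is chosen sufficiently small relative to $\min(1 - d/L,\; 1 - 3/L)$.

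Putting these pieces together, the overall running time is
\begin{align*}
\tilde{O}\bigl(2^{n(1 - \epsilon')}\bigr),\qquad \epsilon' = \min\!\Bigl(\tfrac{\epsilon}{c_L},\; 1 - \tfrac{d}{L},\; 1 - \tfrac{3}{L}\Bigr) - o(1) > 0,
\end{align*}
which is exactly the conclusion claimed. There is no real obstacle here beyond careful bookkeeping; the mildest subtlety is ensuring that the polynomial clause-count blowup $m^{\binom{L}{3}}$ is genuinely subexponential in $n$, which is standard once one fixes $L$ as a constant depending only on $\epsilon$ (and uses $m = \poly(n)$ as a WLOG preprocessing step).
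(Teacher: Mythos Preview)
Your proposal is correct and follows essentially the same approach as the paper: invoke Corollary~\ref{cor:dircycSAT}, substitute $T(N,M)=O(M^{c_L-\epsilon})$, compute $M^{c_L-\epsilon}=2^{(1-\epsilon d/L)n}$ (your $\epsilon/c_L$ is exactly the paper's $\epsilon d/L$), and then observe that $d/L<1$ and $3/L<1$ for $L>3$ so the remaining terms are $2^{\alpha n}$ with $\alpha<1$, while the $m^{\binom{L}{3}}$ factor is polynomial and hence $2^{o(n)}$. Your write-up is in fact slightly more explicit than the paper's about the value of $\epsilon'$ and about why $m=\poly(n)$ suffices.
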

\begin{proof}
	We can plug in the running time $O(M^{c_L-\epsilon})$ into Corollary \ref{cor:dircycSAT}. On a graph with $N=2^{n(d-1)/L}$ nodes and $M=2^{nd/L}$ this running time gives 
	$$T(N,M)= 2^{(c_L-\epsilon)nd/L}= 2^{(L/d-\epsilon)nd/L} = 2^{(1 -\epsilon d/L)n} .$$ 
	We get that max 3-SAT is solved in time $\tO( (mn^k)^{\binom{L}{k}}2^{(1-\epsilon d/L)n} +m^{\binom{L}{k}} 2^{nd/L}+ 2^{n/L})$. Note that $d/L<1$ for $L>3$ and $d$ and $L$ are constants. Further note that $(mn^k)^{\binom{L}{k}}$ is a polynomial factor. Thus, max 3-SAT is solved in  $\tO( 2^{(1-\epsilon')n})$.
\end{proof}

This theorem gives us a lower bound for detecting L-cycles for large constant values of $L$ of $\tOmg(M^{3/2})$

\begin{corollary}
	If we can solve directed L-cycle in time $O( M^{3/2-\epsilon}) $ for all constant $L$  then we can solve max-3-SAT in time $O(2^{1-\epsilon'}n)$.
	\label{cor:1.5LBforarbLargeCycle}
\end{corollary}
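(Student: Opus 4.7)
The plan is to derive this corollary as a limiting consequence of Theorem~\ref{thm:dirCycleMaxSat}. Recall that Theorem~\ref{thm:dirCycleMaxSat} gives, for each constant $L > 3$, an exponent threshold $c_L = L/d_L$ where $d_L = L - \lceil L/3 \rceil + 1$: beating $M^{c_L}$ by any polynomial factor for directed $L$-cycle yields a $2^{(1-\epsilon')n}$-time algorithm for Max-3-SAT. So the entire proof reduces to showing that the sequence $c_L$ approaches $3/2$ from below as $L$ grows, and then choosing $L$ large enough so that the hypothesized $O(M^{3/2-\epsilon})$ bound falls below the $c_L$ threshold.

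First I would analyze the exponents $c_L$ explicitly. Writing $L = 3q + r$ with $r \in \{0,1,2\}$, we get $\lceil L/3 \rceil = q + \mathbf{1}[r \ne 0]$, so $d_L = L - q - \mathbf{1}[r\ne 0] + 1 = 2L/3 + O(1)$. Thus
\[
c_L \;=\; \frac{L}{2L/3 + O(1)} \;=\; \frac{3}{2} - O(1/L),
\]
and in particular $c_L < 3/2$ for every finite $L$ while $\lim_{L \to \infty} c_L = 3/2$.

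Next, given $\epsilon > 0$ from the hypothesis, choose $L$ large enough that $3/2 - c_L < \epsilon/2$, equivalently $c_L > 3/2 - \epsilon/2$. The hypothesis then gives a directed $L$-cycle algorithm running in time
\[
O\bigl(M^{3/2 - \epsilon}\bigr) \;=\; O\bigl(M^{c_L - (c_L - 3/2 + \epsilon)}\bigr) \;=\; O\bigl(M^{c_L - \epsilon/2}\bigr),
\]
since $c_L - 3/2 + \epsilon \ge \epsilon/2 > 0$. This algorithm meets the hypothesis of Theorem~\ref{thm:dirCycleMaxSat} for this particular $L$ with improvement parameter $\epsilon/2$, which yields a Max-3-SAT algorithm running in $\tilde{O}(2^{(1-\epsilon')n})$ time for some $\epsilon' > 0$ (depending on $L$ and $\epsilon$). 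Finally, the polylogarithmic overhead hidden in the $\tilde{O}$ is absorbed by a slight decrease of $\epsilon'$, yielding the clean bound $O(2^{(1-\epsilon')n})$.

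The main step is really just the numerical observation that $c_L \to 3/2$ from below; everything else is a direct invocation of Theorem~\ref{thm:dirCycleMaxSat}. The only mild subtlety is the quantifier order: the hypothesis supplies an algorithm uniformly for all constant $L$ with the same $\epsilon$, which is precisely what allows us to first fix $L$ based on $\epsilon$ and then feed the resulting concrete bound into the theorem. Since both $L$ and the implied $\epsilon'$ depend only on $\epsilon$, this causes no issue.
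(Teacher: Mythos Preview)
Your proposal is correct and follows essentially the same approach as the paper: both observe that $c_L = L/(L-\lceil L/3\rceil+1) \to 3/2$ from below, pick $L$ large enough that $c_L$ exceeds $3/2-\epsilon$ (the paper gives the explicit threshold $L > (9/\epsilon-6)/4$, you use the asymptotic $c_L = 3/2 - O(1/L)$), and then invoke Theorem~\ref{thm:dirCycleMaxSat}.
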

\begin{proof}
	
	We will use Theorem \ref{thm:dirCycleMaxSat} and set $L$ large enough that $c_L> 3/2-\epsilon$.
	
	As $L /( L -\ceil{L/3} +1 )> L/(2L/3+1) = 3/2 - 9/(4L+6)$. If $L >(9/\epsilon -6)/4$ then $c_L > 3/2-\epsilon$. We can now invoke Theorem \ref{thm:dirCycleMaxSat}. If we can solve  L-cycle in time $O( M^{3/2-\epsilon})$ then we can L-cycle in time $O( M^{c_L-\delta})$ for some $\delta>0$. Thus we can solve max-3-SAT in time $O(2^{(1-\epsilon')n})$.
\end{proof}

What if you care about specifically solving directed cycle for a particular constant $L$? Our result will garner improvements over the previous result of $\tOmg(M^{4/3})$ for many small cycle lengths. Notably for 7-cycle we get the bound of $\tOmg(M^{7/5})$ (and $7/5>4/3$). 
\begin{corollary}
	If we can solve directed 7-cycle in a graph with $M = N^{5/4}$ in time $O( M^{7/5-\epsilon}) $  then we can solve max-3-SAT in time $O(2^{(1-\epsilon')n})$.
	\label{cor:7cycleLB}
\end{corollary}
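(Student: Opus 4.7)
The plan is to obtain Corollary \ref{cor:7cycleLB} as a direct specialization of Theorem \ref{thm:dirCycleMaxSat} at $L=7$, so essentially all the real work has already been done in Theorem \ref{thm:dirCycleMaxSat}; we only need to chase the constants.

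First I would compute $d$ for $L=7$: by definition $d = L - \lceil L/3 \rceil + 1$, so
\[
d = 7 - \lceil 7/3 \rceil + 1 = 7 - 3 + 1 = 5.
\]
This gives the exponent $c_L = L/d = 7/5$, which is exactly the exponent $7/5$ appearing in the corollary's hypothesis. Next I would verify that the graph parameters in the corollary match those in Corollary \ref{cor:dircycSAT}: that corollary applies the reduction to a graph with $N = 2^{n(d-1)/L} = 2^{4n/7}$ nodes and $M = 2^{nd/L} = 2^{5n/7}$ edges, and indeed $N^{5/4} = 2^{(4n/7)(5/4)} = 2^{5n/7} = M$, so the sparsity regime $M = N^{5/4}$ in the statement is exactly the regime produced by the reduction.

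With $L=7$ fixed as a constant, I would then just invoke Theorem \ref{thm:dirCycleMaxSat}: assuming a directed $7$-cycle algorithm running in $O(M^{7/5-\epsilon})$ time on graphs in this density regime, the theorem yields a Max-$3$-SAT algorithm running in $\tilde O(2^{(1-\epsilon')n})$ for some $\epsilon' > 0$ depending only on $\epsilon$ and $L=7$. Because $L$ and $d$ are constants, the $(mn^k)^{\binom{L}{k}}$ polynomial factors in the proof of Theorem \ref{thm:dirCycleMaxSat} are absorbed into the $\tilde O$ and the shift from $\epsilon$ to $\epsilon'$, exactly as in the derivation of Corollary \ref{cor:1.5LBforarbLargeCycle}.

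There is no real obstacle here; the only thing to be careful about is the arithmetic matching $d=5$, $c_L = 7/5$, and $M = N^{5/4}$ so that the density regime in the statement coincides with the density regime produced by the chain \textsc{Max-3-SAT} $\to$ hyperclique $\to$ hypercycle $\to$ directed cycle in Corollaries \ref{lem:hyperSATcycle} and \ref{cor:dircycSAT}. Once these identifications are made, Corollary \ref{cor:7cycleLB} is immediate from Theorem \ref{thm:dirCycleMaxSat}.
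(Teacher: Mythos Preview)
Your proposal is correct and follows exactly the same approach as the paper: plug $L=7$ into Theorem~\ref{thm:dirCycleMaxSat}, compute $d=5$ and $c_L=7/5$, and read off the bound. You actually include more detail than the paper does (verifying that the reduction produces graphs with $M=N^{5/4}$), which is helpful but not strictly necessary.
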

\begin{proof}
	Once again we will use Theorem \ref{thm:dirCycleMaxSat} and we will plug in $L=7$ note that $c_L = 7/5$ giving us the bound. 
\end{proof}

\paragraph{Acknowledgments.} V.V.W. would like to acknowledge J.M. Landsberg and Mateusz Michalek for valuable discussions about tensors rank lower bounds. We would like to thank the anonymous reviewers whose suggestions we implemented.

\let\realbibitem=\bibitem
\def\bibitem{\par \vspace{-0.5ex}\realbibitem}

\bibliographystyle{alpha}
\bibliography{mybib} 

\newpage

\appendix
\section{Prior Work On Improved Running Times}
\label{sec:relatedwork}

In this section we provide some of the prior algorithmic work on the various problems that we study. We limit our scope to algorithms that compute the problems below exactly. We note that faster approximation algorithms are known for many of them.

\paragraph{$k$-Cycle.}
The complexity of finding a $k$-Cycle in an $n$ node, $m$ edge graph depends heavily on whether the graph is directed and whether $k$ is even or odd.
In directed graphs, the fastest algorithm for finding $k$-cycles run in $\tilde{O}(n^\omega)$ time~\cite{AlonYZ16}. For sparse graphs, \cite{alon1997finding} obtained faster algorithms using matrix multiplicatin. When $k$ is odd, the $k$-cycle problem in directed graphs is equivalent to the $k$-cycle problem in undirected graphs (see e.g. \cite{vthesis}). When $k$ is even, however, the problem is much easier in undirected graphs. Yuster and Zwick~\cite{evenfaster} showed that a $k$-cycle in an undirected graph can be found in $\tilde{O}(n^2)$ time for any even constant $k$. Dahlgaard et al.~\cite{DahlgaardKS17} recently extended this result, giving an $\tilde{O}(m^{2k/(k+2)})$ time algorithm for all even $k$.

\paragraph{Shortest Cycle.}
The Shortest Cycle problem in weighted graphs was recently shown to be solvable in $O(mn)$ time by 
Orlin and Sede{\~{n}}o{-}Noda \cite{OrlinS17}. In directed or undirected graphs with small integer weights in the interval $\{1,\ldots, M\}$, Roditty and Vassilevska Williams~\cite{roditty2011minimum} showed how to find a shortest cycle in $\tilde{O}(Mn^\omega)$ time, generalizing a result by Itai and Rodeh~\cite{itai1978finding} that obtained $O(n^\omega)$ for unweighted graphs.

\paragraph{APSP.}
In directed or undirected weighted graphs, the fastest algorithm for APSP runs in $n^3/2^{\Theta(\sqrt{\log n})}$ time~\cite{williams2014faster} for dense graphs. For APSP in sparse the fastest algorithm is $O(mn+n^2\log\log n)$ time~\cite{pettie2002faster} for directed graphs, and $O(mn \lg(\alpha(m,n)))$ in undirected graphs~\cite{PettieR05}. 

For undirected graphs with integer weights in $\{-M,\ldots,M\}$, Shoshan and Zwick~\cite{ShoshanZ99} provided an $\tilde{O}(Mn^\omega)$ time algorithm, generalizing Seidel's $O(n^\omega \log n)$ time algorithm for unweighted undirected graphs \cite{seidel1995all}.
Zwick~\cite{zwickbridge} obtained an $O(M^{0.681}n^{2.532})$ time algorithm for APSP in directed graphs with integer weights in $\{-M,\ldots,M\}$; his algorithm for $M=1$ gives the best known running time for APSP in directed unweighted graphs. Chan~\cite{chan2012all} obtained $O(mn \log\log n /\log n)$ time for sparse undirected graphs, while for APSP in sparse directed graphs the $O(mn)$ BFS-based algorithm is currently the fastest.

\paragraph{Radius.}
The fastest known algorithm for Radius in directed or undirected graphs with large edge weights just uses the best known APSP algorithms above. 
For graphs with integer edge weights in $\{-M,\ldots,M\}$, Cygan et al.~\cite{CyganGS15} provide an $\tilde{O}(Mn^\omega)$ time algorithm; when setting $M=1$ one gets the fastest known runtime of $\tilde{O}(n^\omega)$ for unweighted graphs.

\paragraph{Wiener Index.} The best algorithms for Wiener index compute APSP and sum up all the pairwise distances. (See Mohar and Pisanski~\cite{wienerindex} for versions of the Floyd-Warshall and other algorithms for Wiener index.)

\paragraph{Replacement Paths (RP) and Second Shortest Path (SSP).} 
RP and SSP for undirected
graphs can be solved very efficiently: Malik et al. ~\cite{malik1989k} presented an $\tilde{O}(m)$ time algorithm. 
Nardelli et al.~\cite{NardelliPW03} improved this runtime to $O(m\alpha(n))$ in the word-RAM model.
 The best runtime for both problems
in directed graphs with arbitrary edge weights is $O(mn +
n^2 \log \log n)$ by
Gotthilf and Lewenstein~\cite{gotthilf2009improved}.
For dense weighted graphs, the best algorithm use the best algorithms for APSP.
Vassilevska Williams~\cite{Williams11replace} obtained an $\tilde{O}(Mn^\omega)$ time algorithm for RP and SSP for directed graphs with edge weights in $\{-M,\ldots,M\}$.
For unweighted directed graphs, Roditty and
Zwick~\cite{RodittyZ12} gave a randomized combinatorial algorithm
which computes replacement paths $\tilde{O}(m\sqrt n)$
time.

\paragraph{Betweenness Centrality.}
The notion of betweenness centrality was introduced
by Freeman~\cite{freeman} in the context of social networks, and
since then became one of the most important graph centrality
measures in the applications. Brandes'
algorithm~\cite{brandes} computes the betweenness centrality of
all nodes in time $O(mn + n^2 \log n)$ using
a counting variant of Dijkstras algorithm. Similar to other papers in the area, \cite{brandes} neglects
the bit complexity of the counters storing the number
of pairwise shortest paths. This is reasonable in practice
since the maximum number $N$ of alternative shortest
paths between two nodes tends to be small. By also considering $N$, the running
time grows by a factor $O(\log N) = O(n \log n)$.

\section{Beating O(mn) for Many Graph Problems is Probably Hard}\label{sec:newreducs}
\subsection{Undirected k-cycle reduces to radius}

First we will demonstrate a reduction from $k$-cycle to Radius in the undirected weighted case. Radius will solve $k$-cycle at any graph density and thus we will get a lower bound of $mn^{1-o(1)}$ for all densities $m = n^{1+1/L}$.

To make the reduction more readable we will first prove a lemma that gives a reduction from minimum $k$-cycle to negative $k$-cycle in a $k$-circle-layered graph. The latter problem is defined as follows: given a directed weighted graph, find a $k$-cycle of negative total edge weight. 

\begin{lemma}
	If the negative directed $k$-cycle problem on a $k$-circle-layered graph, $G$
	with weights in the range $[-R,R]$ can be solved in time $T(n,m,R)$ time then the directed minimum k-cycle problem with weights in the range $[-R,R]$ can be solved in  $\tO(\lg(R)T(n,m,R))$ time.
	\label{lem:negativeDirCycle}
\end{lemma}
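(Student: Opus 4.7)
The plan is to binary search on the weight of the minimum $k$-cycle, using the negative $k$-cycle detector as a subroutine. First, by Lemma~\ref{lem:colorcode}, I may assume the input graph is already $k$-circle-layered with vertex parts $V_1,\ldots,V_k$; this preprocessing adds only a $\tilde{O}(k^k)$ factor, which is constant since $k$ is a constant, and is absorbed into $\tilde{O}$.

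The key observation is that in a $k$-circle-layered directed graph, every $k$-cycle uses exactly one edge from each of the $k$ layers (where layer $i$ denotes the edges from $V_i$ to $V_{i+1 \bmod k}$). This lets me implement a uniform ``weight shift'': pick layer $1$, and subtract a chosen integer $t$ from every edge in it, producing a new graph $G_t$ on the same vertex and edge sets. Every $k$-cycle in $G_t$ has weight exactly its $G$-weight minus $t$. In particular, $G_t$ contains a negative $k$-cycle if and only if $G$ contains a $k$-cycle of weight strictly less than $t$.

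With this equivalence in hand, binary search over $t$ in the range $[-kR,\ kR+1]$ (the possible range of $k$-cycle weights, since the original weights are in $[-R,R]$) pinpoints the minimum $k$-cycle weight $W^\star$ using only $O(\log(kR)) = O(\log R)$ queries to the oracle. Each query runs on a graph whose edge weights lie in $[-(k+1)R,\ (k+1)R]$ and so takes $T(n,m,O(R))$ time; since $k$ is a constant this is $T(n,m,R)$ up to absorbed constants, and the total time is $\tilde{O}(\log(R) \cdot T(n,m,R))$ as claimed. An actual minimum $k$-cycle is recovered as the witness returned by the oracle at the threshold $t = W^\star + 1$, and if no such $t$ exists the algorithm correctly reports that $G$ contains no $k$-cycle.

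The only subtle point is making sure the shifted weight magnitudes do not push the oracle outside its stated regime: since $|t|\le kR$ and we shift only one layer, the new weights stay within a constant-factor window around $R$, so the running time of the oracle is unchanged up to constants depending on $k$. No other obstacles arise, because simplicity of the $k$-cycles is automatic in a $k$-circle-layered graph with exactly $k$ parts, and longer cycles (of length $2k, 3k, \ldots$) are irrelevant since the negative $k$-cycle oracle only returns cycles of length exactly $k$.
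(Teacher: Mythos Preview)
Your proposal is correct and takes essentially the same approach as the paper: reduce to a $k$-circle-layered graph via color-coding, subtract a threshold $t$ from all edges in one fixed layer so that every $k$-cycle's weight drops by exactly $t$, and binary search on $t$ using the negative $k$-cycle oracle. Your write-up is in fact slightly more careful than the paper's, as you explicitly note why longer ($2k,3k,\ldots$) cycles are a non-issue and why the shifted weight range stays within a constant factor of $R$.
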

\begin{proof}
We use the color coding lemma (Lemma \ref{lem:colorcode}) to reduce the directed minimum $k$-cycle problem to directed minimum $k$-cycle in a $k$-circle-layered graph. Let the partitions of this graph $G$ be $U_1$, $U_2$, \ldots , $U_k$.

Now, given any weight $T\in [-Rk,Rk]$, we can add $-T$ to all edges between $U_1$ and $U_2$, creating a new graph $G(T)$ , and ask for a negative $k$-cycle in $G(T)$.
If such a cycle $C$ exists, then since $C$ must go around the circle of the $k$-circle-layered graph, then the weight of $C$ in $G(T)$ is its weight $w(C)$ in $G$, minus $T$. Notice that $w(C)-T<0$ if and only if $w(C)<T$. Hence, for any $T$, using one negative $k$ cycle query we can check whether the Min Weight $k$ Cycle in $G$ has weight $<T$ or not.

We can now binary search over the possible cycle lengths $O(\lg(R))$ times to find the minimum cycle length in $G$.
\end{proof}

\begin{theorem}
	If Radius in an undirected weighted $N$ node $M$ edge graph can be computed in $f(N,M)$ time 
	then the minimum weight directed $k$-cycle problem in $n$ node, $m$ edge graphs with weights in $[-R,R]$ can be solved in $\tO(f(O(n),O(m))\log R+m)$ time.
	\label{thm:undRadiusisMN}
\end{theorem}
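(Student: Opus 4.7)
The plan is to reduce minimum weight directed $k$-cycle to undirected Radius in two stages. In the first stage, I would apply Lemma~\ref{lem:negativeDirCycle} to reduce the min weight directed $k$-cycle problem on an $n$-node, $m$-edge graph with weights in $[-R,R]$ to $O(\log R)$ instances of negative $k$-cycle detection in a $k$-circle-layered directed graph, each of size $O(n+m)$. This accounts for the $\log R$ factor in the overall bound and reduces everything to the following: given a $k$-circle-layered directed graph $G$ with partitions $U_0, \dots, U_{k-1}$ and weights in $[-R,R]$, decide whether $G$ has a negative $k$-cycle using a single undirected Radius query on a graph of size $O(n+m)$.

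For the reduction itself, I would construct $H$ by ``unfolding'' $G$ into $k+1$ layers: for each $v \in U_{i \bmod k}$ and each $i \in \{0,1,\dots,k\}$, put a copy $v^{(i)}$ in layer $i$, and for each directed edge $(u,v)$ of $G$ from $U_i$ to $U_{(i+1) \bmod k}$, put an undirected edge between $u^{(i)}$ and $v^{(i+1)}$ with weight $w(u,v) + W$ for a large constant $W > kR$. The large shift $W$ forces the shortest undirected path from $v^{(0)}$ to $v^{(k)}$ to use exactly $k$ forward edges, because any path taking a ``backward'' step would need at least $k+2$ edges total and so weight at least $(k+2)(W-R) > kW + kR$. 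Hence the shortest $v^{(0)}$-to-$v^{(k)}$ path corresponds to a $k$-cycle through $v$ in $G$, and $d_H(v^{(0)}, v^{(k)}) = kW + c(v)$, where $c(v)$ denotes the weight of the minimum $k$-cycle through $v$.

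To make the radius of $H$ encode $\min_v c(v)$, I would augment the unfolded graph with a ``witness'' structure: for each $v \in U_0$, attach a private vertex $u_v$ adjacent only to $v^{(k)}$ (with weight $0$), together with a carefully weighted hub connecting the layer-$0$ copies so that every non-witness vertex has distance at most $kW + c(v)$ from $v^{(0)}$. Then $\mathrm{ecc}_H(v^{(0)}) = d_H(v^{(0)}, u_v) = kW + c(v)$, so $\mathrm{Radius}(H) = kW + \min_v c(v)$ is attained at the minimizing $v^{(0)}$. A single Radius query on $H$ therefore tells us whether $G$ contains a negative $k$-cycle by comparing $\mathrm{Radius}(H)$ to the threshold $kW$. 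Since $H$ has $O(n)$ vertices and $O(m)$ edges, combining with stage one yields the claimed $\tO(f(O(n),O(m))\log R + m)$ running time.

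The main obstacle is the design of the augmenting structure, because Radius is a min-over-max while the min $k$-cycle is a min-over-min. The gadget must simultaneously (i) expose $kW + c(v)$ as the maximum distance from $v^{(0)}$, (ii) ensure that all other distances from $v^{(0)}$ are bounded above by this quantity so $u_v$ is indeed the eccentric vertex, and (iii) prevent foreign undirected paths from accidentally shortening $d_H(v^{(0)}, u_v)$ below $kW + c(v)$ (which could otherwise happen if the hub allowed a path to route through some $w^{(0)}$ with $w \neq v$ and reach $u_v$ via a cheaper $k$-step walk). Balancing the hub edge weights against the shift $W$ so that these three conditions hold simultaneously while keeping $|V(H)|,|E(H)| = O(n+m)$ is the delicate step; once this is in place, the rest of the proof is routine bookkeeping.
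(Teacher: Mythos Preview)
Your high-level plan matches the paper's: reduce to negative $k$-cycle detection in a $k$-circle-layered graph via Lemma~\ref{lem:negativeDirCycle}, unfold into $k{+}1$ layers with a large additive shift, then arrange an undirected gadget so that the Radius drops below a threshold iff some $v$ lies on a negative $k$-cycle. You also correctly identify the min--max versus min--min mismatch as the crux.

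The gap is that your concrete construction does not satisfy your own constraint (ii). You attach a private witness $u_w$ adjacent only to $w^{(k)}$ and connect the layer-$0$ copies by a hub. But for $w\neq v$ the only routes from $v^{(0)}$ to $u_w$ are either a direct layered walk $v^{(0)}\to\cdots\to w^{(k)}$ (which may not exist, or may cost $\ge kW$) or a detour through the hub to $w^{(0)}$ and then forward through the layers, costing at least $(\text{hub cost})+kW+c(w)$. When $c(w)\ge 0$ this is $\ge kW$, so even if $c(v)<0$ you get $\mathrm{ecc}(v^{(0)})\ge d(v^{(0)},u_w)\ge kW$ and the Radius test fails to witness the negative cycle. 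A layer-$0$ hub cannot make the foreign witnesses $u_w$ close to $v^{(0)}$ without simultaneously creating shortcuts to $v^{(k)}$; your constraints (ii) and (iii) are in genuine tension under this design, and no choice of hub weight resolves it.

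The paper's resolution is to drop the private witnesses entirely and instead connect layer $0$ to layer $k$ through a \emph{bit-indexing} gadget $B$ of $\lceil\log n\rceil+1$ nodes: $b_i$ is joined (with total two-hop weight just below $kF$) to every $v_1^{(j)}$ whose $i$th index bit is $1$ and to every ${v'}_1^{(j)}$ whose $i$th bit is $0$. For $j\ne j'$ some bit differs, giving a short $v_1^{(j)}$--$b_i$--${v'}_1^{(j')}$ path; for $j=j'$ no $b_i$ is adjacent to both copies, so the only $v_1^{(j)}$--${v'}_1^{(j)}$ route goes through the $k$ layers and has weight $kF+c(v)$. This is exactly the ``close to every mismatched copy, far from the matched one'' primitive your constraints demand, achieved with $O(n\log n)$ extra edges. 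The paper rounds this out with per-layer hubs $u_i$ to bound the diameter of each $V_i$, a node $y$ linking $V_1$ to $V_k$, and a node $x$ with heavy edges to all of $V_1$ forcing the center to lie in $V_1$.
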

\begin{proof}
		We take the minimum weight $k$-cycle problem with edge weights in the range $[-R,R]$ and use our previous reduction (from Lemma \ref{lem:negativeDirCycle}) to the negative weight $k$-cycle problem in a $k$-circle-layered graph, $G$, with partitions $U_1$, $U_2$, \ldots , $U_k$. We will refer to the $j$th node in $U_i$ by $u_i^j$.
		We will create a new graph $G'$ on partitions $V_1,\ldots,V_k$ where $V_i$ corresponds to the nodes in $U_i$ of $G$. In particular, $v_i^j$ corresponds to node $u_i^j$ of $G$.
		See Figure \ref{fig:radius} for an illustration of the construction.
		\begin{figure*}[h]
			\centering
			\includegraphics[scale=1]{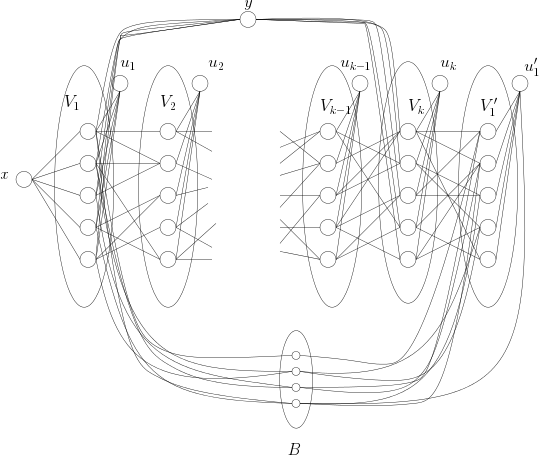}
			\caption{The Radius gadget.}
			\label{fig:radius}
		\end{figure*}
		
		Let $F = 20kR$.
		Let $V_1'$ be a copy of the nodes in $V_1$. 
		
		Let the edges from $v_i^{(j)} \in V_i$ to $v_{i+1}^{(p)} \in V_{i+1}$ for $i \in [1,k-1]$ exist if there is an edge between $u_i^{(j)} \in U_i$ and $u_{i+1}^{(p)} \in U_{i+1}$. The weight of the edge is given by  $w(v_i^{(j)},v_{i+1}^{(p)}) = w(u_i^{(j)},u_{i+1}^{(p)})+F$.
		
		We will introduce edges between $v_k^{(j)}\in V_{k}$ and ${v'}_{1}^{(p)} \in V'_1$ if there is an edge between $u_k^{(j)} \in U_k$ and $u_{1}^{(p)} \in U_{1}$. The weight of the edge is given by  $w(v_k^{(j)},{v'}_{1}^{(p)}) = w(u_k^{(j)},u_{1}^{(p)})+F$.
		
		Let $\hat{V_i}$ be $V_i \cup \{u_i\}$ where $u_i$ is a new node connected to all nodes in $V_i$ with edges of weight $3F/4= 15kR$.
		
		Let $B$ be a set of $\lg(n)+1$ nodes $b_0,b_1,\ldots,b_{\lg(n)}$ where each $b_i$ is connected to every node $v_1^{(j)}$ where the $i^{th}$ bit of $j$ is a $1$ and connect $b_i$ to every node ${v'}_1^{(j)}$ where  the $i^{th}$ bit of $j$ is a $0$. Here the weights of these edges are $kF/2$ between $V_1$ and $B$ and the weights are $kF/2-kR$ between $V'_1$ and $B$. Connect the node $u_1'$ node to all nodes in $B$. 
		
		We add a node $x$ which has edges of weight $kF-1$ to all nodes in $V_1$, so only nodes in $V_1$ or $x$ could possibly have radius $<kF$. Note that if $k>1$ then the shortest path from $x$ to nodes in $V_2$ is at least $kF-1+20kR-R$, making it impossible for it to be the center if the radius is less than $kF$.

		We add a node $y$ which has edges of weight $kF/2$ to all nodes in $V_1$ and edges of weight $kF/2-F/2$ to every node in $V_k$.

		So the shortest path between $v_1^{(i)}$ and ${v'}_1^{(j)}$ when $i \ne j$ is $kF-kR$.
		
		The shortest path between $v_1^{(i)}$ and ${v'}_1^{(i)}$ has a length of the shortest $k$-cycle, if one exists, through node $u_1^{(i)}$ plus $kF$. Because, all other choices of paths use more than $k$ edges (thus giving a weight of at least $(k+1/2)F$) or go though $y$ or $B$ which will result in a weight of $(k+1/2)F$ or more. So, the shortest path will be the shortest k-cycle through node $u_1^{(i)}$ plus $kF$.

		Any node $v_1^t$ in $V_1$ whose copy is part of any $k$-cycle in $G$ will have a path of length $<kF$  to all nodes in $V_i$ where $i<k$. To see this, first note that $v_1^t$ can reach some node in $V_i$ with a path of length at most $(i-1)F+(i-1)R=F(i-1)(1+1/(20k))$. Now, since every two node in $V_i$ are connected with a path of length $1.5F$ through node $u_i$, $v_1^t$ can reach any node in $V_i$ using a path of length at most 
		$(i-1+3/2+(i-1)/(20k))F$ which for $i<k$ is at max $(k-1/2+ (k-2)/(20k))F < (k-0.48)F < kF$.
		%
		%
		
		We need to ensure that any node in $V_1$ involved in a cycle has a path of length $<Fk$ to all nodes in $V_k$. The node $y$ solves this problem by providing a path of length $(k-0.5)F$ between all nodes in $V_1$ and $V_k$.

		So every node $v_1^{(i)}$ in $V_1$ has shortest paths of length $<kF$ to every node in the graph except, possibly, to the node ${v'}_1^{(i)}$. If $v_1^{(i)}$ is involved in a negative $k$-cycle then there is a shortest path between ${v'}_1^{(i)}$ and $v_1^{(i)}$ of length $<Fk$. Thus, if there is a negative $k$-cycle then the radius of the graph will be $<Fk$.
\end{proof}

We would like to show that radius in unweighted graphs is also hard from (unweighted) $k$-cycle. 

\begin{lemma}
If Radius in an undirected unweighted $N$ node $M$ edge graph can be computed in $f(N,M)$ time 
	then a directed $k$-cycle problem in $n$ node, $m$ edge graphs can be found in $\tO(f(O(n),O(m))+m)$ time.
%
	\label{lem:unweightedundRadiusisMN}
\end{lemma}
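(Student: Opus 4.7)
The plan is to mirror Theorem~\ref{thm:undRadiusisMN} with two simplifications available in the unweighted setting: since we only need to detect a $k$-cycle rather than find one of minimum weight, the binary search of Lemma~\ref{lem:negativeDirCycle} is unnecessary; and since the effective weight range is $R = O(1)$, every edge weight used in that gadget becomes a constant depending only on $k$, so we can simulate each weighted edge by a short path of unit-length edges without inflating the asymptotic sparsity.

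First, apply the color-coding Lemma~\ref{lem:colorcode} to reduce to detecting a directed $k$-cycle in a $k$-circle-layered graph with parts $U_1,\ldots,U_k$, losing only a $\tilde{O}(1)$ factor. Then build the gadget of Theorem~\ref{thm:undRadiusisMN} essentially verbatim (with $R = 1$ and $F$ chosen as a small constant multiple of $k$, e.g.\ $F = 4k$): copies $V_1,\ldots,V_k,V_1'$ of the $U_i$, undirected cycle-structure edges between consecutive parts and from $V_k$ to $V_1'$, star nodes $u_i$, the binary-encoding set $B = \{b_0,\ldots,b_{\lceil \log n\rceil}\}$ using the same bit-based adjacency rule, and the anchors $x$, $y$, $u_1'$. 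Finally, replace every weighted edge of weight $w$ in this construction by a path of $w$ unit-length edges through fresh subdivision vertices. Because every weight is $O(k^2) = O(1)$ for constant $k$, the resulting undirected unweighted graph has $\tilde{O}(n+m)$ vertices and edges (the logarithmic factor coming from $|B| = \Theta(\log n)$), absorbed into the $\tilde{O}$ of the lemma's statement.

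Setting the radius threshold to $D = kF$, the case analysis of Theorem~\ref{thm:undRadiusisMN} transfers almost verbatim: the only way to reach ${v'}_1^{(i)}$ from $v_1^{(i)}$ in at most $kF$ edges is through the cycle-structure path $v_1^{(i)}\to v_2\to \cdots\to v_k\to {v'}_1^{(i)}$ of length exactly $kF$, which exists iff a directed $k$-cycle through $u_1^{(i)}$ exists in the input graph; every alternative route (through two $b_t$'s, through $y$, or through $u_1'$) costs strictly more than $kF$. Meanwhile, every non-$V_1$ candidate centre has eccentricity exceeding $kF$: reaching $x$ from any $v_\ell^{(j)}$ with $\ell\geq 2$, any ${v'}_1^{(j)}$, or any auxiliary node requires at least one cycle edge of length $F$ on top of the $(kF-1)$-length $x$-$V_1$ path. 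Hence radius~$\leq kF$ iff the original graph contains a directed $k$-cycle, and a single call to the Radius oracle finishes the reduction.

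The main obstacle is verifying that the subdivision vertices introduced by the edge-replacement step do not themselves become accidental centres. The choice $F = 4k$ provides enough slack: a subdivision vertex $p$ at position $d$ on the $x$-to-$v_1^{(j)}$ path of length $kF-1$ must leave this path through either $x$ or $v_1^{(j)}$, so $d(p, {v'}_1^{(j)}) = \min(2kF - 1 - d,\; d + 2kF - 1 - k)$, which strictly exceeds $kF$ unless $d = kF - 1$, i.e.\ unless $p = v_1^{(j)}$ itself. Analogous but routine checks for the subdivisions of $b_t$-$V_1$, $b_t$-$V_1'$, $y$-$V_1$, $y$-$V_k$, $u_i$-$V_i$, and the cycle edges close the argument.
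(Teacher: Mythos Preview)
Your subdivision approach has a subtle but genuine gap. You carefully argue that subdivision vertices do not become \emph{accidental centres}, but you never verify the other direction: that the intended centre $v_1^{(j)}$ (for some $j$ with a $k$-cycle through $u_1^{(j)}$) still has eccentricity at most $kF$ \emph{in the subdivided graph}. Subdivision introduces new vertices, and some of them lie strictly farther than $kF$ from $v_1^{(j)}$. Concretely, take a cycle edge $(v_k^{(j')},{v'}_1^{(j)})$ of weight $F$ where $v_k^{(j')}$ does \emph{not} lie on the $k$-cycle through $v_1^{(j)}$ (there may be several edges into ${v'}_1^{(j)}$, only one of which belongs to that cycle). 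Then the shortest route from $v_1^{(j)}$ to $v_k^{(j')}$ is through $y$ and has length $kF - F/2$, while $d(v_1^{(j)},{v'}_1^{(j)}) = kF$. The subdivision vertex at position $3F/4$ along this edge is therefore at distance
\[
\tfrac{1}{2}\bigl((kF-F/2) + kF + F\bigr) \;=\; kF + F/4 \;>\; kF
\]
from $v_1^{(j)}$. Hence the radius of the subdivided graph exceeds $kF$ even when a $k$-cycle exists, and the claimed equivalence ``radius $\le kF$ iff $k$-cycle'' fails. Changing the constant $F$ does not help, since the excess $F/4$ scales with it; one would have to rebalance the $y$-edge weights and redo the whole analysis with a new threshold.

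The paper sidesteps this by building a \emph{native} unweighted gadget with threshold exactly $k$, rather than subdividing the weighted one. The key structural changes are: the star centres $u_i$ are chained together by edges $u_i$--$u_{i+1}$ (with $u_2$ additionally adjacent to every vertex of $V_1$), which gives $d(v_1^{(j)},v_i^{(t)}) \le i$ for all $i,t$; the bit-gadget is split into $B$ and $B'$ joined by paths of $k-2$ internal nodes; and $x$ carries a pendant path $P$ of length $k-1$. With these pieces every vertex is within distance $k$ of any $v_1^{(j)}$ lying on a $k$-cycle, and otherwise $d(v_1^{(j)},{v'}_1^{(j)}) = k+1$. No subdivision is needed, so no auxiliary eccentricity checks arise.
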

\begin{proof}
	If there is a $k$-cycle then the radius will be $k$ and otherwise the radius will be larger.
	The reduction is similar to before: we start with a $k$-circle-layered graph $G$ with partitions $U_1,\ldots,U_k$ and reduce it to a new graph $G'$ with roughly the same number of nodes. See Figure~\ref{fig:radiusunweight} for a depiction of this graph. In $G'$, we have partitions $V_1,\ldots,V_k$ where $V_i$ corresponds to $U_i$ in $G$.
	
	As before, let $V_1'$ be a copy of the nodes in $V_1$.

	Let the edges from $v_i^{(j)} \in V_i$ to $v_{i+1}^{(p)} \in V_{i+1}$ for $i \in [1,k-1]$ exist if there is an edge between $u_i^{(j)} \in U_i$ and $u_{i+1}^{(p)} \in U_{i+1}$. 
	
	We will introduce edges between $v_k^{(j)}\in V_{k}$ and ${v'}_{1}^{(p)} \in V'_1$ if there is an edge between $u_k^{(j)} \in U_k$ and $u_{1}^{(p)} \in U_{1}$.

	Let $\hat{V_i}$ be $V_i \cup u_i$ where $u_i$ is a node connected to all nodes in $V_i$. We connect $u_i$ to $u_{i+1}$ for all $i \ in [1,k-1]$ and connect $u_k$ and $u_1'$. We additionally add edges between $u_2$ and every node in $V_1$.

	Let $B$ be a set of $\lg(n)+1$ nodes $b_0,b_1,\ldots,b_{\lg(n)}$ where each $b_i$ is connected to every node $v_1^{(j)}$ where the $i^{th}$ bit of $j$ is a $1$.
	 Let $B'$ be a set of $\lg(n)+1$ nodes $b_0',b_1',\ldots,b_{\lg(n)}'$ where each $b_i'$ is connected to every node ${v'}_1^{(j)}$ where  the $i^{th}$ bit of $j$ is a $0$. We connect $b_i$ and $b'_i$ with a path of $k-2$ nodes.

	We add a node $x$ which to all nodes in $V_1$. We connect to $x$ a path $P$ containing nodes $p_1,\ldots,p_{k-1}$, so that nodes in $V_1$  are distance $k$ from $p_{k-1}$. So only nodes in $V_1$ or $x$ could possibly have radius $<k$.

			\begin{figure*}[h]
				\centering
				\includegraphics[scale=0.6]{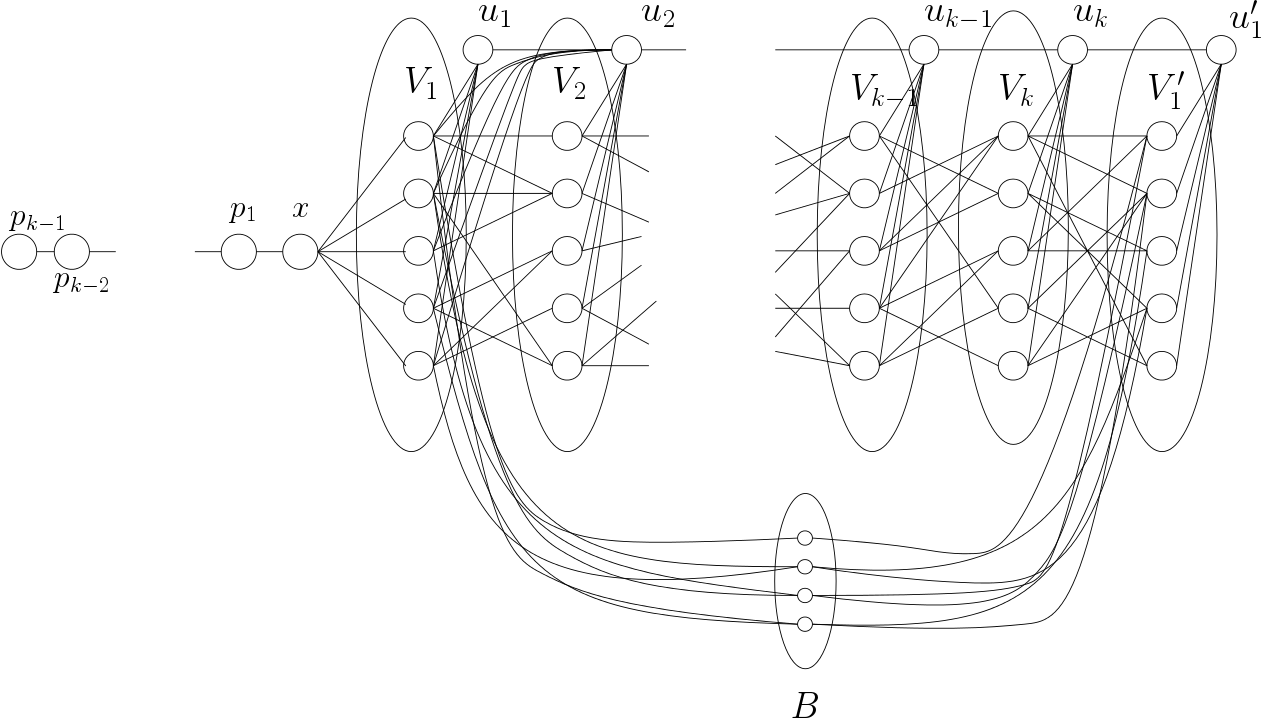}
				\caption{The unweighted radius gadget.}
				\label{fig:radiusunweight}
			\end{figure*}

	The shortest path between $v_1^{(i)}$ and ${v'}_1^{(i)}$ has length $k$ through $B$ and $B'$. There is a path to $u_i$ from any node in $V_1$ is $\leq k$. The shortest path from nodes in $V_1$ to nodes in $x \cup P$ is $\leq k$.

	The shortest path between a node in  $V_1$ and a node in $V_i$ has length at most  $i$ for $i>1$. Thus, every node in $v_1^{(i)} \in V_1$ is distance $\leq k$ from every node except, possibly ${v_1^{(i)}}'$.

	If a $k$-cycle exists then that $k$-cycle corresponds to a node such that ${v_1^{(i)}}'$ and ${v_1^{(i)}}$ are at distance $\leq k$. If a $k$-cycle does not exist then there is no path from $V_1$ to $V_2$ to $\ldots$ to $V_k$ and back to the corresponding node in $V_1'$, thus the shortest path between the two nodes has length $>k$. Paths through the nodes $u$ require extra edges and thus require at least $k+1$ edges to get to ${v_1^{(i)}}'$ from ${v_1^{(i)}}$.
	
	We can detect a $k$-cycle by running radius and returning true if the radius is $k$ or less and false otherwise. 
	
\end{proof}

%
\subsection{Undirected k-cycle reduces to Wiener index}
\label{subsec:wiener}
Informally, The Wiener index in a graph is the sum of all pairwise distances in the graph. Formally:

\begin{definition}
	Let $d(v,u)$ be the (shortest path) distance between $v$ and $u$ in $G$. Then the Wiener index is
	$$\mathbb{W}(G) =  \sum_{v \in V}\sum_{u\in V} d(v,u).$$
\end{definition}

\begin{theorem}
If the Wiener index in an undirected weighted graph on $N$ nodes and $M$ edges can be computed in $f(N,M)$ time  then the minimum weight $k$-cycle problem in $N$ node, $M$-edge directed graphs can be solved in $\tO(f(N,M)+M)$ time.
\label{thm:wienerIndex}
\end{theorem}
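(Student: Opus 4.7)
The plan is to mirror the reduction from Minimum Weight $k$-Cycle to Radius given in Theorem~\ref{thm:undRadiusisMN}, but tuning the gadget so that the information carried by a single eccentricity instead gets encoded into the single scalar $\mathbb{W}(G')$. As there, I would first apply Lemma~\ref{lem:negativeDirCycle} to reduce Minimum Weight $k$-Cycle with weights in $[-R,R]$ to $O(\log R)$ invocations of negative $k$-cycle detection in a $k$-circle-layered directed graph $G$ with partitions $U_1,\ldots,U_k$ (color-coding via Lemma~\ref{lem:colorcode} guarantees the $k$-circle-layered structure). The binary search absorbs the $\log R$ factor into $\tilde{O}$.

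For a single negative $k$-cycle query, I would build an undirected weighted graph $G'$ with partitions $V_1,\ldots,V_k$ corresponding to $U_1,\ldots,U_k$, and a disjoint copy $V_1'$ of $V_1$. Edges between consecutive partitions $V_i,V_{i+1}$ and between $V_k,V_1'$ inherit the weights of $G$ shifted by a large constant $F \gg R$, so that every directed $k$-cycle of $G$ passing through $u_1^{(i)}$ becomes an undirected path of weight $kF + c$ from $v_1^{(i)}$ to $v_1'{}^{(i)}$ in $G'$, where $c$ is that cycle's weight in $G$. I would then attach an auxiliary ``shadow'' structure — analogous to the $u_i$, $B$, $x$, $y$ nodes of the Radius gadget, but globally symmetric — guaranteeing a fixed reference length $Z$ between $v_1^{(i)}$ and $v_1'{}^{(i)}$. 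Choosing parameters so that $Z = kF$, we get $d(v_1^{(i)}, v_1'{}^{(i)}) = kF + \min(c_i, 0)$ where $c_i$ is the minimum weight of a $k$-cycle through $u_1^{(i)}$ (or $+\infty$ if none exists).

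The design goal is that \emph{every} other pairwise distance in $G'$ is a predictable constant that can be computed in $\tilde{O}(M)$ time, independently of which edges are present in $G$. If this holds, then
\[
\mathbb{W}(G') \;=\; W_0 \;+\; \sum_i \min(c_i,\,0),
\]
for a constant $W_0$ that depends only on $|V_1|,\ldots,|V_k|$, $F$, $Z$ and the auxiliary structure. Thus $\mathbb{W}(G') < W_0$ iff some $c_i < 0$, i.e.\ iff a negative $k$-cycle exists in $G$; a single Wiener computation suffices per threshold.

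The main obstacle is precisely this control of \emph{all} non-special pairwise distances. For Radius it suffices that one node's eccentricity be controlled, and in particular one only needs the special vertices of $V_1$ to be close to everything else. For Wiener one must pin down every one of the $\binom{|V(G')|}{2}$ distances. The plan is to add hub/spoke subgadgets (extending the $u_i$ and $B$ vertices) with carefully tuned weights so that for each non-special pair $(a,b)$, the shortest auxiliary route has a fixed length strictly below any weighted path in $G$ between $a$ and $b$, while for the special pair $(v_1^{(i)}, v_1'{}^{(i)})$ the best auxiliary route is exactly $kF$. Because partition edges have weights in $[F-R,F+R]$, arbitrary multi-hop graph paths can vary in weight by up to $kR$, so I would subdivide edges with small weight-preserving gadgets and pad the auxiliary routes so that for each pair $(a,b)$ of non-special vertices the auxiliary path beats every multi-hop graph path by a margin of at least $2kR$. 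Getting this balancing act to work simultaneously for \emph{every} non-special pair while leaving the gap for the special pairs wide enough to capture a single negative cycle is the technical crux; once achieved, computing $\mathbb{W}(G')$ and comparing to the precomputed $W_0$ finishes the reduction.
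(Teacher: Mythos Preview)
Your framework (reduce to negative $k$-cycle, unroll the $k$-circle-layered graph into $V_1,\ldots,V_k,V_1'$, shift weights by $F$, compare $d(v_1^{(i)},v_1'^{(i)})$ to $kF$) matches the paper. The genuine gap is exactly the step you flag as ``the technical crux'': you propose to force \emph{every} non-special pairwise distance in $G'$ to be a predictable constant independent of $G$, and then read off $\sum_i\min(c_i,0)$ from a single Wiener value. You do not carry this out, and there is a good reason it is hard: the distance between, say, a node in $V_2$ and a node in $V_3$ is $\approx F$ if the corresponding edge of $G$ is present and $\approx 3F/2$ (through a hub) if not. To make this constant you would need auxiliary routes of length strictly below $F-R$ between all such pairs, but then those same short routes would let $v_1^{(i)}$ reach $v_1'^{(i)}$ in well under $kF$ regardless of whether a negative cycle exists, destroying the signal. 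Your ``balancing act'' is not just delicate; as stated it is self-defeating.

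The paper sidesteps this entirely. It does \emph{not} try to control all distances. Instead it makes a constant number of Wiener-index calls on nested subgraphs and uses inclusion--exclusion. Concretely, with $H=\hat V_2\cup\cdots\cup\hat V_k$, $H'=\hat V_1\cup H$, $H''=H\cup\hat V_1'$, the quantity
\[
\mathbb{W}(G')-\mathbb{W}(H')-\mathbb{W}(H'')+\mathbb{W}(H)-2w_B-2w_u
\]
isolates precisely $\sum_{u\in V_1}\sum_{v\in V_1'}d_{G'}(u,v)$, because all the uncontrolled distances (within $H$, between $\hat V_1$ and $H$, between $\hat V_1'$ and $H$) cancel. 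The only distances that survive are the $V_1\times V_1'$ ones, and \emph{those} are controlled: $d(v_1^{(i)},v_1'^{(j)})=kF-kR$ for $i\neq j$ via the bit-gadget $B$, and $d(v_1^{(i)},v_1'^{(i)})=kF+\min(0,c_i)$ via a direct weight-$kF$ edge added between each $v_1^{(i)}$ and its copy. Comparing the isolated sum to $(kF-kR)|V_1|^2+|V_1|kR$ detects a negative cycle. So the missing idea is: do not engineer all distances; cancel the unpredictable ones by subtracting Wiener indices of subgraphs.
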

\begin{proof}
	We take the minimum weight $k$-cycle problem with edge weights in the range $[-R,R]$ and use our previous reduction (from Lemma \ref{lem:negativeDirCycle}) to the negative weight $k$-cycle problem on a $k$-partite graph, $G$, with partitions $U_1$, $U_2$, \ldots , $U_k$ such that edges only exist between partitions $U_i$ and $U_{(i+1 \mod{k})}$. We will describe the reduction graph $G'$, see Figure \ref{fig:wiener} for a diagram of the gadget. $G'$ contains a part $V_i$ for each part $U_i$ of $G$, for each $i\in\{1,\ldots,k\}$. In addition, there is a part $V_1'$ which contains copies of the nodes in $V_1$. For each $i$, call the $j$th node in $V_i$, $v_i^j$.
		
\begin{figure}[h]
			\centering
			\includegraphics[scale=0.5]{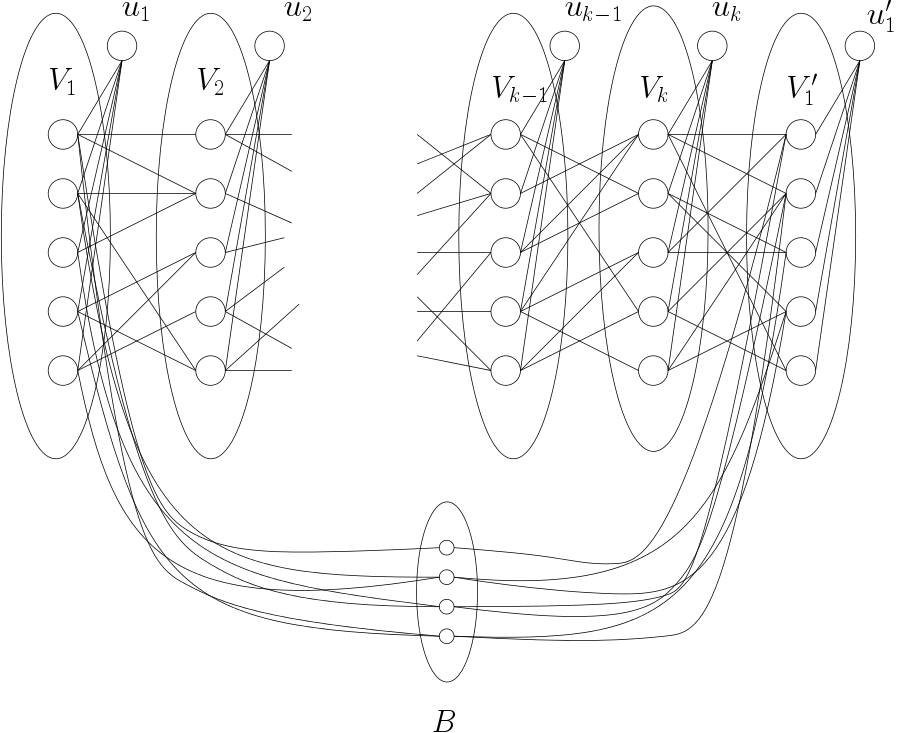}
			\caption{The Wiener Index gadget.}
			\label{fig:wiener}
\end{figure}
		
	Let $F = 20kR$.
	Let the edges from $v_i^{(j)} \in V_i$ to $v_{i+1}^{(p)} \in V_{i+1}$ for $i \in [1,k-1]$ exist if there is an edge between $u_i^{(j)} \in U_i$ and $u_{i+1}^{(p)} \in U_{i+1}$. The weight of the edge is given by  $w(v_i^{(j)},v_{i+1}^{(p)}) = w(u_i^{(j)},u_{i+1}^{(p)})+F$.

	We will introduce edges between $v_k^{(j)}\in V_{k}$ and ${v'}_{1}^{(p)} \in V'_1$ if there is an edge between $u_k^{(j)} \in U_k$ and $u_{1}^{(p)} \in U_{1}$. The weight of the edge is given by  $w(v_k^{(j)},{v'}_{1}^{(p)}) = w(u_k^{(j)},u_{1}^{(p)})+F$.

	Let $\hat{V_i}$ be $V_i \cup \{u_i\}$ where $u_i$ is a node connected to all nodes in $V_i$ with edges of weight $3F/4= 15kR$.	Let $B$ be a set of $\lg(n)+1$ nodes $b_0,b_1,\ldots,b_{\lg(n)}$ where each $b_i$ is connected to every node $v_1^{(j)}$ where the $i^{th}$ bit of $j$ is a $1$ and connect $b_i$ to every node ${v'}_1^{(j)}$ where  the $i^{th}$ bit of $j$ is a $0$. Where the weights of these edges are $kF/2$ between $V_1$ and $B$ and the weights are $kF/2-kR$ between $V'_1$ and $B$.
	By construction, the shortest path between $v_1^{(i)}$ and ${v'}_1^{(j)}$ when $i \ne j$ is $kF-kR$.
	
	Let $H$ be the graph composed of $\hat{V_2} \cup \ldots \cup \hat{V_k}$.\\
	Let $H'$ be the graph composed of $\hat{V_1} \cup H$.
	Let $H''$ be the graph composed of $H \cup \hat{V'_1}$.	
	Let $G'$ be the graph composed of $H \cup \hat{V'_1} \cup B$ where we add edges $\forall i \in [1,|V_1|]$ with weight $w(v_1^{(i)},{v'}^{(i)}_1)= kF$.

	Let the minimum $k$-cycle length in $G$ going through $u_1^{(i)}$ be of length $c_i$. Note that $d_{G'}(v_1^{(i)},{v'}^{(i)}_1) = kF+\min(0,c_i)$ by construction. Note that $d_{G'}(v_1^{(i)},{v'}^{(j)}_1) = kF-kR$ for $i \ne j$. If we could access the sum of all of the $|V_1|^2$ pairwise distances between the nodes in $V_1$ and $V'_1$ and see if those distances are, in sum, $(kF-kR)|V_1|^2+|V_1|kR$ then we would know if a negative $k$-cycle exists.
	
	We need to remove all the other distances first.
 Most easily we can calculate the distances to and from nodes in $B$. 
	Let $w_B = \sum_{b\in B}\sum_{v\in G'} \delta_{G'}(b,v)$, we compute this in time $O(|B||G'|)=O(\lg(N) |G'|)$.

	Next we can get the sum of the distances between any nodes in $H$ by computing $\mathbb{W}(H)$. The graph has $M+N = O(M)$ edges in total.
After this, we want to find the distances to and from the nodes in $\hat{V_1}$ to $G'/(\hat{V'_1}\cup B)$. We can get this from $\mathbb{W}(H')-\mathbb{W}(H)$.
Next we want to find the distances too and from the nodes in $\hat{V'_1}$ to $G'/(\hat{V_1}\cup B)$, we can get this from $\mathbb{W}(H'')-\mathbb{W}(H)$.
Finally we need to account for the distance between $u_1$ and $u'_1$, call this distance $w_u$ we can run Dijstra's algorithm and find this distance in $\tilde{O}(M+N) = \tilde{O}(M)$ time. 
	
	Therefore the sum of all the pairwise distances between nodes in $V_1$ and $V'_1$ is equal to 
	$\mathbb{W}(G')-(\mathbb{W}(H')-\mathbb{W}(H))-(\mathbb{W}(H'')-\mathbb{W}(H))-\mathbb{W}(H)-2w_B-2w_u.$
	
	Thus, if $\mathbb{W}(G')-\mathbb{W}(H')-\mathbb{W}(H'')+\mathbb{W}(H)-2w_B-2w_u<(kF-kR)|V_1|^2+|V_1|kR$ then there is a negative $k$-cycle. If $\mathbb{W}(G')-\mathbb{W}(H')-\mathbb{W}(H'')+\mathbb{W}(H)-2w_B-2w_u \geq (kF-kR)|V_1|^2+|V_1|kR$ then there is no negative $k$-cycle.
	
	The total time to get all the sums we need is $O(f(N,M+N)+M\lg(N)) = \tO(f(N,M)+M\lg(N))$.
	
\end{proof}

We would like to show that {\em unweighted} Wiener Index is also hard from $k$-cycle. 

\begin{lemma}
	If the Wiener Index can be computed in $f(N,M)$ time in an  undirected unweighted  graph then the directed $k$-cycle problem can be solved in $\tO(f(N,M)+M)$ time.
	\label{lem:unweightedundWeinerMN}
\end{lemma}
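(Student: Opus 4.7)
The plan is to adapt the construction of Theorem~\ref{thm:wienerIndex} to the unweighted setting, where large edge weights are unavailable and path lengths must be forced purely by graph structure, in the spirit of Lemma~\ref{lem:unweightedundRadiusisMN}. First, I would use color coding (Lemma~\ref{lem:colorcode}) to reduce directed $k$-cycle detection to the same problem on a $k$-circle-layered directed graph $G$ with parts $U_1,\ldots,U_k$. Then I would build an undirected unweighted graph $G'$ with parts $V_1,\ldots,V_k$ and a separate copy $V_1'$ of $V_1$: for each edge $(u_i^{(j)},u_{i+1}^{(p)})$ of $G$ with $i<k$ add the undirected edge $(v_i^{(j)},v_{i+1}^{(p)})$, and for each edge $(u_k^{(j)},u_1^{(p)})$ add $(v_k^{(j)},{v'}_1^{(p)})$. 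Because $G'$ is $(k{+}1)$-partite with edges only between consecutive layers, any path of length $k$ from $v_1^{(i)}$ to ${v'}_1^{(i)}$ must visit each $V_l$ in order and therefore corresponds to a directed $k$-cycle through $u_1^{(i)}$ in $G$.

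Next I would graft on auxiliary gadgets modeled on Lemma~\ref{lem:unweightedundRadiusisMN}: apex nodes $u_1,\ldots,u_k,u_1'$, with $u_l$ joined to every vertex of $V_l$ (and $u_1'$ to $V_1'$), connected by a chain $u_1-u_2-\cdots-u_k-u_1'$; and a $B$-gadget of $O(\log n)$ bit-indexed vertices wired so that the shortest $B$-route between $v_1^{(i)}$ and ${v'}_1^{(j)}$ has length exactly $D$ for every $i\ne j$, while no such short route is available when $i=j$. I would choose the lengths of the $b_t$-to-$b_t'$ connector paths and of the apex chain so that, for $i\ne j$, $d_{G'}(v_1^{(i)},{v'}_1^{(j)})=D$ exactly, and for $i=j$, the distance equals $k$ if $G$ has a $k$-cycle through $u_1^{(i)}$ and is strictly greater than $k$ otherwise (forced via the longer apex/$B$ detour).

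Granted these distance properties, set
\[
S=\sum_{(v,v')\in V_1\times V_1'} d_{G'}(v,v').
\]
Then $S=D(|V_1|^2-|V_1|)+\sum_i d_{G'}(v_1^{(i)},{v'}_1^{(i)})$, and so $S<D(|V_1|^2-|V_1|)+(k+1)|V_1|$ iff some $i$ has a $k$-cycle through $u_1^{(i)}$, which is iff $G$ has a directed $k$-cycle. To extract $S$ from Wiener-index queries I would mirror the inclusion-exclusion of Theorem~\ref{thm:wienerIndex}: letting $H$ be the subgraph on $V_2\cup\cdots\cup V_k$ together with their apex nodes, $H'=V_1\cup H$, $H''=H\cup V_1'$, take
\[
S=\mathbb{W}(G')-\mathbb{W}(H')-\mathbb{W}(H'')+\mathbb{W}(H)-2w_B-2w_u,
\]
where $w_B$ collects distances involving $B$-nodes and $w_u$ collects those involving $u_1$ and $u_1'$. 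Both correction terms involve only $O(\log n)$ special sources, so they can be obtained via BFS in $\tO(M)$ total time.

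The hard part will be verifying the distance claims purely combinatorially. Without the weight cushion of Theorem~\ref{thm:wienerIndex}, I must argue by case analysis that (i) for every pair $v_1^{(i)},{v'}_1^{(j)}$ with $i\ne j$ the shortest path has length exactly $D$, with no shortcut through the apex chain, through the $V_2,\ldots,V_k$ edges, or through a mismatched $V_1'$ vertex; (ii) when there is no $k$-cycle through $u_1^{(i)}$, every $v_1^{(i)}$-to-${v'}_1^{(i)}$ path has length at least $k+1$; and (iii) for each of the subgraphs $H,H',H''$, removing the other parts does not change any pairwise distance, so that the inclusion-exclusion cleanly isolates $S$. Property (iii) is the most delicate in the unweighted setting, since deleting $V_1'$ or $V_1$ could a priori lengthen some paths inside the remainder; I would handle this by making the apex chain redundantly traverse all remaining layers, so that every shortest path inside a subgraph has a representative avoiding the removed part. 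Once these combinatorial lemmas are in place, the reduction follows.
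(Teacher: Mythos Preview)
Your proposal is correct and follows essentially the same approach as the paper: the same layered copy of $G$ with an extra $V_1'$, the same apex chain $u_1\text{--}\cdots\text{--}u_k\text{--}u_1'$ (with the extra $u_2$--$V_1$ edges), the same bit-indexed $B,B'$ gadget with length-$(k{-}2)$ connectors, and the same four-term Wiener inclusion--exclusion with BFS corrections for the $O(\log n)$ special nodes. In the paper the parameter you call $D$ is simply $k$, and the apex shortcut through $u_2$ is exactly what pins the no-cycle $i{=}j$ distance to $k{+}1$ (not merely ``${>}k$''), which is what makes your threshold test $S<k\,|V_1|^2+|V_1|$ an exact if-and-only-if; you should state that equality explicitly when you write it up.
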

\begin{proof}
	If there is a $k$-cycle then the Wiener Index will be lower and otherwise the Wiener Index will be larger. We will form the graph $G'$ similar to before. See Figure~\ref{fig:wienerunweight} for a depiction of this graph. Let the $U_i$ and $V_i$ and $V_1'$ be as in the previous proof. 
	
	Let the edges from $v_i^{(j)} \in V_i$ to $v_{i+1}^{(p)} \in V_{i+1}$ for $i \in [1,k-1]$ exist if there is an edge between $u_i^{(j)} \in U_i$ and $u_{i+1}^{(p)} \in U_{i+1}$. \\
	We will introduce edges between $v_k^{(j)}\in V_{k}$ and ${v'}_{1}^{(p)} \in V'_1$ if there is an edge between $u_k^{(j)} \in U_k$ and $u_{1}^{(p)} \in U_{1}$. \\
	Let $\hat{V_i}$ be $V_i \cup u_i$ where $u_i$ is a node connected to all nodes in $V_i$. We connect $u_i$ to $u_{i+1}$ for all $i \ in [1,k-1]$ and connect $u_k$ and $u_1'$. We additionally add edges between $u_2$ and every node in $V_1$.\\
	Let $B$ be a set of $\lg(n)+1$ nodes $b_0,b_1,\ldots,b_{\lg(n)}$ where each $b_i$ is connected to every node $v_1^{(j)}$ where the $i^{th}$ bit of $j$ is a $1$.
	Let $B'$ be a set of $\lg(n)+1$ nodes $b_0',b_1',\ldots,b_{\lg(n)}'$ where each $b_i'$ is connected to every node ${v'}_1^{(j)}$ where  the $i^{th}$ bit of $j$ is a $0$. We connect $b_i$ and $b'_i$ with a path of $k-2$ nodes. \\
	
	\begin{figure}[h]
		\centering
		\includegraphics[scale=1]{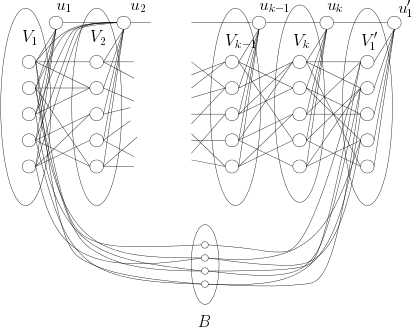}
		\caption{The unweighted Wiener Index gadget.}
		\label{fig:wienerunweight}
	\end{figure}

	The shortest path between $v_1^{(i)}$ and ${v'}_1^{(i)}$ has length $k$ through $B$ and $B'$. There is a path to $u_i$ from any node in $V_1$ is $\leq k$. The shortest path from nodes in $V_1$ to nodes in $x \cup P$ is $\leq k$. \\
	A the shortest path between a node in  $V_1$ and a node in $V_i$ has length at most  $i$ for $i>1$. Thus, every node in $v_1^{(i)} \in V_1$ is distance $\leq k$ from every node except, possibly ${v_1^{(i)}}'$.\\
	
	If a $k$-cycle exists then that $k$-cycle corresponds to a node such that ${v_1^{(i)}}'$ and ${v_1^{(i)}}$ are at distance $\leq k$. If a $k$-cycle does not exist then there is no path from $V_1$ to $V_2$ to $\ldots$ to $V_k$ and back to the corresponding node in $V_1'$, thus the shortest path between the two nodes has length $>k$. Paths through the nodes $u$ require extra edges and thus require at least $k+1$ edges to get to ${v_1^{(i)}}'$ from ${v_1^{(i)}}$.
	
	
	Let $H' = G'/(\hat{V_1'} \cup \hat{V_1} \cup B \cup B')$, $H'' = G'/(\hat{V_1'} \cup B \cup B')$ and $H''' =  G'/(\hat{V_1} \cup B \cup B')$.\\
	
	There are  $O(k\lg(n))$ nodes in $B$, $B'$ and the path between them. We can calculate these distance in $O(\lg(n)m)$ time. Call these distances $d_B$.
	
	Similarly calculate the distance from nodes $u_1$ to $\hat{V_1'}$ and $u_1'$ to $\hat{V_1}$ graph and call the sum of these two values $d_u$.
	
	Note that $\mathbb{W}(H')$ captures the distances between nodes in $H'$, the shortest paths between them never use nodes outside of $H'$. Similarly $\mathbb{W}(H'')$ and $\mathbb{W}(H''')$ each capture the distances between the nodes contained in them . 
	
	 $\mathbb{W}(G')-\mathbb{W}(H'')-\mathbb{W}(H''')+\mathbb{W}(H')-2d_B-2d_u$ is equal to the sum of $\sum_{u\in V_1} \sum_{v \in V_1'} d(u,v)$. 
	
	Now let 
	$d_x = \sum_{v_1^{(i)}\in V_1} \sum_{{v_1^{(j)}}' \in V_1' \text{ and } i\ne j} d(v_1^{(i)},{v_1^{(j)}}') $. And note that $d_x = |V_1|(|V_1|-1) k$.
	
	Next note that  $d(v_1^{(i)},{v_1^{(i)}}')=k+1$ if there is no $k$-cycle through  $v_1^{(i)}$. 
	
	So if there is no $k$-cycle then $\sum_{u\in V_1} \sum_{v \in V_1'} d(u,v) = |V_1|^2k+|V_1|$ and if there is a $k$-cycle the sum is less. 
	
	There is a k-cycle if $\mathbb{W}(G')-\mathbb{W}(H'')-\mathbb{W}(H''')+\mathbb{W}(H')-2d_B-2d_u < |V_1|^2k+|V_1|$ and there is no k-cycle if $\mathbb{W}(G')-\mathbb{W}(H'')-\mathbb{W}(H''')+\mathbb{W}(H')-2d_B-2d_u= |V_1|^2k+|V_1|$.
\end{proof}

\begin{corollary}
	If undirected (unweighted/weighted) Wiener Index can be computed in $f(N,M)$ time on a graph of density $M = \tTheta(N^{1+1/L})$ then the minimum weight (unweighted/weighted) k-cycle problem can be solved in $\tO(f(N,M)+M)$.
	\label{cor:weinerDirUndir}
\end{corollary}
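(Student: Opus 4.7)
The plan is to derive Corollary~\ref{cor:weinerDirUndir} by combining Theorem~\ref{thm:wienerIndex} (weighted case) and Lemma~\ref{lem:unweightedundWeinerMN} (unweighted case) and checking that the reductions already given are \emph{density-preserving} when the input $k$-cycle instance satisfies $m = \tilde\Theta(n^{1+1/L})$. Both reductions take a $k$-circle-layered instance $G$ on $n$ nodes and $m$ edges and build an auxiliary graph $G'$ together with a constant number of induced subgraphs $H, H', H''$ (and $H'''$ in the unweighted proof) on which the Wiener Index oracle is invoked; correctness is proved in those statements, so the only thing to verify for the corollary is that the oracle call is made on a graph of the desired sparsity, and that the auxiliary bookkeeping fits inside the claimed $\tO(f(N,M)+M)$ budget.

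First I would count vertices and edges of $G'$. The vertex set consists of the $k$ copies $V_1,\dots,V_k$ of $U_1,\dots,U_k$ from $G$, an extra copy $V_1'$ of $V_1$, a constant number of ``apex'' vertices $u_1,\dots,u_k$ (plus $u_1'$ in the unweighted version), and the $O(\log n)$ vertices of $B$ (and $B'$ together with the length-$(k-2)$ paths connecting them in the unweighted version). Thus $N = \Theta(n)$. The edges are: the original $O(m)$ inter-partition edges (with weights shifted by $F$), $O(n)$ edges joining each $u_i$ to $V_i$, $O(n\log n)$ edges connecting $B$ and $B'$ to $V_1$ and $V_1'$, and $O(\log n)$ edges inside the $B$-path gadget. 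Adding these gives $M = O(m + n\log n) = \tilde\Theta(m)$; and since $m = \tilde\Theta(n^{1+1/L}) = \tilde\Omega(n)$, we obtain $M = \tilde\Theta(N^{1+1/L})$, so the oracle for Wiener Index at the assumed density may indeed be applied to $G'$ and to each of the subgraphs $H, H', H''$, all of which inherit this density up to polylog factors.

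Next I would account for the post-processing. The reductions compute a fixed linear combination of $\mathbb{W}(G'), \mathbb{W}(H), \mathbb{W}(H'), \mathbb{W}(H'')$ (and $\mathbb{W}(H''')$ in the unweighted case), minus distances involving the $O(\log n)$ vertices of $B \cup B'$ and the two apex vertices $u_1, u_1'$; these last contributions are obtained by $O(\log n)$ invocations of Dijkstra (weighted) or BFS (unweighted), each running in $\tilde O(M)$ time, for a total of $\tilde O(M)$. The weighted case additionally calls Lemma~\ref{lem:negativeDirCycle} to reduce Min Weight $k$-Cycle to negative $k$-cycle in a $k$-circle-layered graph via $O(\log R)$ binary-search steps; each step solves a single negative-weight query, which by the above density analysis runs in $\tilde O(f(N,M) + M)$ time, and the $O(\log R)$ overhead is absorbed into the $\tO(\cdot)$. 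Summing across all branches yields the overall bound $\tO(f(N,M) + M)$.

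The only place where caution is needed is the unweighted branch: the construction in Lemma~\ref{lem:unweightedundWeinerMN} adds a $(k-2)$-vertex path between each $b_i$ and $b_i'$, which for constant $k$ contributes only $O(\log n)$ extra vertices and edges, preserving both $N = \Theta(n)$ and $M = \tilde\Theta(m)$. Once density preservation is confirmed, the correctness and running-time accounting from Theorem~\ref{thm:wienerIndex} and Lemma~\ref{lem:unweightedundWeinerMN} transfer verbatim, giving the corollary. I do not expect any genuine difficulty here; the step that would most need care in a formal write-up is verifying that no subgraph $H, H', H'', H'''$ accidentally loses the $N^{1+1/L}$ density through vertex deletions, but since each is obtained from $G'$ by removing only $O(n)+O(\log n)$ vertices and the associated $O(n\log n)$ incident edges, the density is preserved up to constants.
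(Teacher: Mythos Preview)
Your proposal is correct and matches the paper's (implicit) approach: the paper states Corollary~\ref{cor:weinerDirUndir} without proof, treating it as an immediate consequence of Theorem~\ref{thm:wienerIndex} and Lemma~\ref{lem:unweightedundWeinerMN}, and the only content to supply is exactly the density-preservation check you carry out. Your vertex and edge counts are right, and your handling of the $O(\log R)$ binary-search overhead and the $O(\log n)$ Dijkstra/BFS calls is the intended bookkeeping.
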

\subsection{Undirected k-cycle reduces to APSP}
A folklore reduction reduces Shortest Cycle in directed graphs to APSP in directed graphs as follows. Let $G$ be the graph in which we want to find the Shortest Cycle. Compute APSP in $G$, and then for every edge add $d(u,v)+w(v,u)$ and take the minimum - this is the weight of the Shortest Cycle in $G$. Reducing directed Shortest Cycle to APSP in undirected graphs seems more problematic, as noted by Agarwal and Ramachandran~\cite{agarwal2016fine}. In our paper, however, we were able to reduce Min Weight $k$-Cycle in a directed graph to Radius in undirected graphs. Radius in an undirected graph of course can easily be reduced to APSP: compute APSP and then set the radius to $\min_u\max_v d(u,v)$. Because of this, we immediately get the same lower bounds for APSP as for Radius and Min Weight $k$-Cycle. This reduction also works for unweighted graphs.


\begin{corollary}
If there is an $O(n^2+mn^{1-\eps})$ time algorithm for $\eps>0$ for APSP in $m$ edge $n$ node (directed/undirected, weighted/unweighted) graphs, then there is also an $O(n^2+mn^{1-\eps})$ time algorithm for Radius in $m$ edge $n$ node (directed/undirected, weighted/unweighted) graphs.
	\label{cor:APSPhard}
\end{corollary}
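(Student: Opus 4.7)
The plan is to observe that the corollary is exactly the statement that the standard folklore reduction from Radius to APSP preserves the $O(n^2 + mn^{1-\eps})$ running time bound. This works uniformly across all four settings (directed/undirected, weighted/unweighted) since the reduction itself does not care about direction or weights.

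Concretely, given a graph $G$ on $n$ nodes and $m$ edges, I would proceed as follows. First, invoke the hypothesized APSP algorithm on $G$ to obtain the distance matrix $\{d(u,v)\}_{u,v \in V}$ in time $O(n^2 + mn^{1-\eps})$. Second, for each source $u \in V$, compute the eccentricity $\mathrm{ecc}(u) = \max_{v \in V} d(u,v)$ by a single linear scan over the $n$ entries of row $u$ of the distance matrix; this costs $O(n)$ per source and $O(n^2)$ overall. Finally, output $\min_{u \in V} \mathrm{ecc}(u)$, which is the radius of $G$ by definition and takes $O(n)$ additional time.

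Summing the three phases gives a total running time of $O(n^2 + mn^{1-\eps}) + O(n^2) + O(n) = O(n^2 + mn^{1-\eps})$, as required. Since the reduction manipulates only the distance matrix and nowhere refers to edge weights or directedness, the same argument handles the directed, undirected, weighted, and unweighted cases simultaneously, matching the four settings asserted in the corollary.

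There is essentially no obstacle here: the corollary is the trivial direction of the Radius-vs-APSP relationship, and its entire purpose is to record that the non-trivial lower bounds proved earlier in the paper for Radius (Theorem \ref{thm:undRadiusisMN} in the weighted setting and Lemma \ref{lem:unweightedundRadiusisMN} in the unweighted setting) immediately transfer to APSP by contrapositive. The only thing to double-check is that the post-processing for the $\min\max$ computation fits inside the $O(n^2)$ additive term, which it plainly does.
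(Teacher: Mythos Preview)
Your proposal is correct and matches the paper's own argument: the paper states exactly this folklore reduction (compute APSP, then take $\min_u \max_v d(u,v)$ in $O(n^2)$ time) in the paragraph immediately preceding the corollary and gives no further proof. Your additional remark that the corollary's role is to transfer the Radius lower bounds of Theorem~\ref{thm:undRadiusisMN} and Lemma~\ref{lem:unweightedundRadiusisMN} to APSP by contrapositive is also precisely the paper's intent.
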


\section{General CSP to Hyperclique}
\label{sec:generalCSPtoHyperclique}

\label{sec:GeneralCSP}

We will give the definitions of two versions of the degree-$k$-CSP problem. The maximization version and the exact weight version. Then we will show how maximum or exact weight $\ell$-hyperclique finding solves these problems using a generalization of the argument from Williams \cite{thesis}.  
Finanlly, we will give a reduction between the unweighted  $(\ell,k)$-hyperclique problem and the degree-$k$-CSP problems.


\begin{definition}
	The maximum degree-$k$-CSP problem has a formula with $n$ variables and $m$ clauses. Each clause is a formula on $n$ variables (though it may of course depend on fewer). For each clause $c_i$ $i\in[1,m]$ there exists a $d$ degree polynomial $p_i: \{0,1\}^n \rightarrow \{0,1\}$ such that $c_i$ is satisfied on an assignment $\vec{a}$ when $p_i(\vec{a}) = 1$ and unsatisfied when $p_i(\vec{a}) = 0$.
	
	The solution to a maximum $k$-CSP problem is a an assignment of the $n$ variables, $\vec{a} = a_1,\ldots,a_n$ such that  $\sum_{i=1}^m p_i(\vec{a})$ is maximized.
\end{definition}

\begin{definition}
A weighted degree-$k$-CSP problem has a formula with  $n$ variables and $m$ clauses and two special constants $K_v,K_p \in [poly(n)]$. Each clause is a formula on $n$ variables (though it may of course depend on fewer). For each clause $c_i$ $i\in[1,m]$ there exists a $k$ degree polynomial $p_i: \{0,1\}^n \rightarrow \{0,1\}$ such that $c_i$ is satisfied on an assignment $\vec{a}$ when $p_i(\vec{a}) = 1$ and unsatisfied when $p_i(\vec{a}) = 0$.

The solution to a weighted $k$-CSP is a an assignment of the $n$ variables, $\vec{a} = a_1,\ldots,a_n$ such that
$\sum_{i=1}^n a_i = K_v$ and 
$\sum_{i=1}^m p_i(\vec{a}) = K_p$. \cite{thesis}
\end{definition}

We will prove that finding a fast algorithm which breaks the $(\ell,k)$-hyperclique hypothesis (for $\ell$ and $k$ constant) results in algorithms for the maximum degree-$k$-CSP problem and weighted degree-$k$-CSP problem that run in $2^{(1-\eps)n}$ for some $\eps>0$. 

\subsection{Degree-k-CSP to Weighted Hyperclique}
We will start by proving that the degree $g$ coefficients of the polynomials $p_i$ are bounded by $2^{g-1}$. We will generalize the argument of R. Williams \cite{thesis}.

\begin{lemma}
Let $p$ be a degree $k$ polynomial that computes a boolean function. That is $p:\{0,1\}^n \rightarrow \{0,1\}$. 
The degree $g$ coefficients of $p$ are bounded by $[-2^{g-1},2^{g-1}]$ if $g>0$ and the degree zero coefficient must be in $\{0,1\}$.
\end{lemma}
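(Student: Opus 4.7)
The plan is to use the standard fact that any function $p : \{0,1\}^n \to \mathbb{R}$ has a unique multilinear representation (since $x_i^2 = x_i$ on the Boolean cube), and to bound each multilinear coefficient via Möbius inversion. Let
\[
p(x) = \sum_{S \subseteq [n]} c_S \prod_{i \in S} x_i.
\]
For each $S \subseteq [n]$, identify $S$ with its characteristic vector $\chi_S \in \{0,1\}^n$. Möbius inversion on the Boolean lattice gives
\[
c_S \;=\; \sum_{T \subseteq S} (-1)^{|S|-|T|}\, p(\chi_T).
\]
This is the formula I will work from.

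For the main bound, fix $|S| = g \geq 1$ and split the right-hand side according to the parity of $|S|-|T|$:
\[
c_S \;=\; \sum_{\substack{T \subseteq S \\ |S|-|T|\text{ even}}} p(\chi_T) \;-\; \sum_{\substack{T \subseteq S \\ |S|-|T|\text{ odd}}} p(\chi_T).
\]
A standard binomial identity ($\sum_j (-1)^j \binom{g}{j}=0$ for $g \geq 1$) shows that the set of subsets of $S$ of even co-size and the set of subsets of $S$ of odd co-size each have cardinality exactly $2^{g-1}$. Since $p$ maps into $\{0,1\}$, each of the two sums above lies in $[0, 2^{g-1}]$, so their difference lies in $[-2^{g-1}, 2^{g-1}]$. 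This yields the claimed bound for $g > 0$.

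For the degree-$0$ case, $S = \emptyset$ gives $c_\emptyset = p(\chi_\emptyset) = p(\vec{0}) \in \{0,1\}$ directly.

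The only subtle point I foresee is justifying the use of the multilinear representation in the first place: the lemma is phrased for an arbitrary degree-$k$ polynomial, but since we are on $\{0,1\}^n$ the relations $x_i^2 = x_i$ force the multilinear reduction to agree with the original polynomial on every input, and this reduction does not raise the degree. After multilinearizing, the "degree $g$ coefficients" are well-defined and the argument above applies. The rest is just binomial counting, so I do not anticipate a real obstacle.
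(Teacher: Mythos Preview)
Your proof is correct and takes a genuinely different route from the paper's. The paper proceeds by a double induction on the degree $g$ and the number of variables $h$: writing $p(x_1,\ldots,x_h) = x_h\,q(x_1,\ldots,x_{h-1}) + (1-x_h)\,r(x_1,\ldots,x_{h-1})$ with $q = p|_{x_h=1}$ and $r = p|_{x_h=0}$, a degree-$g$ coefficient of $p$ that involves $x_h$ equals a degree-$(g-1)$ coefficient of $q$ minus the corresponding one of $r$, each bounded by $2^{g-2}$ by the inductive hypothesis; a degree-$g$ coefficient not involving $x_h$ comes from $r$ alone and is handled by induction on $h$. The base cases $g\in\{0,1,2\}$ are imported from prior work of Williams.

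Your argument bypasses the induction entirely by writing down the closed-form M\"obius-inversion expression $c_S = \sum_{T\subseteq S}(-1)^{|S|-|T|}p(\chi_T)$ and bounding the signed sum by the count $2^{g-1}$ of same-parity subsets. This is shorter, self-contained (no external base case needed), and makes the dependence on the Boolean range of $p$ completely transparent. The paper's recursive decomposition is the viewpoint one would reach for if only restrictions of $p$ were known to be Boolean, but for the lemma as stated your direct computation is the cleaner route.
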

\begin{proof}
R. Williams shows that this is true for $g=\{0,1,2\}$ for all $n$ \cite{thesis}. We will prove this by induction. 

If we have proven the statement for all coefficients of degree $g-1$  or less for all polynomials with $n<h+1$ and for coefficients of degree $g$ for all polynomials with $n<h$ then we will show that the statement is true for coefficients of degree $g$ and $n=h$. 
Let $q= p(x_1,\ldots,x_{h-1},1)$ and $r= p(x_1,\ldots,x_{h-1},0)$. Note $x_h \in \{0,1\}$ so $p= q$ when $x_h=1$ and $p = r$ when $1-x_h = 1$. 
 Thus, the polynomial can be split into  $p(x_1,\ldots,x_h) = x_hq(x_1,\ldots,x_{h-1}) + (1-x_h)r(x_1,\ldots,x_{h-1})$. 

Let $q_i = $ the sum of all the $i^{th}$ degree terms of $q$. Let $r_i = $ the sum of all the $i^{th}$ degree terms of $r$. Let $p_i = $ the sum of all the $i^{th}$ degree terms of $p$.  Note that both $r$ and $q$ have only $h-1$ variables, and thus the condition has been proven for them by assumption. 

We have  that $p_i = x_h(q_{i-1}-r_{i-1})+r_{d}$.  Note that if a $g$ degree term in $p$ does not include variable $x_h$ then the term comes from  $r_d$, thus, the condition holds because $r$ is degree $g$ and has fewer then $h$ variables. If a $g$ degree term includes $x_h$ then it comes from $x_h(q_{g-1}-r_{g-1})$, thus its coefficient is formed by the subtraction of two coefficients bounded by $[-2^{g-2},2^{g-2}]$ because $q_{g-1}$ and $r_{g-1}$ have degree less than $g$ and less than $h$ variables. The value of this coefficient is thus bounded by $[-2^{g-1},2^{g-1}]$. This covers all the $g$ degree terms in $p$. Thus, the condition holds for $g$ degree terms in polynomials with $h$ variables. 

Our base case is built from $g=\{0,1,2\}$ where this has been proven for all $n$ and the fact that for all $g > 2$ and $h<3$ the result is true because with at most $2$ variables the maximum degree is $2$ when $p$ is multi-linear. Note that $p$ must be multi-linear because $x_i \in \{0,1\}$ and thus $x_i^2 =x_i$.

By induction over $g$ and $h$ we prove the statement.
\end{proof}

\begin{lemma}
	A degree-$k$-CSP formula can be transformed into a $k$-uniform hypergraph, $H$, on $\ell2^{n/\ell}$ nodes ($\ell>k$) in $O(mn^k2^{kn/\ell})$ time with two weights per edge $W_1(e)$ and $W_2(e)$. Such that for every $\ell$-hyperclique, $h$, the $W_1$ weight (sum over all edges of  $W_1(e)$) is $\sum_{i=1}^m p_i(\vec{a})$ and the $W_2$ weight (sum over all edges of  $W_2(e)$) is $\sum_{i=1}^n a_i$   for some assignment $\vec{a}$. Furthermore, for every assignment  $\vec{a}$ there exists some $\ell$-hyperclique whose weight corresponds to it. The weights $W_1(e)$ are bounded by $[-mn^k2^k, mn^k2^k]$. The weights $W_2(e)$ are bounded by $n$. When $\ell$ and $k$ are constants.
	\label{lem:cspHypercliqueBase}
\end{lemma}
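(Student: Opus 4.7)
The plan is to extend the Williams-style Max-$3$-SAT to weighted triangle reduction to degree-$k$ polynomials and $(\ell,k)$-hypercliques. Split the $n$ variables into $\ell$ blocks $B_1,\ldots,B_\ell$ of $n/\ell$ each, and create a node for every (block, local assignment) pair; this gives $\ell\cdot 2^{n/\ell}$ nodes partitioned into parts $V_1,\ldots,V_\ell$, where $V_j$ collects the $2^{n/\ell}$ assignments to the variables in $B_j$. Declare every $k$-subset of nodes drawn from $k$ \emph{distinct} parts to be a hyperedge, making the hypergraph $\ell$-partite and $k$-uniform. Then $\ell$-hypercliques are in bijection with full assignments $\vec a \in \{0,1\}^n$: any hyperclique must pick exactly one node from each part (two nodes in the same part fail the ``distinct parts'' hyperedge condition), and conversely for any $\vec a$, restricting $\vec a$ to each block produces an $\ell$-tuple of nodes whose every $k$-subset is an edge.

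For the variable-count weighting $W_2$, fix for each variable $x_i$ in block $B_j$ a canonical $k$-subset $T(x_i)$ of parts containing $B_j$ (for concreteness, $T(x_i) = \{B_j, B_{j+1},\ldots,B_{j+k-1}\}$ cyclically). For a hyperedge $e$ on parts $T$, define $W_2(e)$ to be the sum of $a_i$ (read off from the $B_j$-node of $e$) over variables $x_i$ with $T(x_i) = T$. Because each variable is charged to exactly one hyperedge of the $\ell$-hyperclique of $\vec a$, the sum telescopes to $\sum_i a_i$, and clearly $W_2(e) \le n$.

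For $W_1$, expand each clause $p_i$ in multilinear form using $x^2 = x$. Every monomial $M$ has degree at most $k$, touches a set $S_M$ of at most $k$ blocks, and has coefficient $c_M$ with $|c_M| \le 2^{k-1}$ by the preceding coefficient-bound lemma. Fix a canonical $k$-superset $T(M) \supseteq S_M$ of parts (breaking ties by a fixed lexicographic rule). For every hyperedge $e$ whose parts are exactly $T(M)$, add $c_M\cdot M(\beta_e)$ to $W_1(e)$, where $\beta_e$ is the partial assignment determined by the nodes of $e$ restricted to the variables of $M$. Summed over the $\ell$-hyperclique of $\vec a$ we get $\sum_i \sum_M c_M\, M(\vec a) = \sum_i p_i(\vec a)$, as desired. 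With at most $m n^k$ monomials each contributing at most $2^{k-1}$ in absolute value, $|W_1(e)| \le m n^k 2^k$. The construction cost is dominated by iterating, for each monomial, over the $2^{kn/\ell}$ partial assignments on its canonical $k$ parts, giving the stated $O(m n^k 2^{kn/\ell})$ total.

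The main conceptual work is already done by the preceding coefficient-bound lemma; the remaining hurdle is purely combinatorial bookkeeping, namely choosing the canonical maps $T(\cdot)$ so each monomial and each variable is charged to exactly one hyperedge of every $\ell$-hyperclique. Once those maps are fixed, verification of the two sum identities, of the weight bounds, and of the runtime is routine.
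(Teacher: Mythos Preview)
Your proposal is correct and follows essentially the same approach as the paper: split the variables into $\ell$ blocks, make one node per local assignment so that $\ell$-hypercliques biject with full assignments, and then charge each multilinear monomial (respectively, each variable) to a single canonical $k$-subset of parts so that summing $W_1$ (respectively, $W_2$) over a hyperclique recovers $\sum_i p_i(\vec a)$ (respectively, $\sum_i a_i$). The only cosmetic differences are the specific tie-breaking rules---the paper uses ``first in lexicographic order'' for both the monomial-to-parts and variable-to-parts maps, whereas you use a cyclic choice for $W_2$---and the direction of the runtime count (per-monomial versus per-edge), but these do not affect the argument.
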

\begin{proof}
	We will split the variables $x_1, \ldots , x_n$ into $\ell$ groups $X_i = [x_{1+in/\ell}, \ldots, x_{(i+1)n/\ell}] $. We create a node for all $2^{n/\ell}$ assignments for each $X_i$ call this set $V_i$. This gives a total of $\ell2^{n/\ell}$ nodes. 
	
	Let $S[i_1,\ldots,i_k]$ be the set of monomials from $p_1,\ldots,p_m$ that sets $X_{i_1},\ldots,X_{i_k}$ are responsible for. We will consider set $S[i_1,\ldots,i_k]$ to be responsible for a monomial $c$ if $S[i_1,\ldots,i_k]$ is the first set in lexicographical order that contains all the variables from $c$. Note that $S[i_1,\ldots,i_k]$ may contain variables from all $m$ $p_i$s. Every monomial will be in some $S[i_1,\ldots,i_k]$ because every monomial involves at most $k$ variables.
	
	Let $T[i_1,\ldots,i_k]$ be the set of indices of the sets of variables that an edge covering  $X_{i_1},\ldots,X_{i_k}$ are responsible for. We will have $j \in T[i_1,\ldots,i_k]$ if $j \in \{i_1,\ldots,i_k\}$ and $T[i_1,\ldots,i_k]$ is the first $T$ in lexicographic order where $j \in \{i_1,\ldots,i_k\}$. 
	
	We create a hyper-edge for each choice of $k$ nodes with at most 1 node from each $V_i$ (we can not give two assignments to the same variable). We give an edge $e = (v_{i_1},\ldots, v_{i_k})$ weight 
	$$W_1(e) = \sum_{c \in S[i_1,\ldots,i_k]} c(v_{i_1},\ldots, v_{i_k})$$
	and 
	$$W_2(e)= \sum_{j \in T[i_1,\ldots,i_k]} \sum_{t =1}^{n/\ell} x_{t + jn/\ell}.$$
	
	We ideally want our hyper clique to be formed by all $\binom{\ell}{k}$ choices of $k$ elements from $\{v_1,v_2,\ldots,v_\ell\}$ where $v_i\in V_i$ is an assignment of the variables in $X_i$. If we choose two $v_i,v_i' \in V_i$ to be part of our clique there will be no edge that contains both by construction, thus there will be no hyperclique. 
	
	Note that this means that every $k$-uniform $\ell$-hyper-clique effectively gives an assignment to every variable. The clique has the weight of every monomial in $P= \sum_{i=1}^m p_i$ represented in $W_1$ exactly once. And, the clique has the weight of each variable represented in $W_2$ exactly once. 
	
	An edge could be responsible for at most $mn^k$ possible monomials each of which have a coefficient at most $2^k$. So the range of possible $W_1$ weights is  $[-mn^k2^k,mn^k2^k]$.
	
	Each edge can be responsible for at most $kn/\ell$ variables for a range of possible $W_2$ weights of $[0,kn/\ell]$. Both $k$ and $\ell$ are constants and $k<\ell$ so for simplicity we will simply bound $W_2$ by $n$.

	It takes $\ell2^{n/\ell}$ time to produce all the nodes. Every edge takes $m\binom{n}{k}$ time to produce the weights for at most. There are $O(\ell\binom{\ell}{k}2^{kn/\ell})$ edges. This takes a total of $O(\ell m\binom{n}{k}\binom{\ell}{k}2^{kn/\ell})$ time. Removing the constant factors we get $O(mn^k2^{kn/\ell})$ time.
\end{proof}

We will now use this lemma to show that maximum $\ell$-hyperclique  finding and  exact weight  $\ell$-hyperclique finding solve maximum degree-$k$-CSP and weighted degree-$k$-CSP respectively.  

\begin{lemma}
	If the maximum  $\ell$-hyperclique problem on a $k$-uniform hypergraph with weights in the range $[-mn^{k}2^k, mn^{k}2^k]$ can be solved in time $T(n)$ then the maximum degree-$k$-CSP problem can be solved in $O(T(\ell2^{n/\ell})+mn^k\ell2^{kn/\ell})$ time. When $\ell$ and $k$ are constants.
	\label{lem:maxHypercliqueMaxCSP}
\end{lemma}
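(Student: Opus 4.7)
The plan is to invoke Lemma~\ref{lem:cspHypercliqueBase} as a black box and read off the result. That lemma already constructs, in $O(mn^k 2^{kn/\ell})$ time, a $k$-uniform hypergraph $H$ on $N = \ell 2^{n/\ell}$ vertices, partitioned into $\ell$ groups $V_1,\dots,V_\ell$, together with an edge weight $W_1$ with the key property that the $\ell$-hypercliques of $H$ are in bijection with assignments $\vec a \in \{0,1\}^n$, and the $W_1$-weight of the hyperclique corresponding to $\vec a$ equals $\sum_{i=1}^m p_i(\vec a)$. For the maximum CSP version I would simply discard the auxiliary weight $W_2$ (which is only needed for the exact-weight variant handled in the next lemma).

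First I would build $H$ with edge weights $W_1$ using Lemma~\ref{lem:cspHypercliqueBase}; this costs $O(mn^k \ell 2^{kn/\ell})$ time, matching the additive term in the claimed bound. Next I would feed $H$ into the assumed maximum $\ell$-hyperclique subroutine. The precondition of that subroutine is that the edge weights lie in $[-mn^k 2^k, mn^k 2^k]$, which is exactly the bound guaranteed on $W_1$ by Lemma~\ref{lem:cspHypercliqueBase}, so the call is legal and costs $T(\ell 2^{n/\ell})$.

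For correctness, by the bijection part of Lemma~\ref{lem:cspHypercliqueBase}, maximizing $W_1$ over all $\ell$-hypercliques is the same as maximizing $\sum_{i=1}^m p_i(\vec a)$ over all assignments $\vec a$, which is precisely the maximum degree-$k$-CSP value. The $\ell$-partite structure of $H$ forces any $\ell$-hyperclique to pick exactly one vertex per $V_i$, so the recovered $\vec a$ is well-defined; the reduction therefore outputs both the optimum value and (by reading which vertex was chosen in each $V_i$) an optimal assignment. Summing the construction and the hyperclique-call costs gives $O(T(\ell 2^{n/\ell}) + mn^k \ell 2^{kn/\ell})$, as required.

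There is essentially no serious obstacle here beyond bookkeeping: the substantive content (the encoding of monomials into hyperedge weights, the bound $|W_1(e)| \le mn^k 2^k$ coming from the $2^{g-1}$-bound on degree-$g$ coefficients of Boolean polynomials, and the $\ell$-partite structure preventing conflicting assignments) has all been discharged in Lemma~\ref{lem:cspHypercliqueBase}. The only minor point to verify is that the weight range $[-mn^k 2^k, mn^k 2^k]$ that the hyperclique algorithm assumes matches the $[-mn^k 2^k, mn^k 2^k]$ produced by the construction, which it does by inspection. Treating $\ell$ and $k$ as constants absorbs the $\ell$ and $\binom{\ell}{k}$ factors into the big-$O$.
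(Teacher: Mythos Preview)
Your proposal is correct and is essentially identical to the paper's own proof: invoke Lemma~\ref{lem:cspHypercliqueBase}, keep only the $W_1$ weights (discarding $W_2$), and call the assumed maximum $\ell$-hyperclique routine on the resulting $k$-uniform hypergraph on $\ell 2^{n/\ell}$ nodes. The paper's argument is the same two-line reduction, with the same correctness justification and the same time accounting.
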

\begin{proof}
	We use Lemma \ref{lem:cspHypercliqueBase} to transform the  maximum degree-$k$-CSP problem into a $k$-uniform hypergraph, $H$, on $\ell 2^{kn/\ell}$ nodes in $O(mn^k\ell2^{kn/\ell})$ time. Next we create $H'$ by taking every edge, $e$, from $H$ and assigning it weight $W_1(e)$. Then we run maximum $k$-hyperclique on $H'$ the maximum hyperclique found will correspond to the assignment $\vec{a}$ that maximizes $\sum_{i=1}^m p_i(\vec{a})$.
	
\end{proof}

\begin{lemma}
If the exact weight  $\ell$-hyperclique problem on a $d$-uniform hypergraph with weights in the range $[-mn^{k+1}2^k\binom{\ell}{k}/\ell, mn^{k+1}2^k\binom{\ell}{k}/\ell]$ can be solved in time $T(n)$ then the weighted degree-$k$-CSP problem can be solved in $O(T(\ell2^{n/\ell})+mn^k\ell2^{kn/\ell})$ time. When $\ell$ and $k$ are constants.
\label{lem:exactWeightcliqueWeightCSP}
\end{lemma}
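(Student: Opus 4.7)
The plan is to reduce the weighted degree-$k$-CSP problem to a single exact-weight hyperclique query by combining the two edge weights from Lemma~\ref{lem:cspHypercliqueBase} into one scalar using a large multiplier.

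First, I would apply Lemma~\ref{lem:cspHypercliqueBase} to the given weighted degree-$k$-CSP instance. In time $O(mn^k \ell 2^{kn/\ell})$ this produces a $k$-uniform hypergraph $H$ on $\ell 2^{n/\ell}$ nodes together with two weights $W_1(e), W_2(e)$ per edge, with the property that $\ell$-hypercliques of $H$ are in bijection with assignments $\vec{a} \in \{0,1\}^n$, and for any such clique the total $W_1$-weight equals $\sum_{i=1}^m p_i(\vec{a})$ while the total $W_2$-weight equals $\sum_{i=1}^n a_i$.

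Second, I would collapse the pair of weights into a single integer weight by the standard high-radix trick. Since $\sum_{i=1}^n a_i \in [0,n]$, pick $C := n+1$ and assign each edge $e$ the new weight $W(e) := C \cdot W_1(e) + W_2(e)$. The total $W$-weight of any $\ell$-hyperclique is $C \cdot \sum_i p_i(\vec{a}) + \sum_i a_i$. Because the second summand is strictly less than $C$, this encoding is injective in the pair $\left(\sum_i p_i(\vec a),\ \sum_i a_i\right)$, so a clique has total weight equal to the target $W^\star := C \cdot K_p + K_v$ if and only if both $\sum_i p_i(\vec a) = K_p$ and $\sum_i a_i = K_v$, i.e.\ iff $\vec a$ is a solution to the CSP.

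Third, I would invoke the assumed exact-weight $\ell$-hyperclique algorithm on $H$ with target $W^\star$. Per edge we have $|W(e)| \leq C \cdot |W_1(e)| + |W_2(e)| \leq (n{+}1)\, mn^k 2^k + kn/\ell = O(mn^{k+1} 2^k)$, which fits inside the weight range stated in the lemma when $\ell, k$ are constants (the extra $\binom{\ell}{k}/\ell$ factor in the stated bound is slack). The total running time is $O(T(\ell 2^{n/\ell}) + mn^k \ell 2^{kn/\ell})$: the first term for the single exact-weight hyperclique query, the second for building $H$. There is no real obstacle here—the only non-routine step is checking that the chosen multiplier $C$ is large enough that the $W_1$- and $W_2$-contributions cannot alias, which is immediate from the bound $\sum_i a_i \leq n < C$.
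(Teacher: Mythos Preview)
Your proposal is correct and follows essentially the same approach as the paper: apply Lemma~\ref{lem:cspHypercliqueBase}, combine the two edge weights via a sufficiently large multiplier so that the $W_1$- and $W_2$-totals land in disjoint digit blocks, and issue a single exact-weight $\ell$-hyperclique query with target $C\cdot K_p + K_v$. The only cosmetic difference is the choice of multiplier: you take $C = n+1$, while the paper uses $C = 2\binom{\ell}{k}n/\ell$; since the total $W_2$-weight of any clique is exactly $\sum_i a_i \le n$, your smaller constant already suffices and the argument goes through unchanged.
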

\begin{proof}
	
	We use Lemma \ref{lem:cspHypercliqueBase} to transform the  maximum degree-$k$-CSP problem into a $k$-uniform hypergraph, $H$, on $\ell2^{n/\ell}$ nodes in $O(mn^k\ell2^{kn/\ell})$ time. Next we create $H'$ by taking every edge, $e$, from $H$ and assigning it weight $W_1(e)2 \binom{\ell}{k}n/\ell + W_2(e)$ in $H'$. Note that  the weight of $W_2$ is upper-bounded by $n$ and there are $\binom{\ell}{k}$ edges in a hyper-clique so the low order bits correspond to $\sum_{i=1}^n a_i$  and the high order bits correspond to $\sum_{i=1}^m p_i(\vec{a})$.
	
	Then we run exact weight $\ell$-hyperclique on $H'$ looking for weight $K_g =  K_p2 \binom{\ell}{k}n/\ell + K_v$ if we find a solution it will correspond to the assignment $\vec{a}$ that achieves  $\sum_{i=1}^m p_i(\vec{a}) = K_p$ and $\sum_{i=1}^n a_i = K_v$.
\end{proof}

\subsection{Degree-k-CSP to Unweighted Hyperclique}
We will now extend the reduction to the unweighted version of $(\ell,k)$ hyperclique. This reduction will introduce a large polynomial overhead. However, for many choices of degree-$k$-CSPs the best known algorithms for both the maximization and weighted variants require exponential time. Notably, for max-$3$-SAT no $2^{(1-\eps)n}$ algorithm is known. 

\begin{theorem}
	If the unweighted $\ell$-hyperclique problem on a $k$-uniform hypergraph
	can be solved in time $T(n)$ then the maximum degree-$k$-CSP problem can be solved in $O((mn^{k})^{\binom{\ell}{k}}T(2^{n/\ell})+mn^k2^{kn/\ell})$ time. When $\ell$ and $k$ are constants.
	\label{thm:maxHypercliqueMaxCSP}
\end{theorem}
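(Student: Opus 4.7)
The plan is to reduce to the weighted hyperclique problem via Lemma~\ref{lem:maxHypercliqueMaxCSP} and then ``strip'' the weights by enumerating over all possible per-edge weight patterns a hyperclique could exhibit. First I would apply Lemma~\ref{lem:cspHypercliqueBase} to the given max degree-$k$-CSP instance, producing in $O(mn^k 2^{kn/\ell})$ time an $\ell$-partite $k$-uniform hypergraph $H$ on $N = \ell \cdot 2^{n/\ell}$ vertices with parts $V_1,\ldots,V_\ell$ and edge weights $W_1(e) \in [-mn^k 2^k,\, mn^k 2^k]$, so that the $\ell$-hypercliques of $H$ correspond bijectively to CSP assignments and the $W_1$-sum along a hyperclique equals $\sum_i p_i(\vec a)$.

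Because $H$ is $\ell$-partite, any $\ell$-hyperclique picks one vertex from each part and uses exactly one edge for each $k$-subset $S \in \binom{[\ell]}{k}$ of parts; there are $\binom{\ell}{k} = O(1)$ such ``positions.'' The key observation is that the range of $W_1$ contains only $O(mn^k)$ integer values (since $2^k$ is constant), so the vector of per-position weights $\vec w = (w_S)_{S\in \binom{[\ell]}{k}}$ realized by a hyperclique lies in a set of size $(mn^k)^{\binom{\ell}{k}}$. I would iterate over every such $\vec w$, form the unweighted subhypergraph $H_{\vec w}$ obtained by discarding every edge at position $S$ whose $W_1$-weight is not exactly $w_S$, and call the unweighted $\ell$-hyperclique oracle on $H_{\vec w}$ in time $T(N) = T(\ell \cdot 2^{n/\ell})$. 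If the oracle reports a hyperclique, the corresponding CSP assignment achieves objective value $\sum_S w_S$; returning the maximum of $\sum_S w_S$ over all $\vec w$ for which a hyperclique exists gives the max degree-$k$-CSP value.

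Summing costs yields $O\bigl(mn^k 2^{kn/\ell} + (mn^k)^{\binom{\ell}{k}} T(2^{n/\ell})\bigr)$ (absorbing the constant $\ell$ into the argument of $T$), matching the claimed bound. The only subtle point, rather than a genuine obstacle, is ensuring that ``position'' is well-defined so that the filter $W_1(e) = w_S$ is applied consistently; this is immediate from the $\ell$-partite structure handed to us by Lemma~\ref{lem:cspHypercliqueBase}, since the $k$ parts spanned by an edge are intrinsic to the edge. If $\ell$-partiteness were unavailable, one could substitute a single application of color coding (Lemma~\ref{lem:colorcode}-style) at the cost of an $\ell^{\ell}\log N$ factor absorbed in $\tilde O$, but here no such step is needed.
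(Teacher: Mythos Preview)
Your proposal is correct and follows essentially the same approach as the paper: both apply the weighted construction from Lemma~\ref{lem:cspHypercliqueBase}, observe that an $\ell$-hyperclique in the $\ell$-partite graph has exactly one edge per $k$-subset of parts, enumerate the $O((mn^k)^{\binom{\ell}{k}})$ possible weight profiles, filter edges accordingly, and invoke the unweighted oracle. The only cosmetic difference is that the paper enumerates profiles in decreasing order of total weight and stops at the first success, whereas you scan all profiles and take the maximum; both are equivalent within the stated bound.
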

\begin{proof}
	
	We apply Lemma \ref{thm:maxHypercliqueMaxCSP} and generate a instance of the   $\ell$-hyperclique problem on a $k$-uniform hypergraph with weights in the range $[-mn^{k}2^k, mn^{k}2^k]$ in time $O(mn^k2^{kn/\ell})$ time. 
	
	We want an unweighted version. We can do this by guessing the weights of the edges in our hyperclique and deleting edges that don't have these weights. In generating the hyper graph we will split the variables $x_1, \ldots , x_n$ into $k$ groups $X_i = [x_{1+in/\ell}, \ldots, x_{(i+1)n/\ell}] $. We create a node for all $2^{n/\ell}$ assignments for each $X_i$ call this set $V_i$. We can consider the edges $E[i_1,\ldots,i_k]$ which have exactly one vertex from each of $V_{i_1},\ldots,V_{i_k}$.
	
	We will make a guess about the weight of the $E[i_1,\ldots,i_k]$ edges, call the weight we guess $g[i_1,\ldots,i_k]$. There are total of $\binom{\ell}{k}$ edges in a hyperclique, we will make guesses for each edge. The guesses, $g$ of the weights can be limited to  the range  $[-mn^{k}2^k, mn^{k}2^k]$. This gives a total of $O( (m n^k)^{\binom{\ell}{k}})$ values we have to guess. Any hyperclique we find in a graph where we delete all $E[i_1,\ldots,i_k]$ edges have weight $g[i_1,\ldots,i_k]$ will have a total weight equal to the sum of all the guessed $g[i_1,\ldots,i_k]$. Formally, if $\binom{\{1,\ldots, \ell\}}{k}$ is the set of all $k$ element subsets of $\{1,\ldots, \ell\}$ then the total weight of any hyperclique is $\sum_{\{i_1,\ldots,i_k\} \in \binom{\{1,\ldots, \ell\}}{d}} g[i_1,\ldots,i_k]$.
	
	To solve the maximum $k$-degree CSP we exhaustively search through assignments of $g_{\{1,\ldots,k\}},\ldots,g_{\{i_0,\ldots,i_{k-1}\}},\ldots,$ $ g_{\{\ell-k,\ldots, \ell\}}$ in order from the maximum sum of guesses to the minimal sum. We delete all edges where the weight of the edge disagrees with the guess and then remove the weight from all the remaining edges. The first instance of $\ell$-hyperclique that finds a $\ell$-hyperclique has found the maximum weight $\ell$-hyperclique. 
	
	Thus, we can solve $(mn^k)^{\binom{\ell}{k}}$ instances of  $\ell$-hyperclique in a $k$ regular hypergraph with $N = \ell2^{n/\ell}$ nodes and $O(2^{kn/\ell})$ hyperedges to solve maximum degree-$k$-CSP. This takes time $\tO( (mn^k)^{\binom{\ell}{k}}T(2^{n/\ell})+  2^{nk/\ell})$.
\end{proof}

\begin{theorem}
	If the unweighted $\ell$-hyperclique problem on a $k$-uniform hypergraph 
	can be solved in time $T(n)$ then the weighted degree-$k$-CSP problem can be solved in $O((mn^{k+1})^{\binom{\ell}{k}}T(2^{n/\ell})+mn^k2^{kn/\ell})$ time. 
	\label{thm:exactWeightcliqueWeightCSP}
\end{theorem}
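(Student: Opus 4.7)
The plan is to mimic the proof of Theorem~\ref{thm:maxHypercliqueMaxCSP}, but built on top of Lemma~\ref{lem:exactWeightcliqueWeightCSP} instead of Lemma~\ref{lem:maxHypercliqueMaxCSP}. First I would invoke Lemma~\ref{lem:exactWeightcliqueWeightCSP} on the given weighted degree-$k$-CSP instance. This produces, in $O(mn^k 2^{kn/\ell})$ time, a $k$-uniform hypergraph $H'$ on $N = \ell \cdot 2^{n/\ell}$ nodes, with integer edge weights in the range $[-mn^{k+1}2^k\binom{\ell}{k}/\ell,\, mn^{k+1}2^k\binom{\ell}{k}/\ell]$, together with a target value $K_g$, such that $H'$ has an $\ell$-hyperclique of total weight exactly $K_g$ if and only if the original CSP has a satisfying assignment of the prescribed type.

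Next, as in the proof of Theorem~\ref{thm:maxHypercliqueMaxCSP}, I exploit the fact that the vertex set of $H'$ is partitioned into $\ell$ blocks $V_1,\ldots,V_\ell$ (one per group of variables) and that every hyperclique uses exactly one vertex from each of some $\ell$ blocks, so its edges split into the $\binom{\ell}{k}$ classes $E[i_1,\ldots,i_k]$. For each of these $\binom{\ell}{k}$ edge classes I guess an integer weight $g[i_1,\ldots,i_k]$ from the weight range above; the number of possible values per class is $O(mn^{k+1})$ (absorbing the constants $2^k$, $\binom{\ell}{k}/\ell$ into the $O(\cdot)$ since $k$ and $\ell$ are constants). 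This gives at most $(mn^{k+1})^{\binom{\ell}{k}}$ joint guesses.

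For each joint guess, I first check whether $\sum_{\{i_1,\ldots,i_k\}} g[i_1,\ldots,i_k] = K_g$, skipping the guess otherwise; this is where the reduction diverges from the maximization version, which instead enumerates guesses from largest sum down. For a consistent guess I build an unweighted hypergraph $H''$ by deleting every edge $e \in E[i_1,\ldots,i_k]$ whose weight in $H'$ is not $g[i_1,\ldots,i_k]$, and then throwing away the remaining weights. Any $\ell$-hyperclique of $H''$ is an $\ell$-hyperclique of $H'$ whose weight is exactly $\sum g[i_1,\ldots,i_k] = K_g$, and conversely any exact-weight-$K_g$ $\ell$-hyperclique of $H'$ is retained in $H''$ for the guess that matches its edge weights. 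So I call the unweighted $\ell$-hyperclique oracle once per guess and answer yes iff some guess produces a hyperclique.

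Summing the costs, the hypergraph construction contributes $O(mn^k 2^{kn/\ell})$, the $(mn^{k+1})^{\binom{\ell}{k}}$ guesses each contribute one oracle call of cost $T(2^{n/\ell})$ (plus a polynomial overhead to filter edges which is absorbed), giving the claimed $O((mn^{k+1})^{\binom{\ell}{k}} T(2^{n/\ell}) + mn^k 2^{kn/\ell})$ bound. The only nontrivial step is verifying that the weight range from Lemma~\ref{lem:exactWeightcliqueWeightCSP} is what drives the $mn^{k+1}$ factor (as opposed to $mn^k$ in the max-CSP case); this comes from the extra factor of $2\binom{\ell}{k}n/\ell$ introduced by Lemma~\ref{lem:exactWeightcliqueWeightCSP} when it packs the $W_2$ bits (the variable-sum constraint) below the $W_1$ bits (the clause-sum constraint) into a single edge weight. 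Since $k$ and $\ell$ are constants, this factor is $O(n)$ and accounts for the jump from $mn^k$ to $mn^{k+1}$ in the number of distinct weights per edge class.
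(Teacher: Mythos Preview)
Your proposal is correct and follows essentially the same approach as the paper: invoke Lemma~\ref{lem:exactWeightcliqueWeightCSP} to obtain a weighted hypergraph with target $K_g$, then guess one weight per edge class $E[i_1,\ldots,i_k]$, restrict to guesses summing to $K_g$, delete mismatching edges, and call the unweighted oracle. Your explanation of why the per-class weight range is $O(mn^{k+1})$ rather than $O(mn^k)$ (the extra factor of $n$ from packing $W_2$ into the low-order bits) is in fact clearer than the paper's own proof, which contains a minor typo in the stated weight range.
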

\begin{proof}
We apply Lemma \ref{thm:exactWeightcliqueWeightCSP} and generate a instance of the   $\ell$-hyperclique problem on a $k$-uniform hypergraph with weights in the range $[-mn^{k+1}2^k, mn^{k+1}2^k]$ in time $O(mn^k2^{kn/\ell})$ time and a target sum $K_g$. 
		
We want an unweighted version. We can do this by guessing the weights of the edges in our hyperclique and deleting edges that don't have these weights. In generating the hyper graph we will split the variables $x_1, \ldots , x_n$ into $k$ groups $X_i = [x_{1+in/\ell}, \ldots, x_{(i+1)n/\ell}] $. We create a node for all $2^{n/\ell}$ assignments for each $X_i$ call this set $V_i$. We can consider the edges $E[i_1,\ldots,i_k]$ which have exactly one vertex from each of $V_{i_1},\ldots,V_{i_k}$.
		
We will make a guess about the weight of the $E[i_1,\ldots,i_k]$ edges, call the weight we guess $g[i_1,\ldots,i_k]$. There are total of $\binom{\ell}{k}$ edges in a hyperclique, we will make guesses for each edge. The guesses, $g$ of the weights can be limited to  the range  $[-mn^{k}2^k, mn^{k}2^k]$. This gives a total of $O( (m n^k)^{\binom{\ell}{k}})$ values we have to guess. Any hyperclique we find in a graph where we delete all $E[i_1,\ldots,i_k]$ edges have weight $g[i_1,\ldots,i_k]$ will have a total weight equal to the sum of all the guessed $g[i_1,\ldots,i_k]$. Formally, if $\binom{\{1,\ldots, \ell\}}{k}$ is the set of all $k$ element subsets of $\{1,\ldots, \ell\}$ then the total weight of any hyperclique is $\sum_{\{i_1,\ldots,i_k\} \in \binom{\{1,\ldots, \ell\}}{k}} g[i_1,\ldots,i_k]$.
		
To solve the weighted $k$-degree CSP we exhaustively search through assignments of $g_{\{1,\ldots,k\}},\ldots,g_{\{i_0,\ldots,i_{k-1}\}},\ldots,$ $ g_{\{\ell-k,\ldots, \ell\}}$ where the sum of the guesses equals our target sum $K_g$. We delete all edges where the weight of the edge disagrees with the guess and then remove the weight from all the remaining edges. The first instance of $\ell$-hyperclique that finds a $\ell$-hyperclique has found a exact weight $\ell$-hyperclique. 
		
Thus, we can solve $(mn^{k+1})^{\binom{\ell}{k}}$ instances of  $\ell$-hyperclique in a $k$ regular hypergraph with $N = \ell2^{n/\ell}$ nodes and $O(2^{kn/\ell})$ hyperedges to weighted degree-$k$-CSP. This takes time $\tO( (mn^{k+1})^{\binom{\ell}{k}}T(2^{n/\ell})+  2^{nk/\ell})$.
\end{proof}

Note that if $\ell>>k$ then the time to solve the hyperclique problem overwhelms the time it takes to produce the graph. Further note that Max $\ell$-SAT is a special case of the maximum degree-$\ell$-CSP problem.

\begin{corollary}
An algorithm which breaks the $(\ell,k)$-hyperclique hypothesis runs in time $n^{\ell-\eps}$ for some $\eps>0$. 

When $\ell$ and $k$ constant
such an algorithm can be used to solve the maximum degree-$k$-CSP problem and weighted degree-$k$-CSP problems in time $2^{(1-\eps/\ell -o(1))n}$. 
\end{corollary}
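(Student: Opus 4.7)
The plan is to simply plug the hypothesized $n^{\ell-\eps}$-time hyperclique algorithm into the reductions already proved in Theorems~\ref{thm:maxHypercliqueMaxCSP} and~\ref{thm:exactWeightcliqueWeightCSP}, and verify that the exponential factor shrinks as claimed. Since $\ell$ and $k$ are constants, everything that grows only polynomially in $m$ and $n$ will end up inside the $o(1)$ in the exponent.

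First, I would start with an instance of maximum (or weighted) degree-$k$-CSP on $n$ variables and $m$ clauses. By Theorem~\ref{thm:maxHypercliqueMaxCSP} (respectively Theorem~\ref{thm:exactWeightcliqueWeightCSP}) the problem reduces, in $O(mn^k 2^{kn/\ell})$ preprocessing time, to $(mn^{k})^{\binom{\ell}{k}}$ (respectively $(mn^{k+1})^{\binom{\ell}{k}}$) instances of unweighted $\ell$-hyperclique in a $k$-uniform hypergraph on $N=\ell\,2^{n/\ell}$ vertices. Applying the hypothetical $N^{\ell-\eps}$-time hyperclique algorithm to each, the total running time is
\[
(mn^{k+1})^{\binom{\ell}{k}}\cdot\bigl(\ell\,2^{n/\ell}\bigr)^{\ell-\eps} \;+\; mn^{k}\,2^{kn/\ell}.
\]

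Next I would simplify the exponent. The exponential factor coming from the hyperclique calls is
\[
\bigl(\ell\,2^{n/\ell}\bigr)^{\ell-\eps} \;=\; \ell^{\ell-\eps}\cdot 2^{(1-\eps/\ell)n},
\]
and the preprocessing term contributes $2^{kn/\ell}$, which is negligible compared to $2^{(1-\eps/\ell)n}$ since $k<\ell$ implies $k/\ell<1-\eps/\ell$ for all small enough $\eps$. The polynomial prefactor $(mn^{k+1})^{\binom{\ell}{k}}\cdot\ell^{\ell-\eps}$ is of the form $\poly(m,n)$, because $\binom{\ell}{k}$, $\ell$ and $k$ are all constants; such a factor can be absorbed into $2^{o(n)}$ whenever $m = \poly(n)$, which is the standard setting for CSPs.

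Combining these observations gives a running time of $2^{(1-\eps/\ell)n}\cdot \poly(m,n) = 2^{(1-\eps/\ell - o(1))n}$ for both the maximum degree-$k$-CSP and the weighted degree-$k$-CSP problems, which is exactly the stated bound. There is no real obstacle here: the work of bounding coefficients and building the reductions has already been done in Lemma~\ref{lem:cspHypercliqueBase} and Theorems~\ref{thm:maxHypercliqueMaxCSP}--\ref{thm:exactWeightcliqueWeightCSP}, so the ``proof'' is essentially the arithmetic of substituting $T(N)=N^{\ell-\eps}$ and checking that the $N^{\ell-\eps}$ term dominates the polynomial overhead. The only mild subtlety is to make sure the $\binom{\ell}{k}$-th power polynomial overhead really is sub-exponential in $n$, which is immediate because $\ell$, $k$ are constants and $m=\poly(n)$.
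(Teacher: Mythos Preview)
Your proposal is correct and is exactly the intended argument: the paper does not spell out a proof of this corollary because it is meant to follow immediately from substituting $T(N)=N^{\ell-\eps}$ into Theorems~\ref{thm:maxHypercliqueMaxCSP} and~\ref{thm:exactWeightcliqueWeightCSP} and observing that all non-exponential factors are $\poly(m,n)=2^{o(n)}$ when $\ell,k$ are constants and $m=\poly(n)$. Your handling of the preprocessing term $2^{kn/\ell}$ (dominated since $k<\ell$, taking $\eps$ small enough without loss of generality) is also the right way to dispose of that term.
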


\section{Bounds for Graph Densities for NonInteger L}
\label{sec:nonIntgerDensities}
We get tight $mn$ hardness for graphs with densities of the form $m=n^{1+1/L}$ for integer $L\geq 1$. For graphs with densities between $m  = \ttildeo( n^{1+1/L})$ and $ m = \tilde{\omega}(1+1/(L+1))$ we get the lower bound of $n^{2+1/(L+1)}$. So our lower bound looks like a staircase (see Figure \ref{fig:lbimg2}). The best known upper bound for all of these problems is $mn$. Ideally there would be a tight lower bound between the points $m=n^{1+1/L}$ and $m=n^{1+1/(L+1)}$. This does not exist right now, but we can present an improved lower bound for some densities in between $m=n^{1+1/L}$ and $m=n^{1+1/(L+1)}$. As a result we can rule out the ``stair case'' lower bound, making improvements between these densities even less likely.

\begin{figure}[h]
		\centering
		\includegraphics[scale=0.4]{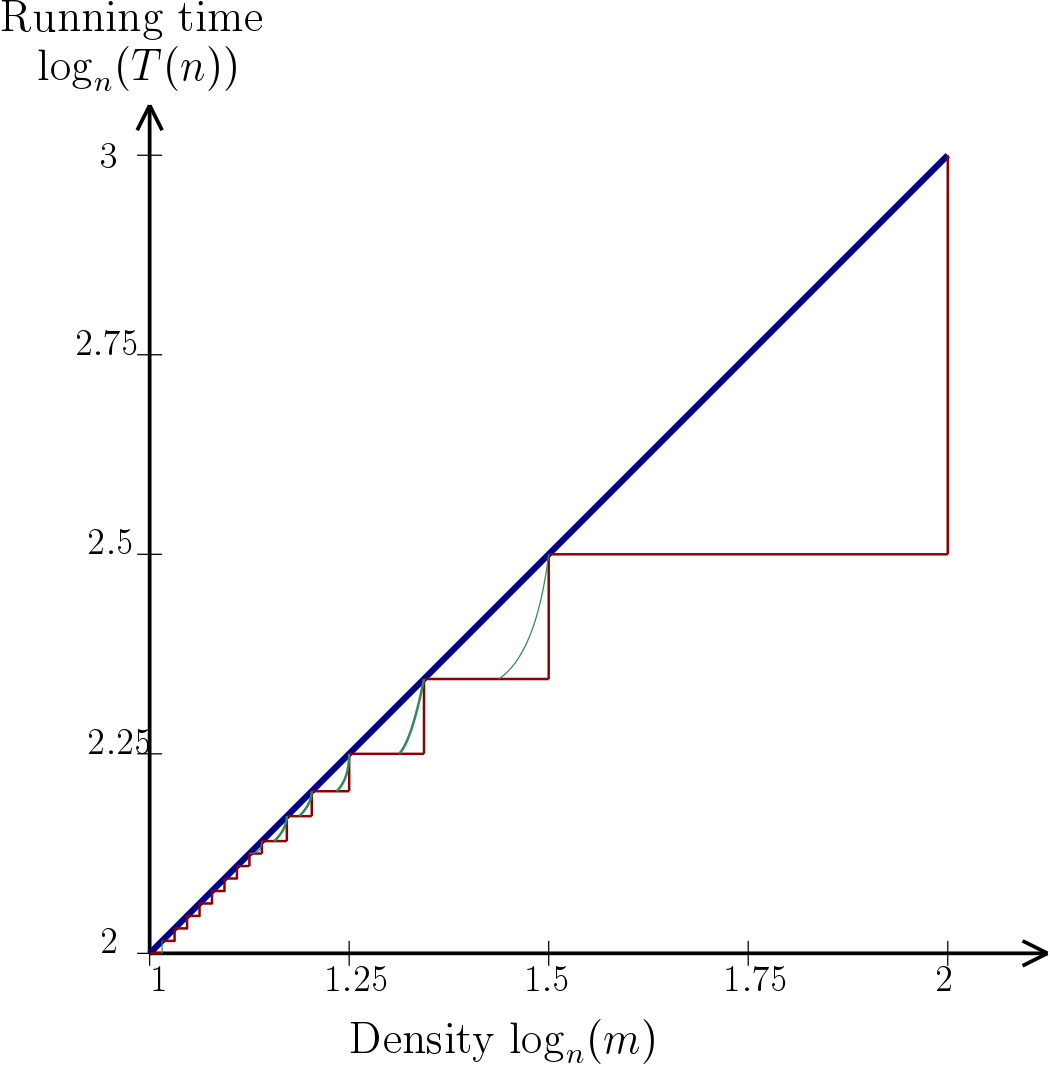}
		\caption{The heavy blue line represents the upper bound of $\tO(nm)$ and the jagged lighter red line represents the lower bound at all densities $m = \tTheta(n^{1+1/L})$. The curved green lines show the lower bounds from appendix \ref{sec:nonIntgerDensities}.}
		\label{fig:lbimg2}
\end{figure}

At a high level we will use random color coding on the vertices with $c$ colors. This will cause the node sets to be of size approximately $n/c$. The edge sets will effectively be colored by pairs of these colors (the color of their endpoints) for a total of $c^2$ colors. So the number of edges between nodes of two specifically chosen colors is in expectation $m/c^2$. If the graph is sparse to begin with then this coloring will cause the graph to become increasingly sparse. However, this reduction from denser to less dense graphs is inefficient; we need to consider $c^k$ possible choices of colors of vertices. For densities close to $m = n^{1+1/L}$ we will get a better upper bound than before. Sadly this approach does not work for $L=1$ because color coding a dense graph generates many dense graphs.

We will re-phrase and use Lemma 5 from Pagh and Silvestri's paper.
\begin{theorem}
We randomly color a graph's nodes with $c$ colors. Let the set of edges between color $c_i$ and $c_j$ be called $E_{i,j}$. Then 
$P( |E_{i,j}| > Ex/c^2) \leq 1/x^2.$
\cite{pagh2014input}
\label{thm:pagh2014input}
\end{theorem}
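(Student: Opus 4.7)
\medskip
\noindent\textbf{Proof proposal.}
The plan is to use the second moment method. Let $X = |E_{i,j}|$ and write $X = \sum_{e \in E} X_e$, where $X_e$ is the indicator that the two endpoints of $e$ are colored $i$ and $j$ under the uniform random $c$-coloring of the vertex set. Since each vertex is colored independently and uniformly, linearity of expectation immediately gives $E[X] = |E|\cdot 2/c^2$ when $i \neq j$ (and $|E|/c^2$ when $i=j$), so the target threshold $|E|\cdot x/c^2$ is a constant multiple of $E[X]$.

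The main step will be to upper bound $E[X^2] = \sum_{e,e' \in E} \Pr[X_e = X_{e'} = 1]$ by splitting the double sum according to how $e$ and $e'$ meet. I would handle three cases separately: (i) $e = e'$ contributes $E[X]$; (ii) $e$ and $e'$ are vertex-disjoint, in which case $X_e$ and $X_{e'}$ are independent and the total contribution is at most $E[X]^2$; (iii) $e$ and $e'$ share exactly one vertex, in which case three distinct vertices must all be colored inside $\{i,j\}$, an event of probability $O(1/c^3)$. The goal is to show $E[X^2] = O(|E|^2/c^4)$, so that applying Markov's inequality to the nonnegative variable $X^2$ yields
\[ \Pr\!\left[X \geq \frac{|E|x}{c^2}\right] \;=\; \Pr\!\left[X^2 \geq \frac{|E|^2 x^2}{c^4}\right] \;\leq\; \frac{E[X^2]}{|E|^2 x^2 / c^4} \;\leq\; \frac{1}{x^2}. \]

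I expect the main obstacle to be controlling case (iii): at a vertex $v$ of degree $d(v)$ there are $\Theta(d(v)^2)$ shared-vertex pairs, each contributing $O(1/c^3)$, so the total covariance contribution is $O((\sum_v d(v)^2)/c^3)$. To absorb this into $O(|E|^2/c^4)$ one typically needs either a degree-regularity assumption on the graph or a slightly weaker form of the bound (e.g., with $E[X]$ replaced by $\max(E[X], 1)$). Since the statement is being cited from Pagh--Silvestri's input/output triangle-enumeration paper, I would follow their bookkeeping, which exploits the fact that this shared-vertex contribution is dominated by $E[X]^2$ whenever the graph is not pathologically skewed; otherwise a reduction to the regular case (e.g., by a high-degree / low-degree split familiar from the Alon--Yuster--Zwick style arguments used earlier in the paper) finishes the bound.
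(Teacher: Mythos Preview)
The paper does not prove this theorem at all: it is stated as a re-phrasing of Lemma~5 of Pagh and Silvestri and simply cited, so there is no ``paper's own proof'' to compare against. Your second-moment approach is exactly the standard argument behind such load-balancing bounds and is essentially what the cited lemma does, so the overall plan is sound.

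Two remarks. First, your displayed chain of inequalities asks for $E[X^2] \le |E|^2/c^4$, but this cannot hold: already $(E[X])^2 = 4|E|^2/c^4$ when $i\neq j$, so Jensen forces $E[X^2] \ge 4|E|^2/c^4$. The second-moment method therefore only yields $O(1/x^2)$, not literally $1/x^2$. That is fine for how the bound is used in this paper (constants are irrelevant in Lemma~\ref{lem:selfRedDens}), and the re-phrased statement is loose about constants anyway, but you should not claim the final inequality with constant $1$.

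Second, your diagnosis of case~(iii) is exactly right, and in fact the paper's own application confirms it: in Lemma~\ref{lem:selfRedDens} the authors first delete all vertices of degree larger than $gm/n$ \emph{before} applying this theorem, which is precisely the high-degree/low-degree split you propose. With maximum degree $\Delta$, the shared-vertex contribution $\sum_v d(v)^2/c^3 \le \Delta \cdot 2|E|/c^3$ is dominated by $|E|^2/c^4$ once $\Delta \le O(|E|/c)$, which holds in that setting. So rather than leaving this as a loose end, you can state the bound under a max-degree hypothesis and observe that the paper's usage already provides it.
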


We apply this to get a self-reduction from $k$-cycle to $k$-cycle at a different density.

\begin{lemma}
	If we can solve the $k$-cycle problem in $T(n,m)$ time then for all $g \in [1,n]$ and all $c \in [1,n]$:
	$$T(n,m) = O(nm\lg(n)/g + nm\lg(n)/c + c^k T(n \lg(n)/c, m\sqrt{g}/ c^{1.5})).$$
	\label{lem:selfRedDens}
\end{lemma}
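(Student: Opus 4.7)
The plan is to combine two ingredients into a self-reduction: a high-degree preprocessing step that handles all $k$-cycles through vertices of degree $\geq g$ (yielding the $\tilde O(nm/g)$ term), and a randomized $c$-coloring of the remaining low-degree graph that produces $c^k$ sparser $k$-partite $k$-cycle subproblems (yielding the $c^k T(\cdot,\cdot)$ term). The Pagh--Silvestri bound (Theorem~\ref{thm:pagh2014input}) is what controls the edge count in each color-pair after the coloring.

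\textbf{Step 1: Peel heavy vertices.} Call a vertex \emph{heavy} if its degree exceeds $g$; a handshake argument gives at most $2m/g$ heavy vertices. For each heavy vertex $v$, decide whether some $k$-cycle passes through $v$ using Lemma~\ref{lem:colorcode}: color the remaining vertices with $k$ colors uniformly at random, keep only edges respecting the cyclic color order, and run a $k$-level BFS/DFS from $v$. Each test costs $\tilde O(n+m)$ and must be repeated $O(k^k\log n)$ times to reach high probability, so summing over the $\leq 2m/g$ heavy vertices contributes $\tilde O(nm/g)$ in the regime where $m \leq O(n)$ (which the outer reduction may assume, since the interesting densities here have $m=n^{1+1/L}$ and the recursion quickly reduces below linear density). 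If no heavy vertex is on a $k$-cycle, every $k$-cycle lives in the subgraph $G'$ on light vertices, in which every degree is at most $g$.

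\textbf{Step 2: Randomized $c$-coloring plus recursion.} Color every vertex of $G'$ uniformly and independently with one of $c$ colors. Chernoff gives each color class size $O(n\log n/c)$ w.h.p. A fixed $k$-cycle lands in any specific ordered $k$-tuple of color classes with probability $1/c^k$, so iterating over the $c^k$ ordered tuples (or equivalently running $O(c^k\log n)$ independent colorings, derandomized via $k$-perfect hashing) catches every $k$-cycle w.h.p. Now invoke Theorem~\ref{thm:pagh2014input} with $x=\Theta(\sqrt{gc})$ and a union bound over the $c^2$ color-pairs; the bounded max-degree $g$ in $G'$ makes the second-moment estimate tight enough that $|E_{i,j}|=O(m\sqrt{g}/c^{1.5})$ holds simultaneously for all pairs. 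Consequently each of the $c^k$ extracted $k$-partite subproblems has $O(n\log n/c)$ vertices per color and $O(m\sqrt g/c^{1.5})$ total edges, and is solved by a recursive call of cost $T(n\log n/c,\,m\sqrt g/c^{1.5})$; summing gives the third term. The middle $\tilde O(nm/c)$ term accounts for the deterministic per-coloring bookkeeping --- routing each edge of $G'$ into the color-pair buckets it participates in and invoking the $c^{k-2}$-many recursive instances that share a given pair --- whose cost aggregates to $\tilde O(nm/c)$ across the $c^k$ subproblems.

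The key technical obstacle is the concentration argument of Step~2: Theorem~\ref{thm:pagh2014input}'s Chebyshev-type bound only yields $O(m/c^2 \cdot x)$ edges per pair with failure probability $1/x^2$, so we need the degree cap $g$ from Step~1 to shrink the relevant variance and to tune $x=\Theta(\sqrt{gc})$ so that the union bound over the $c^2$ pairs remains $o(1)$ while producing the target $\sqrt{g}/c^{1.5}$ factor in the recursive edge count. Independent repetitions then drive the overall failure probability below $1/\mathrm{poly}(n)$, giving the stated bound up to the $\log n$ factors shown.
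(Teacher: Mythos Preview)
Your overall two-phase structure (peel heavy vertices, then $c$-color and recurse) matches the paper, but two of the three terms in the bound are not correctly accounted for, and the key mechanism that makes the recursion valid is missing.

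First, your heavy-vertex threshold is off. If you call a vertex heavy when its degree exceeds $g$, handshaking gives $\leq 2m/g$ heavy vertices, and processing each in $\tilde O(m)$ time costs $\tilde O(m^2/g)$, not $\tilde O(nm/g)$; restricting to $m=O(n)$ is not justified by the lemma statement. The paper instead declares a vertex heavy when its degree exceeds $gm/n$, which yields $\leq 2n/g$ heavy vertices and hence the $\tilde O(nm/g)$ term directly.

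Second, your union-bound route through Theorem~\ref{thm:pagh2014input} does not close. With $x=\Theta(\sqrt{gc})$ the per-pair failure probability is $\Theta(1/(gc))$, so the union over the $c^2$ color pairs is $\Theta(c/g)$, which is not $o(1)$ across the full range $g,c\in[1,n]$; the degree cap does not rescue this because the stated bound is agnostic to max degree. The paper never attempts a union bound. Instead it uses the tail bound only to control the \emph{expected number} of overloaded pairs (at most $c/g$), and then handles each overloaded pair by running single-source $k$-cycle detection from every node in its two color classes (each already of size $\le n\log n/c$); this costs $(c/g)\cdot (n\log n/c)\cdot O(m)=O(nm\log n/g)$ and is absorbed into the first term. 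The same ``process the bad parts node-by-node'' idea explains the middle term: the $nm\log n/c$ cost is for running single-source detection from every node lying in an overloaded color class, not for edge-bucketing bookkeeping. Only after these removals are all remaining color classes and edge-pairs guaranteed to meet the recursive size bounds, so the $c^k$ calls to $T(n\log n/c,\, m\sqrt{g}/c^{1.5})$ are valid unconditionally.
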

\begin{proof}
	First we take the graph $G$ and turn it into a $k$-circle-layered graph using Lemma \ref{lem:colorcode}. 
	Then we check for all high degree nodes with degree greater than $gm/n$ to find what the minimum $k$-cycle through them is. There are $n/g$ such nodes at most and we can use Dijstra to find the minimum $k$-cycle in time $nm/g$. Remove all the high degree nodes from the graph and call this new graph $G'$.
	
	We randomly color the nodes of the altered graph $G'$ where $|V'|=n$ and $|E'|=m$ with $c$ colors. Let $V_i$ be the set of vertices colored color $i$. Let $E_{i,j}$ be the number of edges between vertices colored $i$ and $j$. 
	
	There are $O(n\lg(n)/c)$ nodes in sets $V_i > n\lg(n)/c$ in expectation. In $O(nm\lg(n)/c)$ time we can remove all nodes in overloaded sets.  
	
	By Theorem \ref{thm:pagh2014input} we have that the number of sets with $E_{i,j}> m \sqrt{g}/c^{1.5}$ in expectation is $1/(cg)$ removing every node in a set attached to these edges takes at most $nm\lg(n)/g$ time. 
	
	We can find all $k$-cycles in the remaining graph by trying all $n^k$ choices of colorings of the $k$-cycle. Each subproblem has $kn\lg(n)/c$ nodes and $km\sqrt{g}/c^{1.5}$ edges. 

	The minimum k-cycle found in any of these steps can be returned. 
\end{proof}

Now we can state what our improved bound. It drops quickly away from the ideal $mn$ bound, but we beat the ``staircase'' density depicted in red in Figure \ref{fig:lbimg2}. This lower bound rules out the possibility that the true answer would be the staircase difficulty. 

\begin{lemma}
	Assuming the $k$-clique where $k>5$ and odd conjecture we have that for $c=n^\delta$ for all $\delta>0$ :
	$$T(n^{1+o(1)}/c, mn^{o(1)}/ c^{1.5})) = \Omega(mn^{1-o(1)}c^{-k}) .$$
	This bound beats the bound from $k+1$-clique when $\delta<4/((k-1)(k^2-k+4))$.
\end{lemma}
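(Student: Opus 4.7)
The plan is to combine the self-reduction from Lemma~\ref{lem:selfRedDens} with the starting hardness coming from Corollary~\ref{cor:cliqueToCycleODD}. Since $k>5$ is odd, write $k=2L+1$ with $L=(k-1)/2$; Corollary~\ref{cor:cliqueToCycleODD} then tells us, under the Min Weight $k$-Clique Hypothesis, that directed min-weight $k$-cycle requires $T(n,m)=\Omega(mn^{1-o(1)})$ at the starting density $m = \Theta(n^{1+2/(k-1)})$. I would take this as the ``seed'' hard instance and let the self-reduction carry the hardness down to a sparser regime.

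Concretely, I would instantiate Lemma~\ref{lem:selfRedDens} with $c=n^{\delta}$ and pick $g$ to be just barely super-polylogarithmic in $n$ (for instance $g=2^{\sqrt{\log n}}$), so that simultaneously $\sqrt g = n^{o(1)}$ and $\log n /g = n^{-o(1)}$ can dominate any fixed $\log n/g \le n^{-\varepsilon}$. The self-reduction then reads
\[ T(n,m) \le O\!\left(\tfrac{nm\log n}{g}\right) + O\!\left(\tfrac{nm\log n}{c}\right) + c^{k}\,T\!\left(\tfrac{n\log n}{c},\, \tfrac{m\sqrt g}{c^{1.5}}\right). \]
Because $T(n,m) = \Omega(mn^{1-o(1)})$, both error terms are asymptotically smaller than $T(n,m)$: the $c$-error is $O(mn/n^{\delta-o(1)}) = o(mn^{1-o(1)})$ since $\delta>0$, and the $g$-error is $o(mn^{1-o(1)})$ by our choice of $g$. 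Rearranging, absorbing $\log n$ into $n^{o(1)}$ on the node side and $\sqrt g$ into $n^{o(1)}$ on the edge side, and dividing by $c^k$ yields exactly
\[ T\!\left(n^{1+o(1)}/c,\, mn^{o(1)}/c^{1.5}\right) \;=\; \Omega\!\left(mn^{1-o(1)}\, c^{-k}\right), \]
which is the claimed inequality.

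For the threshold condition, I would substitute the starting density $m = n^{(k+1)/(k-1)}$ and $c=n^\delta$ to express the new lower bound as an exponent of $n$: the self-reduction gives $n^{2k/(k-1)-\delta k - o(1)}$ at new sizes $n'=n^{1-\delta+o(1)}$ and $m'= n^{(k+1)/(k-1)-1.5\delta+o(1)}$. The competing direct bound starts instead from the $(k{+}1)$-Clique Hypothesis, passes through Lemma~\ref{lemma:evenodd} (to match parity) and Corollary~\ref{cor:cliqueToCycleODD} (or~\ref{cor:cliqueToCycle}), and then is pushed to the same $(n',m')$ via the same self-reduction; computing its exponent of $n$ and equating it with $2k/(k-1)-\delta k$ produces a linear inequality in $\delta$ whose solution is precisely $\delta < 4/((k-1)(k^2-k+4))$.

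The main obstacle is neither the self-reduction step nor the rearrangement, which are mechanical, but the bookkeeping of $n^{o(1)}$ and polylog factors in tandem with choosing $g$: one needs $g$ simultaneously small enough that $\sqrt g = n^{o(1)}$ (so the new density aligns with the statement) and large enough that $nm\log n/g$ does not eat the $mn^{1-o(1)}$ bound. The algebraic comparison with the $(k{+}1)$-clique alternative is likewise delicate because even a mild off-by-one in which parity of clique one uses shifts the threshold constant, so care is required to identify exactly the bound that the self-reduction improves upon.
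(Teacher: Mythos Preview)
Your derivation of the main inequality is essentially the paper's argument: start from the $mn^{1-o(1)}$ hardness of $k$-cycle at density $m=n^{(k+1)/(k-1)}$ (Corollary~\ref{cor:cliqueToCycleODD}), plug into the self-reduction of Lemma~\ref{lem:selfRedDens} with $c=n^\delta$, choose $g$ subpolynomial but large enough that the $nm\log n/g$ overhead is $o(mn^{1-o(1)})$, and rearrange. Your explicit choice $g=2^{\sqrt{\log n}}$ works; the paper instead takes $g=f(n)\log n$ where $f(n)=n^{o(1)}$ is the function implicitly defined by the assumed lower bound $T(n,m)=\omega(mn/f(n))$. These are interchangeable and the first part is fine.

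Your handling of the threshold is where you diverge from the paper, and it is confused. The paper does \emph{not} push a competing bound through the self-reduction and then compare exponents. It simply observes that at the new parameters $(N_c,M_c)\approx (n/c,\,m/c^{1.5})$ the already-established ``staircase'' lower bound is $N_c^{\,2+2/(k+1)}$ --- this is the flat step value valid for all densities between $N_c^{(k+3)/(k+1)}$ and $N_c^{(k+1)/(k-1)}$, coming from the next odd clique size $k{+}2$ (the statement's ``$k{+}1$-clique'' notwithstanding). One then solves the single inequality $mn\,c^{-k} > N_c^{\,2+2/(k+1)}$ for $\delta$. Your plan --- start from a $(k{+}1)$-Clique bound, use Lemma~\ref{lemma:evenodd} to ``match parity,'' and then self-reduce that bound to $(n',m')$ --- has two concrete problems: Lemma~\ref{lemma:evenodd} transfers hardness \emph{from} $k$-cycle \emph{to} $(k{+}1)$-cycle, not the other way, so it cannot be used to import a $(k{+}1)$-clique/cycle bound back to $k$-cycle; and there is no reason to self-reduce the competing bound at all, since the staircase value is already stated at the target density. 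Replace the threshold step by the direct comparison against $N_c^{\,2+2/(k+1)}$ and the linear-in-$\delta$ inequality falls out immediately.
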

\begin{proof}
If the $k$-clique conjecture is true then there exists some $f(n)= n^{o(1)}$ such that $k$-clique takes $\omega(nm/f(n))$. Set the $g$ from Lemma \ref{lem:selfRedDens} equal to $f(n)\lg(n)$.

Then $k$-clique takes $\omega(nm\lg(n)/g + nm\lg(n)/c)$ because $c$ is a polynomially large factor. Then re-stating the formula from Lemma \ref{lem:selfRedDens} we get:
$$T(n^{1+o(1)}/c, mn^{o(1)}/ c^{1.5})) = \Omega(mn^{1-o(1)}c^{-k}) .$$

Let $N_c = n^{1+o(1)}/c$ and $M_c = mn^{o(1)}/ c^{1.5})$.
When $k$ odd we show $k$-cycle is $mn$ hard for density $m = n^{(k+1)/(k-1)} = n^{1-2/(k-1)}$ and $k+2$-cycle is $m'n$ hard for density  $m = n^{(k+3)/(k+1)}$.

Note that if $c = 1$ we get that $M = N^{(k+1)/(k-1)}$. Further note that if $k>5$ and $c=n^{8/((k-1)(k-3))}$ then $M = N^{(k+3)/(k+1) +o(1)}$. So with $c \in [1,n^{8/((k-1)(k-3))}]$ spans the distance between the two densities we care about for $k>5$. 
 
Next recall that the lower bound we have on graphs between density $m = n^{(k+1)/(k-1)}$ and $m = n^{(k+3)/(k+1)}$ is $\Omega(n^{2 + 2/(k+1)-o(1)})$. For what values of $c$ do we beat this lower bound? Well the question can be reformulated as: for what values of $c$ is $N^{2 + 2/(k+1)-o(1)} = o(mn^{1-o(1)}c^{-k})$. After some algebraic manipulation we find that $c = \tilde{o}\left(n^{4/((k-1)(k^2-k+4))}\right)$. So when $c = n^{\delta}$ and $k>5$ we have an improved lower bound when $\delta <  4/((k-1)(k^2-k+4))$. 

\end{proof}

\end{document}